\title{Pinned Distance Sets Using Effective Dimension}
\author{D. M. Stull\\Department of Computer Science, Northwestern University\\
	Evanston, IL 60208, USA\\
	\texttt{donald.stull@northwestern.edu}}
\newtheorem{thm}{Theorem}
\newtheorem{obs}[thm]{Observation}
\newtheorem{lem}[thm]{Lemma}
\newtheorem{prop}[thm]{Proposition}
\newtheorem{cor}[thm]{Corollary}
\newtheorem*{T1}{Theorem~\ref{thm:maintheorem}}
\newtheorem*{T2}{Theorem~\ref{thm:projectionMainTheorem}}
\newtheorem*{T3}{Theorem~\ref{thm:DistanceLowerBoundComplexity}}
\newtheorem*{T4}{Theorem~\ref{thm:mainThmEffDim}}
\newtheorem*{L1}{Lemma~\ref{lem:reductionToEff}}
\theoremstyle{remark}
\newtheorem*{remark}{Remark}
\DeclareMathOperator{\dimH}{dim_H}
\newcommand{\R}{\mathbb{R}}
\newcommand{\N}{\mathbb{N}}
\newcommand{\Q}{\mathbb{Q}}
\newcommand{\ve}{\varepsilon}
\newcommand{\uhr}{{\upharpoonright}}
\begin{document}
	\maketitle
	
\begin{abstract}
In this paper, we use algorithmic tools, effective dimension and Kolmogorov complexity, to study the fractal dimension of distance sets. We show that, for any analytic set $E\subseteq\R^2$ of Hausdorff dimension strictly greater than one, the \textit{pinned distance set} of $E$, $\Delta_x E$, has Hausdorff dimension of at least $\frac{3}{4}$, for all points $x$ outside a set of Hausdorff dimension at most one. This improves the best known bounds when the dimension of $E$ is close to one.
\end{abstract}

\section{Introduction}
The \textit{distance set} of a set $E\subseteq\R^n$ is
\begin{center}
$\Delta E = \{\|x-y\| \mid x,y\in E\}$.
\end{center}
A famous conjecture of Erd\"os, the \textit{distinct distances problem}, states that for finite sets, the size of $\Delta E$ is nearly linear in terms of the size of $E$. Guth and Katz \cite{GutKat15}, in a breakthrough paper, settled the distinct distances problem in the plane. 

Falconer posed an analogous question for the case that $E$ is infinite, known as Falconer's \textit{distance set problem}. When $E$ is infinite, we measure its size using Hausdorff dimension. Falconer's original conjecture is that, for any Borel set $E\subseteq\R^n$ with $\dim_H(E) > n/2$,  $\Delta E$ has positive measure. Even in the planar case, the distance problem is still open. However there has been substantial progress in the past few years. The current best known bound is due to Guth, Iosevich, Ou and Wang \cite{GutIosOuWang20}, who proved that if a planar set has $\dim_H(E) > 5/4$, then $\mu(\Delta E) > 0$. In fact, they showed under this hypothesis, that the pinned distance set $\Delta_x$ has positive measure, for many $x\in\R^2$ (see below).

In addition to Falconer's original question, exciting progress has been made on several variants of the problem \cite{Orponen17, Shmerkin19, KelShm19, ShmWang21}.  One interesting direction is to prove lower bounds on the Hausdorff dimension of $\Delta E$. Another is to understand the Hausdorff dimension of \textit{pinned distance sets}. If $E\subseteq\R^n$ and $x\in\R^n$, then the pinned distance set is
\begin{center}
$\Delta_x E = \{\|x-y\|\mid y\in E\}$.
\end{center}
The conjecture for pinned distance sets is that, if $\dim_H(E) > n/2$, then for ``most" points $x$, $\dim_H(\Delta_x E) = 1$. Recently, there has been exciting progress on the dimensional variants of the distance set problem. For planar pinned distance sets, Liu \cite{Liu20} showed that, if $\dim_H(E)= s\in (1, 5/4)$, then for many $x$, $\dim_H(\Delta_x E) \geq \frac{4}{3}s - \frac{2}{3}$. Shmerkin \cite{Shmerkin20} improved this bound for small values of $s$. Specifically, he showed that, for any planar Borel set with $\dim_H(E) = s \in (1, 1.04)$, $\dim_H(\Delta_x E) \geq 2/3 + 1/42 \approx 0.6904$ for many $x$. Furthermore, for the non-pinned version of the problem, Shmerkin proved that, if $\dim_H(E) \in (1, 1.06)$, then $\dim_H(\Delta E) \geq 2/3 + 2/57 \approx 0.7017$.

In this paper, we improve the bound on the dimension of pinned distance sets for small values of $\dim_H(E)$.
\begin{thm}\label{thm:maintheorem}
Let $E \subseteq \R^2$ be an analytic set with Hausdorff dimension strictly greater than one. Then, for all $x\in\R^2$ outside a set of Hausdorff dimension at most $1$,
\begin{center}
$\dim_H(\Delta_x E) \geq \frac{s}{4} + \frac{1}{2}$,
\end{center}
where $s = \dim_H(E)$.
\end{thm}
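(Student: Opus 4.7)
My proposed proof follows the algorithmic framework signalled by the paper's preamble: reduce the Hausdorff statement to a Kolmogorov-complexity assertion via the point-to-set principle, then prove the required complexity lower bound for an individual distance. By the reduction framework (Lemma~\ref{lem:reductionToEff}), the conclusion $\dimH(\Delta_x E) \geq s/4 + 1/2$ for $x$ outside an exceptional set of Hausdorff dimension at most $1$ will follow once one shows the following effective analogue: for every oracle $A$ relative to which $\dimH(E)$ is attained and every $x \in \R^2$ with $A$-effective dimension strictly greater than $1$, every $y \in \R^2$ with $A$-effective dimension at least $s$ produces a distance $|x-y|$ whose $A$-effective dimension is at least $s/4 + 1/2$. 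The exceptional set of ``bad'' $x$ then becomes exactly the set of $x$ of $A$-effective dimension $\leq 1$, whose Hausdorff dimension is at most $1$ by the converse direction of the point-to-set principle.

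The core task is to lower-bound $K_r(|x-y|)$ at an arbitrary precision $r$. I would exploit the geometric identity $|x-y| = e_{x,y} \cdot (y-x)$, where $e_{x,y} = (y-x)/|y-x|$ is the unit direction from $x$ to $y$, so that the distance is realized as a projection of $y-x$ onto $e_{x,y}$. From $x$, $|x-y|$, and $e_{x,y}$ at precision $r$, one can reconstruct $y$ up to precision $\approx r$, giving, up to logarithmic error,
\[
K_r(y) \lesssim K_r(|x-y|) + K_r(e_{x,y} \mid x, |x-y|) + K_r(x).
\]
Rearranging reduces the required lower bound on $K_r(|x-y|)$ to an upper bound on the conditional complexity of the direction, which is precisely what the projection theorem (Theorem~\ref{thm:projectionMainTheorem}) is designed to control.

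The step I expect to be the main obstacle is the two-regime analysis needed to invoke the projection theorem cleanly. At a carefully chosen intermediate scale $t \in (0,r]$ one must split: in the \emph{regular} regime, where $K_t(y) \approx s\,t$, the projection theorem applies directly and yields a strong complexity lower bound for $e_{x,y} \cdot y$; in the \emph{singular} regime, where $y$ carries a complexity deficit at scale $t$, one must instead argue scale by scale, bounding the contributions below and above $t$ separately, and use the geometric coupling of $y - x$ with $e_{x,y}$ to transmit the missing information through the distance. The two regimes must then be balanced by optimizing over $t$, and the worst case of this optimization produces the exponent $s/4 + 1/2$. This is precisely the content of the main effective-dimension theorem (Theorem~\ref{thm:mainThmEffDim}) underlying Theorem~\ref{thm:DistanceLowerBoundComplexity}, from which Theorem~\ref{thm:maintheorem} is then derived via Lemma~\ref{lem:reductionToEff}.
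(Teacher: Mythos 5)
Your high-level skeleton (point-to-set principle plus an effective counterpart) does match the paper, but the two steps you treat as routine are exactly where the substance lies, and as stated they do not go through. First, the reduction: you assert that the effective analogue holds for \emph{every} $y$ with $\dim^A(y)\geq s$, and that the exceptional set is exactly $\{x:\dim^A(x)\leq 1\}$. Neither is true. Theorem~\ref{thm:mainThmEffDim} requires, besides $\dim^A(x)>1$ and $\dim^A(y)\geq s$, genuine independence hypotheses on the pair: $K^{x,A}_r(e)=r-O(\log r)$ for the direction $e=\frac{y-x}{\|y-x\|}$, $K^{x,A}_r(y)\geq K^A_r(y)-O(\log r)$, and $K^A_r(e\mid y)=r-o(r)$. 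These fail for many pairs (e.g.\ $y$ computable from $x$, in which case $\|x-y\|$ has effective dimension $0$ relative to $x$ no matter how large $\dim^A(y)$ is). Producing, for each $x$ outside a set of dimension at most $1$, \emph{some} $y\in E$ satisfying these conditions is the content of Lemma~\ref{lem:reductionToEff}, and it is not a formality: it needs Orponen's radial projection theorem (absolute continuity of the pushforward of $\mathcal{H}^s\vert_E$ under $\pi_x$ off an exceptional set of dimension $\leq 2-s$) combined with measure-zero bounds on the sets of non-independent directions and points (Lemmas~\ref{lem:c5fromc14}, \ref{lem:existenceGoodPoints}, \ref{lem:independenceResults}). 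This is where the hypothesis $\dim_H(E)>1$ is actually used, and the final exceptional set is the union of $\{x:\dim^A(x)\leq 1\}$ with Orponen's exceptional sets. Your proposal omits this mechanism entirely.

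Second, the complexity lower bound itself is wired up incorrectly. Rearranging your displayed inequality leaves you needing an upper bound on the conditional complexity of the \emph{direction} given the distance, which is not what Theorem~\ref{thm:projectionMainTheorem} provides: that theorem bounds $K_r(x\mid p_e x, e)$, the cost of recovering a point from its projection, not a direction from a projection; and the unrelativized version also subtracts $K_r(x)$, which can be as large as $2r$ and makes the bound vacuous. In the paper the projection theorem enters through a different geometric route: to bound $K^x_{r,t}(y\mid\|x-y\|,y)$ one shows that any $z$ with $\|x-z\|=\|x-y\|$ and $\|y-z\|\approx 2^{-s}$ must have large complexity, because $x$ is recoverable from $y$, $z$, and roughly $K_{r-s}(x\mid p_{e'}x,e')$ additional bits, where $e'=\frac{y-z}{\|y-z\|}$ is the \emph{chord} direction (Lemma~\ref{lem:lowerBoundOtherPointDistance}); the projection theorem is applied to $x$ and $e'$, whose randomness at scales up to $s$ is inherited from (C2) since $e'$ is within $2^{-s}/\|x-y\|$ of $e^\bot$. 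Below the cutoff scale $t$ chosen so that $K^A_t(y)=\frac{dr}{2}$, one instead uses the naive bound of Theorem~\ref{thm:DistanceLowerBoundComplexity}, and combining the two regimes yields $\frac{d}{4}+\frac{1}{2}$. Your ``regular/singular'' split gestures at this, but without the chord-direction reduction and the explicit independence hypotheses the argument cannot be completed as you describe it.
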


To prove this theorem, we use \textit{effective dimension}. Effective, i.e., algorithmic, dimensions were introduced \cite{Lutz03a, AthHitLutMay07} to study the amount of randomness of points in Euclidean space. The effective dimension, $\dim(x)$, of a Euclidean point $x$ is a real value which measures the asymptotic density of information of  $x$. Recent work has shown that algorithmic dimension is not only useful in effective settings, but, via the point-to-set principle, can be used to solve problems in geometric measure theory \cite{Lutz17, LutStu20, LutStu18, Stull18, Stull22a}. 

Before giving the full details, we give a brief description of the main ideas involved in the proof of Theorem \ref{thm:maintheorem}. A main component of the proof is bounding the complexity of \textit{orthogonal directions}, which may be of independent interest.

For $x\in \R^2$ and $e\in\mathcal{S}^1$, the projection of $x$ onto $e$ is $p_e x = e\cdot x$. The behavior of fractal dimensions under orthogonal projections is one of the most studied topics in geometric measure theory. Recently, orthogonal projections have been studied using effective methods \cite{LutStu18, Stull22a}. In the language of effective dimension, if $e$ is random relative to $x$, then we have tight bounds on the complexity of $p_e x$. However, for general directions, optimal bounds are unknown. In Section \ref{sec:projections}, we improve known bounds on the complexity of $p_e x$ for directions with high initial complexity (but which are not necessarily random everywhere). 

The proof of our projection result relies on analyzing the small scale structure of the complexity function $K_r(x)$. Similar ideas have been used in the classical setting  \cite{Shmerkin20, KelShm19}.  We give a more thorough overview of the proof of these results in Section \ref{sec:projections}.

In the second part of the proof of Theorem \ref{thm:maintheorem}, we prove an analogous bound for the \textit{effective} dimension of distances between \textit{points}. This proof combines the projection theorem and a naive lower bound of the complexity of distances in a bootstrapping argument. We expand on the intuition behind this proof in Section \ref{sec:effdim}.

The final part of the proof of Theorem \ref{thm:maintheorem} is a reduction to its effective counterpart, Theorem \ref{thm:mainThmEffDim}. We show that, if $E\subseteq\R^2$ is analytic and $\dim_H(E) > 1$, then for any $x$ outside a set of Hausdorff dimension at most $1$, there is a $y\in E$ so that $x,y$ satisfy the conditions necessary for the effective theorem to hold. 

This reduction relies on a result of Orponen on radial projections \cite{Orponen19DimSmooth}. Orponen's theorem has been used in much of the recent work on distance sets. We note that this result is the primary reason we require $\dim_H(E) > 1$. We combine Orponen's theorem with recent work on independence results in algorithmic information theory \cite{Stull22a} to complete the proof of the reduction.

\section{Preliminaries}\label{sec:prelim}

\subsection{Kolmogorov complexity and effective dimension}
The \emph{conditional Kolmogorov complexity} of binary string $\sigma\in\{0,1\}^*$ given a binary string $\tau\in\{0,1\}^*$ is the length of the shortest program $\pi$ that will output $\sigma$ given $\tau$ as input. Formally, the conditional Kolmogorov complexity of $\sigma$ given $\tau$ is
\[K(\sigma\mid\tau)=\min_{\pi\in\{0,1\}^*}\left\{\ell(\pi):U(\pi,\tau)=\sigma\right\}\,,\]
where $U$ is a fixed universal prefix-free Turing machine and $\ell(\pi)$ is the length of $\pi$. Any $\pi$ that achieves this minimum is said to \emph{testify} to, or be a \emph{witness} to, the value $K(\sigma\mid\tau)$. The \emph{Kolmogorov complexity} of a binary string $\sigma$ is $K(\sigma)=K(\sigma\mid\lambda)$, where $\lambda$ is the empty string.	We can easily extend these definitions to other finite data objects, e.g., vectors in $\Q^n$, via standard binary encodings. See~\cite{LiVit08} for details.

The \emph{Kolmogorov complexity} of a point $x\in\R^m$ at \emph{precision} $r\in\N$ is the length of the shortest program $\pi$ that outputs a \emph{precision-$r$} rational estimate for $x$. Formally, this is 
\[K_r(x)=\min\left\{K(p)\,:\,p\in B_{2^{-r}}(x)\cap\Q^m\right\}\,,\]
where $B_{\ve}(x)$ denotes the open ball of radius $\ve$ centered on $x$. The \emph{conditional Kolmogorov complexity} of $x$ at precision $r$ given $y\in\R^n$ at precision $s\in\R^n$ is
\[K_{r,s}(x\mid y)=\max\big\{\min\{K_r(p\mid q)\,:\,p\in B_{2^{-r}}(x)\cap\Q^m\}\,:\,q\in B_{2^{-s}}(y)\cap\Q^n\big\}\,.\]
When the precisions $r$ and $s$ are equal, we abbreviate $K_{r,r}(x\mid y)$ by $K_r(x\mid y)$. As a matter of notational convenience, if we are given a non-integral positive real as a precision parameter, we will always round up to the next integer. Thus $K_{r}(x)$ denotes $K_{\lceil r\rceil}(x)$ whenever $r\in(0,\infty)$.

A simple, but useful property of Kolmogorov complexity is the following.
\begin{lem}\label{lem:MD:3.9}
			\textup{(Case and J. Lutz~\cite{CasLut15})}
			There is a constant $c\in\N$ such that for all $n,r,s\in\N$ and $x\in\R^n$, 
			\[K_r(x)\leq K_{r+s}(x)\leq K_r(x)+K(r)+ns+a_s+c\,,\]
			where $a_s=K(s)+2\log(\lceil\frac12\log n\rceil+s+3)+(\lceil\frac12\log n\rceil+3)n+K(n)+2\log n$.
\end{lem}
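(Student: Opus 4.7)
The plan is to prove the two inequalities separately. The left inequality is immediate: any rational $q$ with $\|q-x\|<2^{-(r+s)}$ also satisfies $\|q-x\|<2^{-r}$, so every precision-$(r+s)$ estimate is also a precision-$r$ estimate, giving $K_r(x)\leq K_{r+s}(x)$.

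For the right inequality, I would construct a Turing machine $M$ that takes as input the concatenation of (i) a shortest program $\pi_r$ witnessing $K_r(x)$, (ii) self-delimiting codes of $n$, $r$, and $s$, and (iii) an ``offset hint'' $k\in\Z^n$, and outputs a precision-$(r+s)$ rational approximation of $x$. Setting $t=\lceil\tfrac{1}{2}\log n\rceil$, the lattice $L=2^{-(r+s+t)}\Z^n$ has covering radius at most $\tfrac{1}{2}\cdot 2^{-(r+s)}$, so some $q^\ast\in L$ satisfies $\|q^\ast-x\|<2^{-(r+s)}$. On input, $M$ runs $\pi_r$ to recover $p\in B_{2^{-r}}(x)\cap\Q^n$, rounds $p$ coordinate-wise to the nearest point $p'\in L$, and outputs $q=p'+2^{-(r+s+t)}k$.

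Taking $k=k^\ast:=2^{r+s+t}(q^\ast-p')$ as the witnessing hint gives
\[\|k^\ast\|\leq 2^{r+s+t}\bigl(\|q^\ast-x\|+\|x-p\|+\|p-p'\|\bigr)\leq 2^t+2^{s+t}+\tfrac{\sqrt{n}}{2},\]
which by the choice of $t$ is at most $2^{s+\lceil\frac{1}{2}\log n\rceil+2}$; hence each coordinate of $k^\ast$ fits into $s+\lceil\tfrac{1}{2}\log n\rceil+3$ bits (including the sign), and the hint occupies a total of $ns+(\lceil\tfrac{1}{2}\log n\rceil+3)n$ bits. Adding $K_r(x)$ for $\pi_r$, $K(r)$, $K(s)$, $K(n)$ for the integer parameters, a $2\log(\lceil\tfrac{1}{2}\log n\rceil+s+3)$ term to delimit the per-coordinate length of $k$, a $2\log n$ overhead for the coordinate-block separator, and a constant $c$ for simulating $M$ on the universal machine $U$ recovers the claimed bound on $K_{r+s}(x)$.

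The only real obstacle is bookkeeping the self-delimiting encodings so that the constants in $a_s$ emerge exactly as in the statement; once a standard prefix-free coding of natural numbers is fixed, the construction and the bit count are routine.
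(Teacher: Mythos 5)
Your proof is correct: the left inequality follows from $B_{2^{-(r+s)}}(x)\subseteq B_{2^{-r}}(x)$, and your lattice-rounding construction with the offset hint $k^\ast$ is exactly the standard argument from Case and Lutz, with the bit count for the hint and the self-delimiting parameter codes matching the terms of $a_s$. Note that the paper itself gives no proof of this lemma (it is cited from \cite{CasLut15}), so there is nothing internal to compare against; your reconstruction is the intended one.
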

We may \emph{relativize} the definitions in this section to an arbitrary oracle set $A \subseteq \N$. We will frequently consider the complexity of a point $x \in \R^n$ \emph{relative to a point} $y \in \R^m$, i.e., relative to an oracle set $A_y$ that encodes the binary expansion of $y$ is a standard way. We then write $K^y_r(x)$ for $K^{A_y}_r(x)$. Note that, for every $x\in\R^n$ and $y\in\R^m$,
\begin{equation}\label{eq:OraclesDontIncrease}
K_{s,r}(x\mid y)\geq K^y_s(x) - O(\log r) - O(\log s),
\end{equation}
for every $s, r\in\N$

One of the most useful properties of Kolmogorov complexity is that it obeys the \emph{symmetry of information}. That is, for every $\sigma, \tau \in \sigma\in\{0,1\}^*$,
\[K(\sigma, \tau) = K(\sigma) + K(\tau \mid \sigma, K(\sigma)) + O(1)\,.\]

We will need the following technical lemmas which show that versions of the symmetry of information hold for Kolmogorov complexity in $\R^n$. The first Lemma~\ref{lem:unichain} was proved in our previous work~\cite{LutStu20}.
\begin{lem}[\cite{LutStu20}]\label{lem:unichain}
	For every $m,n\in\N$, $x\in\R^m$, $y\in\R^n$, and $r,s\in\N$ with $r\geq s$,
	\begin{enumerate}
		\item[\textup{(i)}]$\displaystyle |K_r(x\mid y)+K_r(y)-K_r(x,y)\big|\leq O(\log r)+O(\log\log \|y\|)\,.$
		\item[\textup{(ii)}]$\displaystyle |K_{r,s}(x\mid x)+K_s(x)-K_r(x)|\leq O(\log r)+O(\log\log\|x\|)\,.$
	\end{enumerate}
\end{lem}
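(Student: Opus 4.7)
My plan is to derive both parts from the classical string-level symmetry of information $K(\sigma,\tau) = K(\sigma) + K(\tau\mid\sigma,K(\sigma)) + O(1)$ by translating carefully between binary strings and the rational approximations used in the definitions of $K_r$ and $K_{r,s}$. I would prove each inequality of each part separately, using throughout that any two precision-$r$ rational approximations of a point $y$ lie within $2^{1-r}$ of one another, so one can be computed from the other by a short program of length $O(\log r) + O(\log\log\|y\|)$ that encodes the offset together with the bit-length of the integer part of $y$.

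For part (i), the upper bound is direct: fix $q \in B_{2^{-r}}(y)\cap\Q^n$ witnessing $K_r(y)$; the max-over-$q$ clause in the definition of $K_r(x\mid y)$ furnishes a program of length at most $K_r(x\mid y)$ that, given $q$, outputs some $p \in B_{2^{-r}}(x)\cap\Q^m$, and self-delimiting concatenation bounds $K_r(x,y)$. For the reverse inequality, fix an optimal witness $(p,q)$ for $K_r(x,y)$. Then $q \in B_{2^{-r}}(y)\cap\Q^n$ gives $K_r(y) \leq K(q) + O(\log r)$, and classical symmetry applied to the string pair $(p,q)$ yields $K(p\mid q) \leq K(p,q) - K(q) + O(\log r)$. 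To promote this to a bound on $K_r(x\mid y)$ I must handle the max over all $q' \in B_{2^{-r}}(y)\cap\Q^n$ by exhibiting the short conversion program from $q'$ to $q$ described above; substituting shows $K_r(x\mid y) \leq K(p\mid q) + O(\log r) + O(\log\log\|y\|)$, and assembling the three inequalities yields the claim. Part (ii) is formally the same argument with $(x,x)$ in place of $(x,y)$ but taken at two different precisions: the upper bound concatenates a witness for $K_s(x)$ with the program supplied by the definition of $K_{r,s}(x\mid x)$, while the lower bound takes $p_r$ witnessing $K_r(x)$, lets $p_s$ be its truncation to precision $s$ (which lies within $2^{1-s}$ of $x$, a constant-bit loss absorbed via Lemma~\ref{lem:MD:3.9}), and applies classical symmetry to the strings $p_s, p_r$ together with the same conversion trick between arbitrary precision-$s$ approximations.

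The main obstacle throughout is the asymmetric max-over-witnesses clause in the definition of $K_{r,s}(x\mid y)$: string-level symmetry of information only controls the conditional complexity given one specific rational approximation, whereas the definition demands a uniform bound over every good approximation of $y$. Overcoming this is precisely where the $O(\log\log\|y\|)$ term arises, since a universal program that converts one precision-$r$ approximation of $y$ into another must first read the bit-length of the integer part of $y$ in a self-delimiting way. All remaining overhead, including the $O(\log K(p,q))$ factor coming from the string-level statement and the rounding of noninteger precisions, is routine bookkeeping that fits inside the $O(\log r)$ term.
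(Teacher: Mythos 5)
The paper does not prove this lemma; it is imported verbatim from \cite{LutStu20}, so the only meaningful comparison is with the proof there, which (like yours) reduces everything to the string-level symmetry of information. Your overall architecture is therefore the right one, and your first inequality in part (i) (concatenating a witness for $K_r(y)$ with the program supplied by the max--min clause) is fine. The problem is the device you lean on everywhere else: the claim that any two precision-$r$ rational approximations $q,q'$ of $y$ can be computed from one another by a program of length $O(\log r)+O(\log\log\|y\|)$ ``that encodes the offset.'' This is false. The offset $q-q'$ is small in absolute value but is an arbitrary rational, and its description length is unbounded in terms of $r$ and $\|y\|$: take $y=0$, $q'=0$, and $q=N\cdot 2^{-r-\ell}$ for a Kolmogorov-random $\ell$-bit integer $N$; then $q\in B_{2^{-r}}(y)$ but $K(q\mid q')\geq \ell-O(1)$ with $\ell$ arbitrary. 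Since the witness $\pi$ to $K(p\mid q)$ is only guaranteed to produce $p$ on the exact input $q$, your reverse inequality in part (i) --- which must bound $\min_{p''}K(p''\mid q')$ uniformly over \emph{every} $q'\in B_{2^{-r}}(y)$ using a witness pair $(p,q)$ for $K_r(x,y)$ that you do not get to choose --- collapses at precisely the step the max--min definition was designed to make hard.

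The repair, which is how \cite{LutStu20} proceeds, is to route all conversions through the canonical dyadic truncation $y{\uhr}(r-1)$ rather than between two arbitrary rational witnesses. That conversion genuinely is cheap: if $\|q'-y\|<2^{-r}$ then each coordinate of $\lfloor 2^{r-1}y\rfloor$ is one of at most two integers determined by $q'$, so $y{\uhr}(r-1)$ is computable from $q'$ given $r$, $n$ advice bits, and the bit-length of the integer part of $y$ --- a total of $O(\log r)+O(\log\log\|y\|)$, which is where the doubly-logarithmic term really comes from. One then shows that $K_r(y)$, $K_r(x,y)$ and $K_r(x\mid y)$ each agree with the corresponding string quantities $K(y{\uhr}r)$, $K(x{\uhr}r,y{\uhr}r)$, $K(x{\uhr}r\mid y{\uhr}r)$ up to this error, and applies string symmetry of information to the truncations. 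Notably, your part (ii) survives almost unchanged, because there the target of the conversion is $p_s$, a dyadic rational on the $2^{-s}$-grid within $O(2^{-s})$ of $x$, hence one of $O(1)$ candidates determined by any $q'\in B_{2^{-s}}(x)$; it is only the arbitrary-rational-to-arbitrary-rational conversion in part (i) that fails.
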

	
A consequence of Lemma \ref{lem:unichain}, is the following. 
\begin{lem}[\cite{LutStu20}]\label{lem:symmetry}
	Let $m,n\in\N$, $x\in\R^m$, $z\in\R^n$, $\ve > 0$ and $r\in\N$. If $K^x_r(z) \geq K_r(z) - O(\log r)$, then the following hold for all $s \leq r$.
	\begin{enumerate}
		\item[\textup{(i)}]$\displaystyle K^x_s(z) \geq K_s(z) - O(\log r)\,.$
		\item[\textup{(ii)}]$\displaystyle K_{s, r}(x \mid z) \geq K_s(x)- O(\log r)\,.$
	\end{enumerate}
\end{lem}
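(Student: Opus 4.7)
The plan is to derive both parts from the chain-rule and symmetry-of-information identities of Lemma \ref{lem:unichain}, applied in parallel in unrelativized and relativized-to-$x$ versions, together with (\ref{eq:OraclesDontIncrease}). Morally, the hypothesis $K^x_r(z) \geq K_r(z) - O(\log r)$ is a quantitative ``independence at precision $r$'' assertion, and the content of the lemma is that this independence can be transported down to any smaller precision $s$ on the $z$-side (part (i)) and across the symmetry of information to the $x$-side (part (ii)).

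For (i), I would apply Lemma \ref{lem:unichain}(ii) to $z$ twice --- once without an oracle and once relativized to $x$ --- obtaining
\[
K_r(z) - K_s(z) = K_{r,s}(z \mid z) \pm O(\log r), \qquad K^x_r(z) - K^x_s(z) = K^x_{r,s}(z \mid z) \pm O(\log r).
\]
Since relativization never increases complexity, $K^x_{r,s}(z \mid z) \leq K_{r,s}(z \mid z) + O(1)$, and subtracting the two identities rearranges to
\[
K_s(z) - K^x_s(z) \leq K_r(z) - K^x_r(z) + O(\log r).
\]
The hypothesis caps the right-hand side at $O(\log r)$, yielding (i).

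For (ii), I would first migrate the independence from $K^x_r(z)$ to $K_r(x \mid z)$. Using (\ref{eq:OraclesDontIncrease}), $K_r(z \mid x) \geq K^x_r(z) - O(\log r) \geq K_r(z) - O(\log r)$, and then Lemma \ref{lem:unichain}(i) applied in both orders,
\[
K_r(x \mid z) + K_r(z) \,=\, K_r(x, z) \,=\, K_r(z \mid x) + K_r(x) \,\pm\, O(\log r),
\]
gives $K_r(x \mid z) \geq K_r(x) - O(\log r)$. To step down from precision $r$ to precision $s$ on the $x$-side, I would concatenate witnesses: a witness for $K_{s,r}(x \mid z)$ produces a precision-$s$ approximation of $x$ from a precision-$r$ approximation of $z$, and a witness for $K_{r,s}(x \mid x)$ then refines that approximation up to precision $r$. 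This yields the chain-rule inequality
\[
K_r(x \mid z) \leq K_{s,r}(x \mid z) + K_{r,s}(x \mid x) + O(\log r),
\]
into which one substitutes $K_{r,s}(x \mid x) \leq K_r(x) - K_s(x) + O(\log r)$ from Lemma \ref{lem:unichain}(ii); combined with the lower bound $K_r(x \mid z) \geq K_r(x) - O(\log r)$, this delivers (ii).

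The main obstacle, and really the only nontrivial work, is error-term bookkeeping. One must verify that relativizing Lemma \ref{lem:unichain} to the oracle encoding $x$ preserves the asymptotics, which is standard because the constants in the $O(\log r) + O(\log\log\|z\|)$ terms depend only on the fixed universal machine; and one must confirm that the concatenation step in (ii) incurs only $O(\log r)$ additional bits to specify the program separator and the intermediate precision $s$. Once these routine checks are in hand, both conclusions fall out symbolically from Lemma \ref{lem:unichain}.
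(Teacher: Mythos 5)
The paper gives no proof of this lemma --- it is imported from \cite{LutStu20} with the one-line remark that it is ``a consequence of Lemma~\ref{lem:unichain}'' --- and your derivation is exactly the argument that remark points to: part (i) by comparing the relativized and unrelativized forms of Lemma~\ref{lem:unichain}(ii) and using that oracle access cannot increase complexity, part (ii) by transporting the hypothesis through (\ref{eq:OraclesDontIncrease}) and Lemma~\ref{lem:unichain}(i) to get $K_r(x\mid z)\geq K_r(x)-O(\log r)$ and then stepping down to precision $s$ via the standard conditional chain rule. Both parts are correct, and the bookkeeping issues you flag (relativizing Lemma~\ref{lem:unichain} and the $O(\log r)$ cost of concatenating witnesses) are indeed the only routine checks remaining.
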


\bigskip

J. Lutz~\cite{Lutz03a} initiated the study of effective dimensions (also known as \emph{algorithmic dimensions}) by effectivizing Hausdorff dimension using betting strategies called~\emph{gales}, which generalize martingales.  Mayordomo showed that effective Hausdorff dimension can be characterized using Kolmogorov complexity~\cite{Mayordomo02}. In this paper, we use this characterization as a definition.

The \emph{effective Hausdorff dimension}  of a point $x\in\R^n$ is
\[\dim(x)=\liminf_{r\to\infty}\frac{K_r(x)}{r}\,.\]

\subsection{The Point-to-Set Principle}\label{subsec:ptsp}
The following \emph{point-to-set principles} show that The Hausdorff dimension of a set  can be characterized by the effective dimension of its individual points. The first point-to-set principle, for a restricted class of sets, was implicitly proven by J. Lutz~\cite{Lutz03a} and Hitchcock~\cite{Hitchcock05}. 

A set $E \subseteq \R^n$ is \textit{effectively compact} if the set of open rational covers of $E$ is computably enumerable. We will use the fact that every compact set is effectively compact relative to some oracle. 
\begin{thm}[\cite{Lutz03a,Hitchcock05}]\label{thm:strongPointToSetDim}
Let $E \subseteq \R^n$ and $A \subseteq \N$ be such that $E$ is effectively compact relative to $A$. Then
\[\dimH(E) = \sup\limits_{x \in E} \dim^A(x)\,.\]
\end{thm}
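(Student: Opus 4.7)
The plan is to prove the two inequalities separately; the $\leq$ direction holds for any oracle and does not require effective compactness, while the $\geq$ direction is precisely where that hypothesis is used.

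For the easier inequality $\dimH(E) \leq \sup_{x \in E} \dim^A(x)$, I would let $s$ be strictly greater than the supremum on the right and fix $s' > s$. For every $x \in E$ the liminf definition of $\dim^A(x)$ provides infinitely many scales $r$ at which $K^A_r(x) \leq sr$, i.e.\ a rational $p \in B(x, 2^{-r})$ with $K^A(p) \leq sr$. For each $N$ I would form
\[\mathcal{C}_N = \bigcup_{r \geq N} \bigl\{ B(p, 2^{-r}) : p \in \Q^n,\; K^A(p) \leq sr \bigr\}\,.\]
Then $\mathcal{C}_N$ covers $E$, and since there are at most $O(2^{sr})$ rational centers of complexity at most $sr$, its $s'$-sum is bounded by $O\bigl(\sum_{r \geq N} 2^{(s-s')r}\bigr) \to 0$. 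Hence $\mathcal{H}^{s'}(E) = 0$ for every $s' > s$, so $\dimH(E) \leq s$; letting $s$ approach $\sup_x \dim^A(x)$ yields the inequality.

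For the reverse inequality, I would fix $s > \dimH(E)$, so $\mathcal{H}^s(E) = 0$. For each $m \in \N$ there is then a finite cover of $E$ by open rational balls with all diameters at most $2^{-m}$ and $s$-sum less than $2^{-m}$. Effective compactness of $E$ relative to $A$ lets an $A$-oracle algorithm enumerate finite rational ball covers of $E$ and, for each $m$, output the first such cover $\mathcal{C}_m$ satisfying the two conditions. The multiset $\bigcup_m \mathcal{C}_m$ has total $s$-sum at most $1$, so the Kraft inequality supplies an $A$-computable prefix code assigning every ball $B \in \mathcal{C}_m$ a string of length at most $s \log_2(1/|B|) + O(\log m)$. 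Given any $x \in E$ and any $m$, any $B(p, 2^{-r}) \in \mathcal{C}_m$ containing $x$ (so $r \geq m$) then yields $K^A_r(x) \leq sr + O(\log r)$. As $m \to \infty$ this produces arbitrarily large such $r$, so $\dim^A(x) \leq s$; letting $s \downarrow \dimH(E)$ completes the direction.

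The main obstacle is executing the Kraft coding step so that it is genuinely uniform in $A$: one must weave the index $m$ of the cover into the prefix code with only logarithmic overhead, and verify that the ball-search procedure is $A$-computable using only the effectively-enumerable rational covers of $E$ supplied by the effective compactness hypothesis. A minor subtlety is that the ball containing $x$ in $\mathcal{C}_m$ may have diameter strictly smaller than $2^{-m}$; this is harmless because we only need the corresponding precisions to tend to infinity with $m$ to obtain a liminf bound on $K^A_r(x)/r$.
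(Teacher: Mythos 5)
Your proposal is correct. Note that the paper does not prove this theorem; it is cited from Lutz and Hitchcock, so there is no in-paper argument to compare against. Your two directions are exactly the standard Kolmogorov-complexity proof of the point-to-set principle for effectively compact sets: the covering-by-low-complexity-rationals argument for $\dimH(E)\leq\sup_x\dim^A(x)$ (which indeed needs no hypothesis on $E$), and the Kraft/coding-theorem argument for the converse, where effective compactness is used precisely to make the search for a finite rational cover with small mesh and small $s$-sum an $A$-computable procedure that halts (such a cover exists classically because $\mathcal{H}^s(E)=0$ and $E$ is compact, and it eventually appears in the $A$-enumeration of covers). The two subtleties you flag are the right ones and are handled correctly: taking $s$ rational makes the $s$-sum test decidable, and the fact that the ball containing $x$ in $\mathcal{C}_m$ has radius $2^{-r}$ with $r\geq m$ is exactly what is needed for a $\liminf$ bound.
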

\begin{remark}
In fact this point-to-set principle can be generalized to $\Sigma^0_1$ subsets, but we do not need the full generality.
\end{remark}

J. Lutz and N. Lutz~\cite{LutLut18} improved this result to hold for \textit{any} set $E\subseteq\R^n$, at the cost of introducing an oracle.
\begin{thm}[Point-to-set principle~\cite{LutLut18}]\label{thm:p2s}
Let $n \in \N$ and $E \subseteq \R^n$. Then
\begin{equation*}
\dimH(E) = \adjustlimits\min_{A \subseteq \N} \sup_{x \in E} \dim^A(x).
\end{equation*}
\end{thm}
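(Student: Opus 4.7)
The plan is to prove the two inequalities separately and verify that the infimum over oracles is attained (so the ``min'' in the statement is justified). The first inequality $\dimH(E) \leq \sup_{x \in E} \dim^A(x)$ must hold for every oracle $A$, while the second requires exhibiting a single oracle $A^*$ realizing $\sup_{x \in E} \dim^{A^*}(x) \leq \dimH(E)$. The key new content relative to Theorem~\ref{thm:strongPointToSetDim} is the construction of $A^*$ for arbitrary $E$: effective compactness automatically provides computable enumerations of open covers, so for general sets one must hard-code a suitable family of covers into $A^*$.

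For the first direction, I would fix $A$ and any rational $s > \sup_{x \in E} \dim^A(x)$. By the Mayordomo characterization (taken as the definition above), every $x \in E$ satisfies $K^A_r(x) < sr$ for infinitely many $r$, so $x$ lies within $2^{-r}$ of some rational of $A$-complexity less than $sr$. There are at most $2^{sr+1}$ such rationals at precision $r$, which yields, for each $r_0$, an $A$-enumerable cover of $E$ by balls of radii $2^{-r}$, $r \geq r_0$, whose $t$-dimensional total is bounded by $\sum_{r \geq r_0} 2^{sr+1}(2\cdot 2^{-r})^t = O(2^{(s-t)r_0})$ for every $t > s$. This tends to $0$ as $r_0 \to \infty$, giving $\mathcal{H}^t(E) = 0$ and hence $\dimH(E) \leq s$; letting $s$ descend to $\sup_{x \in E} \dim^A(x)$ closes this direction.

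For the second direction, set $d := \dimH(E)$. For each rational $s > d$ and each $n \in \N$, the definition of Hausdorff dimension furnishes a cover $\mathcal{C}_{s,n}$ of $E$ by sets of diameter at most $2^{-n}$ with $\sum_{U \in \mathcal{C}_{s,n}} |U|^s < 2^{-n}$. Let $A^*$ encode a canonical uniform enumeration of all these covers. Given $x \in E$ and $(s,n)$, pick $U \in \mathcal{C}_{s,n}$ with $x \in U$, and say $|U| \in [2^{-k-1}, 2^{-k}]$. Grouping $\mathcal{C}_{s,n}$ by dyadic diameter shows there are at most $2^{sk}$ candidates at scale $k$, so $U$ can be specified (given $s,n,k$ and $A^*$) in $sk + O(\log(snk))$ bits, giving $K^{A^*}_k(x) \leq sk + O(\log(snk))$ at precision $k \geq n$. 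Taking $n \to \infty$ (hence $k \to \infty$) and then $s \downarrow d$ through the rationals forces $\liminf_r K^{A^*}_r(x)/r \leq d$, i.e., $\dim^{A^*}(x) \leq d$.

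The main obstacle is the coding step in the second direction: packaging countably many covers indexed by $(s,n)$ into one oracle while keeping decoding overhead purely logarithmic in the target precision, and ensuring that the precisions $k$ arising from the cover-based estimate are unbounded for every $x \in E$. The uniform diameter bound $|U| \leq 2^{-n}$ is the critical device, since it forces $k \geq n \to \infty$; without it the $\liminf$ could fail to see the cover-based estimate. With these ingredients in place, the two inequalities combine to give $\sup_{x \in E} \dim^{A^*}(x) = d$, which simultaneously delivers the equality and shows the infimum over oracles is attained.
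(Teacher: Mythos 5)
The paper states this theorem as a cited result of Lutz and Lutz~\cite{LutLut18} and gives no proof of its own, so the comparison is with the original argument; your proposal is essentially that standard proof — the counting/covering argument showing $\dimH(E)\le\sup_{x\in E}\dim^A(x)$ for every oracle $A$, and an oracle $A^*$ encoding a family of near-optimal covers $\mathcal{C}_{s,n}$ to get the reverse inequality and attain the minimum. The only detail left implicit is that the sets in each $\mathcal{C}_{s,n}$ should first be replaced by rational balls of comparable diameter (so they can be encoded into $A^*$ and decoded into rational approximations of $x$), which is routine and only perturbs $\sum_{U}|U|^s$ by a constant factor.
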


In order to apply Theorem \ref{thm:strongPointToSetDim} to the pinned distance sets, we need the following well known fact of computable analysis. We give a brief sketch of the proof for completeness. 
\begin{prop}\label{prop:effcompact}
Let $E \subseteq \R^2$ be a compact set and let $A\subseteq\N$ be an oracle  relative to which $E$ is effectively compact. Then, for every $x\in\R^2$, $\Delta_x E$ is effectively compact relative to $(x, A)$.
\end{prop}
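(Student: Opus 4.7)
The plan is to exploit the fact that the distance function $f_x(y) = \|x - y\|$ is computable relative to $x$, so the image $\Delta_x E = f_x(E)$ of an effectively compact set should inherit effective compactness relative to $(x, A)$. Concretely, I would show that, relative to $(x, A)$, one can enumerate every finite tuple of rational open intervals $J_1, \dots, J_m \subseteq \R$ such that $\bigcup_i J_i \supseteq \Delta_x E$.

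The first key step is the tautology $\bigcup_i J_i \supseteq \Delta_x E$ iff $E \subseteq f_x^{-1}\bigl(\bigcup_i J_i\bigr) = \bigcup_i f_x^{-1}(J_i)$, where each $f_x^{-1}(J_i)$ is an open annulus (possibly a disk) centered at $x$. Next I would produce a $(x, A)$-c.e.\ enumeration of all finite tuples of rational open balls $\mathbf{B} = (B(q_1, \rho_1), \dots, B(q_k, \rho_k))$ such that each $B(q_j, \rho_j) \subseteq \bigcup_i f_x^{-1}(J_i)$. By the triangle inequality, a ball $B(q, \rho)$ lies in an annulus $\{y : a < \|x - y\| < b\}$ exactly when $a < \|x - q\| - \rho$ and $\|x - q\| + \rho < b$; although $\|x - q\|$ may be irrational, it is computable from $x$ to arbitrary precision, so these strict inequalities form a c.e.\ predicate relative to $x$.

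The second step uses the effective compactness of $E$ relative to $A$ to enumerate the finite ball tuples $\mathbf{B}$ whose union covers $E$. Intersecting this with the enumeration above gives, relative to $(x, A)$, an enumeration of those interval tuples $\mathbf{J}$ for which some ball tuple $\mathbf{B}$ simultaneously covers $E$ and satisfies $\bigcup \mathbf{B} \subseteq f_x^{-1}(\bigcup \mathbf{J})$. If such a $\mathbf{B}$ is ever found, then $E \subseteq \bigcup \mathbf{B} \subseteq f_x^{-1}(\bigcup \mathbf{J})$, forcing $\Delta_x E \subseteq \bigcup \mathbf{J}$. Conversely, if $\bigcup \mathbf{J} \supseteq \Delta_x E$ then $\bigcup_i f_x^{-1}(J_i)$ is an open cover of the compact set $E$, and by the standard fact that any open cover of a compact set admits a finite refinement by rational open balls contained in the cover, a suitable $\mathbf{B}$ exists and will eventually be enumerated.

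No single step looks genuinely hard; the place I would be most careful is the passage from ``a rational ball lies in an annulus centered at $x$'' to a c.e.\ predicate relative to $x$, since this is the only place the irrationality of $\|x - q\|$ intervenes. Once that approximation is set up cleanly, the remainder is standard computable-analysis bookkeeping and the compactness-refinement lemma.
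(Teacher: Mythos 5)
Your proposal is correct and follows essentially the same route as the paper's proof: both reduce covering $\Delta_x E$ by rational intervals to covering $E$ by rational balls contained in the annular preimages $f_x^{-1}(J_i)$, use the computability of $\|x-q\|$ relative to $x$ to make that containment c.e., and then invoke the effective compactness of $E$ relative to $A$ to detect a finite subcover. Your write-up is somewhat more explicit about the semidecidability of the ball-in-annulus predicate and about the converse direction, but the argument is the same.
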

\begin{proof}
Let $x\in\R^2$ and denote the distance function from $x$ by $f$: $f(y) = \|x-y\|$. Recall that an open set $O$ is \textit{computably open} if the set of open rational balls with rational centers contained in $O$ is computably enumerable. It is clear that for any open rational interval $(a,b)\subseteq \R$, $f^{-1}(a,b)$ is computably open, relative to $x$. 

Consider the following Turing machine $M$, which, given oracle $(x, A)$ and given any finite set rational intervals $I_1,\ldots, I_k \subseteq \R$ as input does the following. $M$ begins enumerating a set $\mathcal{B}^i = B^i_1, B^i_2,\ldots$ of open, rational, balls whose union is $f^{-1}(I_i)$. For each finite subset $S$ of $\cup_i \mathcal{B}^i$, $M$ checks if $S$ covers $E$. Note that this is possible, since $E$ is effectively compact relative to $A$. If it does, $M$ accepts. Thus $\Delta_x E$ is computably compact relative to $(x, A)$. Since $x$ was arbitrary, the conclusion follows.
\end{proof}

\subsection{Classical dimension}\label{ssection:outerMeasure}
For every $E\subseteq [0,1)^n$, define the $s$-dimensional Hausdorff content at precision $r$ by
\begin{center}
$h^s_r(E) = \inf\left\{ \sum_i d(Q_i)^s \, |\, \bigcup_i Q_i \text{ covers } E \text{ and } d(Q_i) \leq 2^{-r}\right\}$,
\end{center}
where $d(Q)$ is the diameter of ball $Q$. We define the $s$-dimensional Hausdorff measure of $E$ by
\begin{center}
$\mathcal{H}^s(E) = \lim\limits_{r\to \infty} h^s_r(E)$.
\end{center}
\begin{remark}
It is well-known that $\mathcal{H}^s$ is a metric outer measure for every $s$.
\end{remark}

The \textit{Hausdorff dimension} of a set $E$ is then defined by 
\begin{center}
$\dim_H(E) = \inf\limits_{s}\{ \mathcal{H}^s(E) = \infty \} = \sup\limits_s \{\mathcal{H}^s(E) = 0\}$.
\end{center}

We will make use of the following facts of geometric measure theory (see, e.g., \cite{Falconer14}, \cite{BisPer17}).
\begin{thm}\label{thm:compactSset}
The following are true.
\begin{enumerate}
\item Suppose $E \subseteq \R^n$ is compact and satisfies $\mathcal{H}^s(E) > 0$. Then there is a compact subset $F\subseteq E$ such that $0< \mathcal{H}^s(F) <\infty$.
\item Every analytic set $E\subseteq \R^n$ has a $\Sigma^0_2$ subset $F \subseteq E$ such that $\dim_H(F) = \dim_H(E)$.
\end{enumerate}
\end{thm}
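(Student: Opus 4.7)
Both statements are classical results in geometric measure theory that I would cite from Falconer's monograph rather than reprove in detail, but I can indicate the approaches.

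For part 1, the plan is as follows. If $\mathcal{H}^s(E)<\infty$, we simply take $F=E$ and are done, so assume $\mathcal{H}^s(E)=\infty$. The classical Besicovitch--Davies argument produces $F$ as a nested dyadic intersection: at each scale $k$, partition $\R^n$ into dyadic cubes of side $2^{-k}$, and use a ``halving'' selection rule to choose a subcollection $\mathcal{Q}_k$ of cubes so that $E_k = E\cap\bigcup \mathcal{Q}_k$ is compact, $E_k\subseteq E_{k-1}$, and the Hausdorff $s$-content of $E_k$ is bounded above and below by positive constants independent of $k$. Setting $F=\bigcap_k E_k$ gives a compact subset of $E$; passing the two-sided content bounds to the limit yields $0<\mathcal{H}^s(F)<\infty$. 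The main obstacle is designing the selection rule so that the upper content bound is preserved at every scale without driving the lower bound to zero; this delicate combinatorial step is the heart of the original proof.

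For part 2, the plan is to construct $F$ as an $F_\sigma$ union of compact subsets of $E$ whose dimensions approach $\dim_H(E)$. Fix a sequence $s_n\uparrow \dim_H(E)$. Since $E$ is analytic and $s_n<\dim_H(E)$, we have $\mathcal{H}^{s_n}(E)=\infty$, and the Hausdorff $s_n$-content is known to be a Choquet capacity. Choquet's capacitability theorem then yields a compact subset $K_n\subseteq E$ of positive $s_n$-content, hence with $\dim_H(K_n)\geq s_n$. Set $F=\bigcup_n K_n$. Then $F$ is a countable union of closed sets, so $\Sigma^0_2$, and $F\subseteq E$. Countable stability of Hausdorff dimension gives
\[
\dim_H(F) \;=\; \sup_n \dim_H(K_n) \;\geq\; \sup_n s_n \;=\; \dim_H(E),
\]
and the reverse inequality $\dim_H(F)\leq\dim_H(E)$ is immediate from $F\subseteq E$. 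The only non-routine input is capacitability of the Hausdorff content; the rest is bookkeeping.
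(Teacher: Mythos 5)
The paper offers no proof of this theorem at all: it is stated as a known fact of geometric measure theory with a citation to Falconer and Bishop--Peres, which is exactly the treatment you propose. Your sketches correctly identify the standard arguments behind the citations (the Besicovitch--Davies subset theorem via a nested dyadic-net construction for part 1, and Davies' capacitability argument for Hausdorff content plus countable stability of dimension for part 2), so there is nothing to fix.
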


For two outer measures $\mu$ and $\nu$, $\mu$ is said to be \textit{absolutely continuous with respect to} $\nu$, denoted $\mu \ll \nu$, if $\mu(A) = 0$ for every set $A$ for which $\nu(A) = 0$.

\subsection{Helpful lemmas}\label{subsec:primarylemmas}
In this section, we recall several lemmas which were introduced by Lutz and Stull \cite{LutStu20, LutStu18} and which will be used frequently throughout the paper.

Let $x\in\R^2$, $e\in\mathcal{S}^1$ and $t, r\in\N$. Suppose that the set
\begin{center}
$N = \{w\in B_{2^{-t}}(x)\mid p_e w = p_e x \text{ and } K_r(w)\leq K_r(x)\}$
\end{center}
was covered by $B_{2^{-r}}(x)$. Then, given a $2^{-r}$-approximation of $(p_e x, e)$ and a $2^{-t}$-approximation of $x$, we can compute (an approximation of) $x$ by simply enumerating all $w \in N$. The following lemma formalizes this intuition.
\begin{lem}[\cite{LutStu18}]\label{lem:point2}
Suppose that $z\in\R^2$, $e \in \mathcal{S}^{1}$, $r\in\N$, and $\eta, \ve\in\Q_+$ satisfy the following conditions.
\begin{itemize}
\item[\textup{(i)}]$K_r(z)\leq \eta r + \frac{\ve r}{2}$.
\item[\textup{(ii)}] For every $w \in B_{2^{-t}}(z)$ such that $p_e w = p_e z$, \[K_{r}(w)\geq \eta r + \min\{\ve r, r-s -\ve r\}\,,\]
whenever $s=-\log\|z-w\|\leq r$.
\end{itemize}
Then for every oracle set $A\subseteq\N$,
\[K_{r,r,t}^{A}(z \mid e,z) \leq K^{A}_{r,r,t}( p_e z \mid e,z) + 3\ve r + K(\ve,\eta)+O(\log r)\,.\]
\end{lem}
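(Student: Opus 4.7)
The plan is to construct a Turing machine that, given $e$ and $z_t$ (the $2^{-t}$-approximation of $z$) together with a rational $2^{-r}$-approximation $p$ of $p_e z$, enumerates candidate rational points $w$ and outputs one that must lie very close to $z$. Combining the description length of this machine with a final precision boost via Lemma~\ref{lem:MD:3.9} will yield the stated inequality.

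First, I would define a machine $M^A$ that, on input $(e, z_t, p, \ve, \eta)$, dovetails through all programs of length at most $\eta r + \ve r/2$ and enumerates those rational outputs $w$ with $w \in B_{2^{-t + O(1)}}(z_t)$ and $|p_e w - p| \leq 2^{-r + O(1)}$. Every such $w$ automatically has $K_r(w) \leq \eta r + \ve r/2$. By hypothesis~(i), any rational testifying to $K_r(z)$ is a candidate (it lies in $B_{2^{-t + O(1)}}(z_t)$ and projects within $2^{-r + O(1)}$ of $p$), so the enumeration is nonempty and $M^A$ halts with some $w$. The total description length of $M^A$ is $K(\ve,\eta) + O(\log r)$.

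Next, I would show that any enumerated $w$ must satisfy $\|z - w\| \leq 2^{-r + 3\ve r/2 + O(\log r)}$. Set $w_0 := w + (p_e z - p_e w)\, e$; since $|e| = 1$, $p_e w_0 = p_e z$, so $w_0$ lies on the fiber of $z$ in direction $e$ (and, for $t$ sufficiently large, in $B_{2^{-t}}(z)$), while $\|w - w_0\| = |p_e w - p_e z| \leq 2^{-r + O(1)}$. Any $2^{-r}$-approximation of $w$ is then a $2^{-r + O(1)}$-approximation of $w_0$, so Lemma~\ref{lem:MD:3.9} gives $K_r(w_0) \leq K_r(w) + O(\log r) \leq \eta r + \ve r/2 + O(\log r)$. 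Applying condition~(ii) to $w_0$ with $s_0 := -\log\|z - w_0\|$, one has either $s_0 > r$ (whence $w_0$ already $2^{-r}$-approximates $z$) or $K_r(w_0) \geq \eta r + \min\{\ve r,\, r - s_0 - \ve r\}$. Combining the upper and lower bounds forces $r - s_0 - \ve r \leq \ve r/2 + O(\log r)$, so $\|z - w_0\| \leq 2^{-r + 3\ve r/2 + O(\log r)}$, and the triangle inequality gives $\|z - w\| \leq 2^{-r + 3\ve r/2 + O(\log r)}$.

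Together, the preceding observations show
\[
K^A_{r - 3\ve r/2 - O(\log r),\, r,\, t}(z \mid e, z) \;\leq\; K^A_{r, r, t}(p_e z \mid e, z) + K(\ve,\eta) + O(\log r).
\]
A conditional form of Lemma~\ref{lem:MD:3.9} then boosts the output precision from $r - 3\ve r/2 - O(\log r)$ back to $r$ at a cost of $2 \cdot (3\ve r/2) + O(\log r) = 3\ve r + O(\log r)$ additional bits (the factor $2$ coming from $\R^2$), and chaining gives the claim. The main technical hurdle is the perturbation step $w \mapsto w_0$: condition~(ii) speaks only of exact fiber points, yet the enumeration produces $w$'s that merely approximately project onto $p_e z$. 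The near-identity bound $K_r(w_0) \leq K_r(w) + O(\log r)$ is the key device for transferring the hypothesis to the enumerated candidates; the remaining details (semi-decidability of the enumeration, finiteness of the candidate set, and $\R^2$-specific precision bookkeeping) are routine.
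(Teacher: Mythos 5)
Your proposal follows essentially the same route as the paper's argument (the paper proves the distance analogue, Lemma~\ref{lem:pointDistance}, and notes the proof of Lemma~\ref{lem:point2} is nearly identical): a machine enumerating all programs of length at most $\eta r + \ve r/2$ and selecting an output that approximately matches the given projection, a perturbation of the candidate to an exact fiber point (your explicit $w_0 = w + (p_e z - p_e w)e$ plays the role of Observation~\ref{obs:existProj}), the case analysis on condition~(ii) forcing $r - s \leq 3\ve r/2 + O(\log r)$, and the final factor-of-two precision boost in $\R^2$ yielding the $3\ve r$ term. The argument is correct; only routine bookkeeping (e.g.\ that $w_0$ lands in $B_{2^{-t}}(z)$ up to an additive constant in the precision) remains, exactly as in the paper.
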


The next lemma shows that the precision to which we can compute $e$ given $x, w$ such that $p_e x = p_e w$ depends linearly on the distance of $x$ and $w$.
\begin{lem}[\cite{LutStu20}]\label{lem:lowerBoundOtherPoints}
Let $z \in \R^2$, $e \in S^{1}$, and $r \in \N$. Let $w \in \R^2$ such that  $p_e z = p_e w$. Then 
\begin{equation*}
    K_r(w) \geq K_t(z) + K_{r-t,r}(e\mid z) + O(\log r)\,,
\end{equation*}
where $t := -\log \|z-w\|$.
\end{lem}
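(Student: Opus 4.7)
The plan is to combine the symmetry of information for points in $\mathbb{R}^n$ (Lemma \ref{lem:unichain}) with the geometric fact that, because $p_e z = p_e w$, the displacement $w - z$ is a vector of length $2^{-t}$ orthogonal to $e$. Two applications of Lemma \ref{lem:unichain}(i) to the pair $(w, z)$, one expanding $K_r(w,z)$ via $K_r(w\mid z)$ and the other via $K_r(z\mid w)$, give the identity
\[K_r(w) = K_r(z) + K_r(w \mid z) - K_r(z \mid w) + O(\log r),\]
so it is enough to establish the two bounds
\[K_r(z \mid w) \leq K_r(z) - K_t(z) + O(\log r) \qquad \text{and} \qquad K_r(w \mid z) \geq K_{r-t, r}(e \mid z) - O(\log r),\]
since substitution then telescopes to the desired inequality.

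For the first bound, any $q_w \in B_{2^{-r}}(w) \cap \mathbb{Q}^2$ lies within $2^{-t} + 2^{-r}$ of $z$ by the triangle inequality, so it serves as a precision-$(t-1)$ approximation of $z$. Lemma \ref{lem:unichain}(ii) upper-bounds the cost of extending any precision-$(t-1)$ approximation of $z$ to a precision-$r$ approximation by $K_r(z) - K_{t-1}(z) + O(\log r)$. Unwinding the max-min definition of $K_r(z \mid w)$ and using Lemma \ref{lem:MD:3.9} to pass from $K_{t-1}(z)$ to $K_t(z)$ at a cost of $O(\log r)$ yields the stated inequality.

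The second bound is the geometric heart of the argument. For arbitrary $q_z \in B_{2^{-r}}(z) \cap \mathbb{Q}^2$ and $q_w \in B_{2^{-r}}(w) \cap \mathbb{Q}^2$, the vector $q_w - q_z$ approximates $w - z$ to additive error $O(2^{-r})$, while $\|w - z\| = 2^{-t}$. Normalizing and rotating by $90^\circ$, together with a single bit to resolve the sign ambiguity between $e$ and $-e$, produces a rational within $O(2^{-(r-t)})$ of $e$. Consequently, from $q_z$ and any program of length $K(q_w \mid q_z)$ computing $q_w$, one obtains, in $O(\log r)$ additional bits, an element of $B_{2^{-(r-t)}}(e) \cap \mathbb{Q}^2$. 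Taking the minimum over $q_w$ and then the maximum over $q_z$ in the resulting inequality delivers the second bound.

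The main obstacle is simply keeping the logarithmic slack under control: it arises from the two applications of symmetry of information, the precision adjustment in Lemma \ref{lem:unichain}(ii), the small additive constant lost by rational rounding after normalization, and the one-bit sign ambiguity for $e$ relative to $w - z$. None of these is deep, but they must be tracked jointly to ensure everything fits inside the single $O(\log r)$ error term promised in the statement.
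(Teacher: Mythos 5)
The paper does not reprove this lemma (it is imported from \cite{LutStu20}), but your argument is correct and matches the standard proof: the two essential facts --- that any precision-$r$ rational estimate of $w$ is automatically a precision-$(t-1)$ estimate of $z$, and that normalizing and rotating $q_w-q_z$ recovers $e$ to precision $r-t$ up to one sign bit and $O(\log r)$ overhead --- are exactly the ones used there, and your max--min bookkeeping for both bounds is sound. Organizing the computation through the joint complexity $K_r(w,z)$ at a single precision, rather than through the single-point identity $K_r(w)=K_t(w)+K_{r,t}(w\mid w)+O(\log r)$ across two precisions, is only a cosmetic difference.
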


We will commonly need to lower the complexity of points at specified positions. The following lemma shows that conditional complexity gives a convenient way to do this.
\begin{lem}[\cite{LutStu20}]\label{lem:oracles}
Let $z\in\R^2$, $\eta \in\Q_+$, and $r\in\N$. Then there is an oracle $D=D(r,z,\eta)$ with the following properties.
\begin{itemize}
\item[\textup{(i)}] For every $t\leq r$,
\[K^D_t(z)=\min\{\eta r,K_t(z)\}+O(\log r)\,.\]
\item[\textup{(ii)}] For every $m,t\in\N$ and $y\in\R^m$,
\[K^{D}_{t,r}(y\mid z)=K_{t,r}(y\mid z)+ O(\log r)\,,\]
and
\[K_t^{z,D}(y)=K_t^z(y)+ O(\log r)\,.\]
\item[\textup{(iii)}] If $B\subseteq\N$ satisfies $K^B_r(z) \geq K_r(z) - O(\log r)$, then \[K_r^{B,D}(z)\geq K_r^D(z) - O(\log r)\,.\]
\item[\textup{(iv)}] For every $t\in\N$, $u\in\R^n, w\in\R^m$
\[K_{r,t}(u\mid w) \leq K^D_{r,t}(u\mid w) + K_r(z) - \eta r + O(\log r)\,.\]
\end{itemize}
In particular, this oracle $D$ encodes $\sigma$, the lexicographically first time-minimizing witness to $K(z\uhr r\mid z\uhr s)$, where $s = \max\{t \leq r \, : \, K_{t-1}(z) \leq \eta r\}$.
\end{lem}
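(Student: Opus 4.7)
The plan is to construct $D$ explicitly from $(r,z,\eta)$ and verify the four properties in turn. Let the \emph{breakpoint} be $s := \max\{t \leq r : K_{t-1}(z) \leq \eta r\}$, the largest precision at which the complexity of $z$ is still below $\eta r$, and let $\sigma$ be the lexicographically first time-minimizing program witnessing $K(z \uhr r \mid z \uhr s)$. Define $D$ to encode the tuple $(r,\eta,s,\sigma)$. Lemma~\ref{lem:MD:3.9} applied at the breakpoint yields $K_s(z) = \eta r + O(\log r)$, and then Lemma~\ref{lem:unichain}(i) gives $|\sigma| = K_r(z) - \eta r + O(\log r)$; intuitively, $\sigma$ captures exactly the information in $z$ residing between precisions $s$ and $r$.

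For (i), the bound $K^D_t(z) \leq \min\{\eta r, K_t(z)\} + O(\log r)$ follows from the trivial $K^D_t(z) \leq K_t(z)$ together with a $D$-oracle algorithm that takes any length-$\leq \eta r$ witness for $z \uhr {s-1}$, extends it by $O(\log r)$ bits to reach $z \uhr s$, applies $\sigma$ to recover $z \uhr r$, and truncates to precision $t$. The matching lower bound for $t \leq s$ combines the incompressibility of $\sigma$ conditional on $z \uhr s$ with Lemma~\ref{lem:unichain} to show that any compression of $z \uhr t$ relative to $D$ would yield a shorter-than-allowed description of $z \uhr r$; for $t > s$ monotonicity gives $K^D_t(z) \geq \eta r - O(\log r)$. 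Property (ii) uses that $D$ is determined by $(z \uhr r, r, \eta)$: given any precision-$r$ approximation $q$ of $z$, the oracle $D$ itself can be inserted as $O(\log r)$-length advice into any unrelativized program, yielding both equalities. Property (iii) is a symmetry-of-information calculation: the inequalities $K^B(z \uhr r, D) \leq K^{B,D}_r(z) + |\sigma| + O(\log r)$ and $K^B(z \uhr r, D) \geq K^B_r(z) - O(\log r) \geq K_r(z) - O(\log r)$ combine to give $K^{B,D}_r(z) \geq \eta r - O(\log r) = K^D_r(z) - O(\log r)$. Property (iv) is immediate: hard-coding $D$ into any unrelativized program costs at most $|\sigma| + O(\log r) = K_r(z) - \eta r + O(\log r)$ extra bits.

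The main obstacle is the lower bound in (i) for $t \leq s$, where one must argue that $D$---despite carrying substantial information about $z$---is ``orthogonal'' to the initial segment $z \uhr t$. The crucial input is that $\sigma$ is a shortest conditional witness, so $K(\sigma \mid z \uhr s) \geq |\sigma| - O(1)$; since $z \uhr t$ is determined by $z \uhr s$ for $t \leq s$, this propagates to $K(\sigma \mid z \uhr t) \geq |\sigma| - O(\log r)$, and Lemma~\ref{lem:unichain}'s symmetry identity then pins $K(z \uhr t \mid D)$ to within $O(\log r)$ of $K(z \uhr t)$. A similar delicacy arises in (iii), where the $O(\log r)$ errors from relativized symmetry must be composed carefully so as not to inflate beyond the stated tolerance.
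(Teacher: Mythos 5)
Your construction and verification are correct, and they reproduce exactly the construction intended here: the paper does not prove Lemma~\ref{lem:oracles} but imports it from \cite{LutStu20}, and the closing sentence of the statement already reveals that $D$ encodes the shortest conditional witness $\sigma$ at the breakpoint $s$, which is precisely what you build on. Your treatment of the delicate points --- the incompressibility of $\sigma$ given $z\uhr s$ for the lower bound in (i), and the fact that $D$ is recoverable from a precision-$r$ approximation of $z$ with only $O(\log r)$ advice for (ii) --- matches the argument in \cite{LutStu20} (your citation of Lemma~\ref{lem:unichain}(i) where part (ii) is the relevant identity is immaterial).
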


We will also need the following analog of Lemma \ref{lem:point2}. The proof of this Lemma is nearly identical to that of Lemma \ref{lem:point2}, however, we give the proof for completeness.
\begin{lem}\label{lem:pointDistance}
Suppose that $x, y\in\R^2$, $t<r\in\N$, and $\eta, \ve\in\Q_+$ satisfy the following conditions.
\begin{itemize}
\item[\textup{(i)}]$K_r(y)\leq \left(\eta +\frac{\ve}{2}\right)r$.
\item[\textup{(ii)}] For every $w \in B_{2^{-t}}(y)$ such that $\|x-y\| = \|x-w\|$, \[K_{r}(w)\geq \eta r + \min\{\ve r, r-s -\ve r\}\,,\]
where $s=-\log\|y-w\|\leq r$.
\end{itemize}
Then for every oracle set $A\subseteq\N$,
\[K_{r,t}^{A, x}(y \mid y) \leq K^{A,x}_{r,t}( \|x-y\|\mid y) + 3\ve r + K(\ve,\eta)+O(\log r)\,.\]
\end{lem}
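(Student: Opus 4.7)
The plan is to mirror the proof of Lemma~\ref{lem:point2}, replacing the line $\{w : p_e w = p_e z\}$ with the circle $C_y = \{w \in \R^2 : \|x-w\| = \|x-y\|\}$. Fix a minimum-length program $\sigma$ witnessing $K^{A,x}_{r,t}(\|x-y\| \mid y)$, so that for any $q \in B_{2^{-t}}(y) \cap \Q^2$ given as conditional input, $U^{A,x}(\sigma, q)$ outputs some rational $d \in B_{2^{-r}}(\|x-y\|)$. We build a Turing machine $M^{A,x}$ that, on input $(\sigma, \pi, q)$ with $\pi$ a short advice block to be specified, computes a $2^{-r}$-approximation of $y$.

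$M^{A,x}$ first simulates $\sigma$ on $q$ to recover $d$. It then dovetails over all programs of length at most $(\eta + \ve/2)r + O(1)$, collecting the rational points $p \in B_{2^{-t}}(q)$ they produce which also satisfy $|\|x-p\| - d| \leq 2^{-r+1}$. By condition (i), a $2^{-r}$-approximation of $y$ appears in this list. The key step is to invoke condition (ii) to show that \emph{every} enumerated $p$ is within $2^{-(r-2\ve r)}$ of $y$: each such $p$ lies within $O(2^{-r})$ of some $w \in C_y \cap B_{2^{-t}}(y)$ by the Lipschitz property of the distance function, and by Lemma~\ref{lem:MD:3.9} we get $K_r(w) \leq K_r(p) + O(\log r) \leq (\eta + \ve/2)r + O(\log r)$; plugging into (ii) forces $r - s - \ve r \leq (\ve/2)r + O(\log r)$, where $s = -\log \|y - w\|$, and hence $\|p - y\| \leq 2^{-(r - 2\ve r)}$ up to additive constants.

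To finish, the advice block $\pi$ is used to refine an enumerated $p$ to a genuine $2^{-r}$-approximation of $y$. Since $p$ already lies within $O(2^{-r})$ of $C_y$ in the normal direction, the residual uncertainty is essentially 1-dimensional along the tangent at $y$, so $O(\ve r)$ bits suffice; encoding the parameters $\ve, \eta$ costs an additional $K(\ve, \eta)$ bits, and encoding the precision $r$ costs $O(\log r)$ more. Summing all contributions yields
\[K^{A,x}_{r,t}(y \mid y) \leq K^{A,x}_{r,t}(\|x-y\| \mid y) + 3\ve r + K(\ve, \eta) + O(\log r).\]
The main technical point distinguishing this from Lemma~\ref{lem:point2} is the curvature of $C_y$, but since the relevant enumerated points all lie within $O(2^{-r})$ of the exact level set, this nonlinearity is absorbed into the Lipschitz slack and the $O(\log r)$ error terms already present in the argument.
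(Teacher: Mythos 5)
Your proposal is correct and follows essentially the same route as the paper's proof: enumerate programs of length at most $(\eta+\ve/2)r$, use Observation~\ref{obs:existProj} to replace an enumerated candidate by a genuine point $w$ on the circle $\|x-w\|=\|x-y\|$, and invoke condition (ii) to force $r-s\leq \tfrac{3\ve r}{2}+O(\log r)$ so that the candidate determines $y$ up to $O(\ve r)$ further bits. The only difference is cosmetic bookkeeping in the last step: you refine tangentially along the circle (about $2\ve r$ bits), whereas the paper uses the two-dimensional estimate $K^{A,x}_r(y\mid w)\leq 2(r-s)+O(\log r)\leq 3\ve r$; both land within the stated $3\ve r$ budget.
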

\begin{proof}
Suppose $x,y, r, \eta, \ve$, and $A$ satisfy the hypothesis.
				
Define an oracle Turing machine $M$ that does the following given oracle $(A,x)$,  and inputs $\pi= \pi_1\pi_2\pi_3\pi_4$ and $q\in\Q^2$ such that $U(\pi_2)=s_1\in\N$, $U(\pi_3) = s_2\in\N$ and $U(\pi_4)=(\zeta, \iota) \in\Q^2$.
				
$M$ first uses $q$ and oracle access to $x$ to compute the dyadic rational $d$ such that $\vert d - \|x - q\|\vert < 2^{-s_1}$. $M$ then computes $U^{A.x}(\pi_1, q) = p \in \Q$. For every program $\sigma\in\{0,1\}^*$ with $\ell(\sigma)\leq \iota s_2 + \frac{\zeta s_2}{2}$, in parallel, $M$ simulates $U(\sigma)$. If one of the simulations halts with output $p_2 \in \Q^2 \cap B_{2^{-s_1}}(q)$ such that $\vert d - \|x-p_2\|\vert < 2^{-s_2}$, then $M^{A, e}$ halts with output $p_2$. Let $k_M$ be a constant for the description of $M$.
				
Let $\pi_1$, $\pi_2$, $\pi_3$, and $\pi_4$ testify to $K^{A,e}_{r,t}(\|x-y\| \mid y)$, $K(t)$, $K(r)$, and $K(\ve,\eta)$, respectively, and let $\pi= \pi_1\pi_2\pi_3\pi_4$. Let $q$ be a rational such that $\|y-q\| < 2^{-t}$. Let $\sigma$ be a program of length at most $(\eta +\frac{\ve}{2})r$ such that $\|p - y\| < 2^{-r}$, where $U(\sigma) = p$. Note that such a program must exist by condition (i) of our hypothesis. Note that
\begin{align*}
\vert \|x - y\| - \|x - p\|\vert &\leq \|y - p\|\\
&< 2^{-r}.
\end{align*}
Thus on input $\pi$, $M^{A, e}$ is guaranteed to halt.

Let $M^{A, e}(\pi) = p_2 \in \Q^2$. Then, by the definition of $M$, $p_2\in B_{2^{-t}}(y)$, and $\vert \|x-p_2\| - \|x-y\| \vert < 2^{1-r}$. Observation~\ref{obs:existProj} shows that there is some
\[w \in B_{\|x-y\|2^{1-r}}(p_2)\subseteq B_{2^{-t}}(y)\]
such that $\|x-y\| = \|x-w\|$. In addition, by the definition of $M$, $K(p) \leq (\eta + \frac{\ve}{2})r$. Therefore, by Lemma \ref{lem:MD:3.9},
\begin{center}
$K_r(w) \leq (\eta + \frac{\ve}{2})r + 2\log \lceil \|x-y\|\rceil + O(\log r)$.
\end{center}
Let $s = -\log\|y-w\|$. We first assume that $\ve r \leq r - s -\ve r$. Then, by condition (ii),
\begin{center}
$\eta r + \ve r \leq (\eta + \frac{\ve}{2})r + 2\log \lceil \|x-y\|\rceil + O(\log r)$,
\end{center}
a contradiction, since $r$ was chosen to be sufficiently large. Now assume that $r-s-\ve r < \ve r$. Then,
\begin{center}
$\eta r + r-s -\ve r \leq (\eta + \frac{\ve}{2})r + 2\log \lceil \|x-y\|\rceil + O(\log r)$,
\end{center}
and so $r - s \leq \frac{3\ve r}{2} + O(\log r)$. Therefore, by Lemma \ref{lem:MD:3.9},
\begin{center}
$K^{A,x}_r(y\mid w) \leq 3\ve r + O(\log r)$.
\end{center}

Hence,
\begin{align*}
K^{A,x}_{r,t}(y\mid y) &\leq \vert \pi \vert + 3\ve r + O(\log r)\\
&= K^{A,x}_{r,t}(\|x-y\|\mid y|) + 3\ve r + K(\ve, \eta) + O(\log r),
\end{align*}
and the proof is complete.
\end{proof}

\section{Projection results}\label{sec:projections}
In this section, we prove tight bounds on the Kolmogorov complexity of projections of points onto a line. For $x\in\R^2$ and $e\in\mathcal{S}^1$, the \textit{projection of $x$ onto the line in the direction of $e$} is $p_e x = e\cdot x$. The central question of this section is to understand the quantity $K_r(x\mid p_e x, e)$. That is, how difficult is it to compute $x$ if you are given its projection? This can be translated into a more geometric question. Let
\begin{center}
$N = \{w \in \R^2 \mid p_e w = p_e x, \text{ and } K_r(w)\leq K_r(x)\}$.
\end{center}
Then
\begin{center}
$K_{r}(x\mid p_e x, e) \approx r + \log \mu(N)$.
\end{center}

Lutz and Stull \cite{LutStu20} answered\footnote{Using the point-to-set principle, this gave a new, algorithmic, proof of Marstrand's projection theorem.} this question for the restricted case when $e$ is \textit{random}, relative to $x$. They showed that, under this assumption,
\begin{equation}
K_{r}(x\mid p_e x, e) \approx \min\{0, K_r(x) - r\}.
\end{equation}

Unfortunately, for the application of projection bounds to pinned distance sets, we do not have complete control over the direction $e$. Thus, we cannot guarantee that $e$ is random, and so we cannot apply this bound. However, we will have enough control over $e$ to ensure that it is random up to some precision $t$. That is, $K^x_s(e) \approx s$, for all $s \leq t$. The main theorem of this section generalizes the result in \cite{LutStu20}  to give reasonable bounds in this case.
\begin{thm}\label{thm:projectionMainTheorem}
Let  $x \in \R^2$, $e \in \mathcal{S}^1$, $\ve\in \Q^+$, $C\in\N$, and $t, r \in \N$. Suppose that $r$ is sufficiently large, and that the following hold.
\begin{enumerate}
\item[\textup{(P1)}] $\dim(x) > 1$.
\item[\textup{(P2)}] $t \geq \frac{r}{C}$.
\item[\textup{(P3)}] $K^x_s(e) \geq s - C\log s$, for all $s \leq t$. 
\end{enumerate}
Then 
\begin{center}
$K_{r}(x \, | \, p_e x, e) \leq K_r(x) - \frac{r+t}{2} + \ve r$.
\end{center}
\end{thm}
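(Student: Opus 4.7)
The plan is to apply Lemma~\ref{lem:point2} --- which converts the geometric statement ``no nearby point on the same projection line as $x$ has anomalously low complexity'' into the complexity statement ``$x$ is cheaply recoverable from $(p_e x, e)$'' --- at a carefully chosen precision, after first regularizing the complexity profile of $x$ via the oracle construction of Lemma~\ref{lem:oracles}. Verification of condition (ii) of Lemma~\ref{lem:point2} is the core of the argument, and is where all three hypotheses (P1)--(P3) enter.

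For the verification, fix any $w$ with $p_e w = p_e x$ and let $s = -\log\|x-w\|$. By Lemma~\ref{lem:lowerBoundOtherPoints}, $K_r(w) \geq K_s(x) + K_{r-s,r}(e \mid x) + O(\log r)$. I will split the analysis into two regimes. When $r - s \leq t$, hypothesis (P3) combined with (\ref{eq:OraclesDontIncrease}) yields $K_{r-s,r}(e \mid x) \geq r - s - O(\log r)$, so $K_r(w) \geq K_s(x) + (r-s) - O(\log r)$. When $r - s > t$, monotonicity of Kolmogorov complexity in precision gives only $K_{r-s,r}(e \mid x) \geq K^x_t(e) - O(\log r) \geq t - O(\log r)$, so $K_r(w) \geq K_s(x) + t - O(\log r)$. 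In both regimes, the extra slack needed to beat the target threshold $\eta r + \min\{\varepsilon r, r - s - \varepsilon r\}$ is supplied by (P1): since $\dim(x) > 1$, we have $K_s(x) \geq (1+\delta)s$ for some $\delta > 0$ and all $s$ sufficiently large, and this dimensional surplus, compounded with either the $(r-s)$ or $t$ contribution from the randomness of $e$, is precisely what is required to produce savings of order $(r+t)/2$.

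Concretely, the steps will be: (a) use Lemma~\ref{lem:oracles} to pick an oracle $D$ that caps $K^D_s(x)$ at an appropriately chosen $\eta r$, with Lemma~\ref{lem:oracles}(iii) ensuring that the randomness of $e$ relative to $(x, D)$ inherited from (P3) is preserved; (b) verify condition (ii) of Lemma~\ref{lem:point2} relative to $D$ at precision $r$ with a conditioning precision $t_0$ chosen so that the union of the two regimes above covers the required range $s \in [t_0, r]$; (c) apply Lemma~\ref{lem:point2} to obtain $K^D_{r,r,t_0}(x \mid e,x) \leq K^D_{r,r,t_0}(p_e x \mid e,x) + O(\varepsilon r)$, and bound the right-hand side by $r - t_0 + O(\log r)$ (the trivial cost of extending $p_e x$ from precision $t_0$ to precision $r$); (d) translate back to the unrelativized $K_r(x \mid p_e x, e)$ using Lemma~\ref{lem:oracles}(iv) together with the symmetry of information from Lemma~\ref{lem:unichain}.

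The main obstacle will be the intermediate regime near $s \approx r-t$, where neither the strong bound from (P3) nor the weaker monotonicity bound on $K_{r-s,r}(e \mid x)$ is tight. In this range the parameters $\eta$ (the cap from Lemma~\ref{lem:oracles}), $t_0$ (the conditioning precision), and the choice of working precision in Lemma~\ref{lem:point2} must be delicately balanced so that condition (ii) holds uniformly for all $s \in [t_0, r]$ while still producing the sharp $(r+t)/2$ savings in the final bound. Hypothesis (P2), $t \geq r/C$, is used here to guarantee that this intermediate range is not so large that its losses accumulate to swamp the dimensional surplus provided by (P1).
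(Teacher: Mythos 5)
There is a genuine gap: a single application of Lemma~\ref{lem:point2} over the whole range cannot deliver the $\frac{r+t}{2}$ savings, and the ``dimensional surplus'' from (P1) cannot rescue it. Concretely, condition (ii) of Lemma~\ref{lem:point2} must hold for \emph{every} $w$ in the conditioning ball $B_{2^{-t_0}}(x)$ with $p_e w = p_e x$. For such a $w$ at distance $2^{-s}$ with $r-s>t$, the only available lower bound is $K_r(w)\gtrsim K_s(x)+t$, since (P3) controls $e$ only up to precision $t$. If you take $t_0=O(\log r)$ so that the initial segment costs nothing, then for small $s$ this bound is roughly $t$, which is far below the cap $\eta r=\frac{r+t}{2}$ whenever $t<r$; the surplus from (P1) is only $\delta s$ for an arbitrarily small $\delta>0$ (depending on $x$), so it cannot bridge a gap of order $\frac{r-t}{2}$. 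If instead you take $t_0$ large enough (e.g.\ $t_0=\frac{r-t}{2}$) that condition (ii) holds on $[t_0,r]$, the final bound acquires an additive term $K_{t_0}(x)$, which can be as large as $2t_0=r-t$, destroying the conclusion. No choice of the pair $(\eta,t_0)$ balances these two failures, because the quantity you must beat depends on the entire profile $s\mapsto K_s(x)$ on $[0,r]$, not on a single cap.

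The paper's proof is structurally different and this structure is essential. It partitions $[0,r]$ into $O(C)$ intervals, each of length at most $t$ (so that (P3) suffices on each) and each either \emph{yellow} ($K_s(x)$ grows at rate $\geq 1$) or \emph{teal} (increments $\leq 1$ looking back from the right endpoint). Lemma~\ref{lem:point2} is applied once \emph{per interval}, conditioning on $x$ at the left endpoint, and the per-interval bounds are chained by symmetry of information (Lemma~\ref{lem:boundGoodPartitionProjection}). The $\frac{r+t}{2}$ then emerges from a min--max argument: writing $B$ for the total length of the yellow intervals, one shows $K_r(x\mid p_ex,e)\lesssim\min\{B,\,K_r(x)-B-t\}$, where the extra $-t$ comes from the red/green/blue refinement guaranteeing either an all-yellow tail or a green block of total length $\geq t$; (P1) enters only to ensure $K_r(x)>r$ so the two branches of the minimum can be balanced. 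Your proposal does not engage with this multi-scale decomposition, and without it the claimed verification of condition (ii) ``uniformly for all $s$'' is false.
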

\begin{remark}
This is a generalization of \cite{LutStu20}. If $e$ is random relative to $x$, then by taking $t = r$, we can recover the theorem of Lutz and Stull. We also remark that, when $t$ is significantly smaller than $r$, it is likely that this bound can be improved.
\end{remark}

The proof of Theorem \ref{thm:projectionMainTheorem} relies on analyzing the function $s \mapsto K_s(x)$. Instead of working with this function directly, it is convenient to work with a function defined on real numbers. Let $f:\R_+ \rightarrow \R_+$ be the piece-wise linear function which agrees with $K_s(x)$ on the integers, and 
\begin{center}
$f(a) = f(\lfloor a\rfloor) + (a -\lfloor a\rfloor)(f(\lceil a \rceil) - f(\lfloor a\rfloor)) $,
\end{center}
for any non-integer $a$. Since
\begin{center}
$K_s(x) \leq K_{s+1}(x)$ 
\end{center} 
for every $s\in\N$, $f$ is non-decreasing and by Lemma \ref{lem:MD:3.9}, for every $s_1 < s_2\in\N$,
\begin{center}
$f(b) - f(a) \leq 2(b-a) + O(\log \lceil b \rceil)$.
\end{center}
Finally, we note that, for any $a\in \N$,
\[
f(a+1) - f(a) =
\begin{cases} 
0 
    & \text{  } \\
1
    & \text{  }\\
M &\text{ for some } M \geq 2.
\end{cases}
\]

The proof of Theorem \ref{thm:projectionMainTheorem} relies on extending the techniques of \cite{LutStu20} to certain nice intervals. An interval $[a,b]$ is called \textbf{\textit{teal}} if $f(b) - f(c) \leq b-c$ for every $a\leq c\leq b$. It is called \textbf{\textit{yellow}} if $f(c) - f(a) \geq c - a$, for every $a\leq c \leq b$. We denote the set of teal intervals by $T$ and the set of yellow intervals by $Y$.

We conclude this section with a brief outline of Theorem \ref{thm:projectionMainTheorem}. In Section \ref{subsec:projectionsIntervals}, we show that the techniques of \cite{LutStu20} extend naturally on yellow and teal intervals\footnote{This idea is implicit in \cite{LutStu22, Stull22b}, in the context of points on a line}. Given a sufficiently nice partition of $[0,r]$, we are able to simply sum the bounds of each subinterval (Lemma \ref{lem:boundGoodPartitionProjection}). In Section \ref{subsec:proofTheoremProjection} we construct partition of $[0,r]$ which minimizes the complexity of $K_r(x\mid p_e x, e)$, and completes the proof.

\subsection{Projection bounds on yellow and teal intervals}\label{subsec:projectionsIntervals}
The proofs of Lemmas \ref{lem:projectionsIncreasingInterval} and \ref{lem:projectionsDecreasingInterval} follow the same strategy. Since 
\begin{center}
$K_{r,t}(x\mid p_e x, e,x) \approx r +\log \mu\left( \{w \in B_{2^{-t}(x)} \mid p_e w = p_e x, \text{ and } K_r(w)\leq K_r(x)\}\right)$.
\end{center}
a bound for $K_{r,r,r,t}(x\mid p_e x, e,x)$, is, roughly, equivalent to bounding the set of points $w\in B_{2^{-t}}(x)$ such that $p_e w = p_e x$. Thus, if this set contains only points close to $x$, then $K_{r,r,r,t}(x\mid p_e x, e,x)$ is small. Lemma \ref{lem:point2} formalizes this intuition. 

Thus, we need to show that any $w$ with $p_e w = p_e x$ is close to $x$. We do this by using an \textit{incompressibility argument}. Suppose that $K^x_s(e) \approx s$, for all $s \leq r-t$. If $w$ has the same projection as $x$, then we can compute $e$ given $w$ and $x$. This is the content of Lemma \ref{lem:lowerBoundOtherPoints}, in the language of Kolmogorov complexity, which shows that
\begin{center}
$K^x_{r-t}(e) \lesssim K^x_r(w)$
\end{center}
where $t = -\log\|w-x\|$. Using symmetry of information arguments, we can show that, if $[t, r]$ is teal, then this contradicts our assumption that $e$ was random at small precisions. When $[t, r]$ is yellow, we use Lemma \ref{lem:oracles} to artificially decrease the complexity of $x$ at precision $r$. Once this is done, we are in the case where $[t, r]$ is teal.

We begin by assuming that $t$ and $r$ are rationals, and slightly weaken the yellow and teal property. 
\begin{lem}\label{lem:projectionsIncreasingInterval}
Let $x\in\R^2$, $e\in\mathcal{S}^1$, $\ve \in \Q_+$, $C\in\N$, and $t < r\in\N$. Suppose that $r$ is sufficiently large, depending on $\ve$, and that the following hold.
\begin{enumerate}
\item[\textup{(i)}] $K^x_s(e) \geq s - C\log r$, for all $s \leq r - t$.
\item[\textup{(ii)}] $K_{s,t}(x) \geq s - t - \ve r$, for all $t\leq s\leq r$.
\end{enumerate}
Then,
\begin{center}
$K_{r,r,r, t}(x\mid p_e x, e, x) \leq K_{r, t}(x\mid x) -(r - t) + 7\ve r$.
\end{center}
\end{lem}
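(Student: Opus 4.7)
The plan is to reduce to a direct application of Lemma~\ref{lem:point2} by artificially capping the complexity of $x$ at precision $r$ via Lemma~\ref{lem:oracles}, then de-relativizing. The yellow hypothesis allows $K_r(x)$ to exceed the rate-one slope above $K_t(x)$, so the cap is precisely what forces a situation where Lemma~\ref{lem:point2}'s incompressibility-based condition~(ii) can be verified.

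Choose $\eta\in\Q_+$ with $\eta r = K_t(x) + (r-t) - 2\ve r$, and let $D = D(r, x, \eta)$ be the oracle from Lemma~\ref{lem:oracles}. Part~(i) of that lemma gives $K^D_r(x) \leq \eta r + O(\log r)$, verifying condition~(i) of Lemma~\ref{lem:point2} for sufficiently large $r$. Part~(ii) transfers the randomness hypothesis for $e$: $K^{x,D}_s(e) \geq K^x_s(e) - O(\log r) \geq s - C\log r - O(\log r)$ for all $s \leq r - t$.

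To verify condition~(ii) of Lemma~\ref{lem:point2}, fix $w \in B_{2^{-t}}(x)$ with $p_e w = p_e x$ and set $s := -\log\|x-w\| \in [t, r]$. Lemma~\ref{lem:lowerBoundOtherPoints} relativized to $D$, combined with inequality~(\ref{eq:OraclesDontIncrease}) and the transferred randomness of $e$, yields $K^D_r(w) \geq K^D_s(x) + (r-s) - O(\log r)$. Interpreting the yellow hypothesis via Lemma~\ref{lem:unichain}(ii) as $K_s(x) \geq K_t(x) + s - t - \ve r$, the cap gives $K^D_s(x) \geq \min\{\eta r, K_t(x) + s - t - \ve r\} - O(\log r)$. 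A case split on whether $s \leq r - \ve r$ (the yellow bound dominates) or $s > r - \ve r$ (the cap is active, so $K^D_s(x) = \eta r + O(\log r)$) then yields $K^D_r(w) \geq \eta r + \min\{\ve r, r - s - \ve r\}$.

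Applying Lemma~\ref{lem:point2} now gives $K^D_{r,r,t}(x\mid e, x) \leq K^D_{r,r,t}(p_e x\mid e, x) + 3\ve r + O(\log r)$. The standard chain rule, using that $p_e x$ is computable from $(x, e)$ at precision $r$, rearranges this as $K^D_{r,r,r,t}(x\mid p_e x, e, x) \leq 3\ve r + O(\log r)$. De-relativizing via Lemma~\ref{lem:oracles}(iv) adds overhead $K_r(x) - \eta r + O(\log r) = K_{r,t}(x\mid x) - (r-t) + 2\ve r + O(\log r)$ (using Lemma~\ref{lem:unichain}(ii) to identify $K_r(x) - K_t(x)$ with $K_{r,t}(x\mid x)$), producing the desired $K_{r,t}(x\mid x) - (r-t) + 7\ve r$ after absorbing the $O(\log r)$ terms into the $\ve r$ slack. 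The hardest point is verifying condition~(ii) of Lemma~\ref{lem:point2} near the transition $s \approx r - \ve r$, where the yellow lower bound on $K_s(x)$ is just barely tight; at that scale one has to carefully combine it with the cap-derived bound coming from Lemma~\ref{lem:MD:3.9} applied to $K^D$ in order to cover the full range $s \in [t, r]$ uniformly.
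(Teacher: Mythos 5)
Your proposal follows the paper's proof essentially step for step: the same oracle $D(r,x,\eta)$ capping $K_r(x)$, the same chain through Lemma~\ref{lem:lowerBoundOtherPoints}, Lemma~\ref{lem:oracles}(ii) and (\ref{eq:OraclesDontIncrease}) to get $K^D_r(w)\geq K^D_s(x)+(r-s)-O(\log r)$, the same application of Lemma~\ref{lem:point2} and de-relativization via Lemma~\ref{lem:oracles}(iv). The one quantitative slip is the choice $\eta r = K_t(x)+(r-t)-2\ve r$: in the branch where $K^D_s(x)=K_s(x)-O(\log r)$ and the minimum in condition~(ii) of Lemma~\ref{lem:point2} equals $\ve r$, the yellow hypothesis only yields $K^D_r(w)\geq \eta r+\ve r-O(\log r)$, which misses the required $\eta r+\ve r$ by $O(\log r)$; the paper takes $\eta r = K_t(x)+(r-t)-3\ve r$ precisely to leave this slack, and your final budget of $7\ve r$ comfortably accommodates that change. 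Also note that the correct dichotomy is on which term of $\min\{\eta r, K_s(x)\}$ from Lemma~\ref{lem:oracles}(i) is active, not on whether $s\leq r-\ve r$ (for $s\leq r-\ve r$ the cap may still bind), though both branches then close as you describe.
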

\begin{remark}
The lower bound for a precision $r$ to be sufficiently large depends only on $\ve$, $C$ and $\|x\|$. An informal estimate is that any precision $r$ greater than $M\left(\frac{C}{\ve}\right)^2$ is sufficient, for some fixed constant $M$.
\end{remark}
\begin{proof}
Let $\eta \in \Q_+$ such that $\eta r = K_t(x) + r - t - 3\ve r$. Let $D = D(r, x, \eta)$ be the oracle of Lemma \ref{lem:oracles}.  Note that, by Lemma \ref{lem:oracles}(i),
\begin{equation}\label{eq:projectionsIncreasingIntervals1}
K^D_r(x) \leq (\eta +\frac{\ve }{2})r
\end{equation}
Following the intuitive outline given above, it suffices to show that any $w\in B_{2^{-t}}(x)$ is either very close to $x$, or $K^D_r(w) \gg K^D_r(x)$. That is, we need to show that the conditions of Lemma \ref{lem:point2} are satisfied.

Suppose that $w\in B_{2^{-t}}(x)$ such that $p_e x = p_e w$. Let $s = -\log\|w - x\|$. We first deduce that
\begin{align}
K^D_r(w) &\geq K^D_s(x) + K^D_{r-s,r}(e\mid x) - O(\log r)\tag*{[Lemma \ref{lem:lowerBoundOtherPoints}]}\\
&\geq K^D_s(x) + K_{r-s,r}(e\mid x)  - O(\log r)\tag*{[Lemma \ref{lem:oracles}(ii)]}\\
&\geq K^D_s(x) + K^x_{r-s}(e)  - O(\log r)\tag*{[(\ref{eq:OraclesDontIncrease})]}\\
&\geq K^D_s(x) + r - s - O(\log r)\tag*{[Condition (i)]}.
\end{align}
By Lemma \ref{lem:oracles}(i), $K^D_s(x) = \min\{\eta r, K_s(x)\} + O(\log r)$. If $K^D_s(x) = K_s(x) - O(\log r)$, then 
\begin{align}
K^D_r(w) &\geq K_s(x) + r - s - O(\log r)\tag*{}\\
&= K_t(x) + K_{s,t}(x) + r - s  -O(\log r)\tag*{[Lemma \ref{lem:unichain}(ii)]}\\
&\geq K_t(x) + r - t - \ve r- O(\log r)\tag*{[Condition (ii)]}\\
&= (\eta + 2\ve)r - O(\log r)\tag*{}\\
&\geq (\eta + \ve) r\tag*{}.
\end{align}
For the other case, when $K^D_s(x) = \eta r - O(\log r)$,
\begin{align*}
K^D_r(w) &\geq  \eta r + r - s - O(\log r)\\
&\geq \eta r + r- s -\ve r
\end{align*}
Therefore,
\begin{equation}\label{eq:projectionsIncreasingIntervals2}
K^D_r(w) \geq \eta r + \min\{\ve r, r - s - \ve r\}.
\end{equation}

\smallskip

Inequalities (\ref{eq:projectionsIncreasingIntervals1}) and (\ref{eq:projectionsIncreasingIntervals2}) show that the conditions of Lemma \ref{lem:point2} are satisfied relative to $D$. Applying this lemma, relative to $D$, yields
\begin{equation}\label{eq:ProjIntervalIncreasingMain}
K_{r,r,t}^{D}(x \mid e, x) \leq K^{D}_{r,r, t}( p_e x \mid e, x) + 3\ve r + K(\ve,\eta)+O(\log r).
\end{equation}
Using symmetry of information arguments, it is routine to show that inequality (\ref{eq:ProjIntervalIncreasingMain}) implies the conclusion of the present lemma. Formally, 
\begin{align}
K^D_{r,r,r, t}(x\mid p_e x, e, x) &= K^D_{r,r, t}(x \mid e, x) - K^D_{r,r, t}(p_e x\mid e, x) + O(\log r)\tag*{[Lemma \ref{lem:unichain}]}\\
&\leq  3\ve r + K(\ve,\eta) + O(\log r)\label{eq:align1}.
\end{align}
To remove the presence of oracle $D$, we use the property (iv) of Lemma \ref{lem:oracles}.
\begin{align}
K_{r, r, r,t}(x\mid p_e x, e, x) &\leq K^D_{r,r,r,t}(x\mid p_e x, e, x) \tag*{[Lemma \ref{lem:oracles}(iv)]}\\
&\;\;\;\;\;\;\;\;\;\;+ K_r(x) - \eta r + O(\log r)\tag*{}\\
&\leq 3\ve r + K(\ve,\eta)+ K_r(x) - \eta r + O(\log r)\tag*{[(\ref{eq:align1})]}\\
&= K_{r,t}(x\mid x) -(r-t) + 6\ve r + K(\ve, \eta) + O(\log r)\label{eq:align2}.
\end{align}
Finally, by the symmetry of information, and the definition of $\eta$,
\begin{align*}
K(\eta, \ve) &= K(\ve) + K(\eta\mid \ve, K(\ve))\\
&\leq K(\ve) + O(\log r).
\end{align*}
Therefore, using (\ref{eq:align2}) and the assumption that $r$ is sufficiently large,
\begin{align*}
K_{r, r, r,t}(x\mid p_e x, e, x) &\leq K_{r,t}(x\mid x) - (r-t) + 6\ve r + K(\ve) + O(\log r)\\
&\leq K_{r,t}(x\mid x) - (r-t) + 7\ve r.
\end{align*}
\end{proof}

We now bound the complexity in the case that $[t,r]$ is approximately teal. The proof is largely identical to that of Lemma \ref{lem:projectionsIncreasingInterval}. However, since $[t, r]$ is teal, we only need a small amount of information to compute $x$ from its projection.
\begin{lem}\label{lem:projectionsDecreasingInterval}
Let $x \in \R^2$, $e \in \mathcal{S}^1$, $\epsilon \in \Q^+$, $C\in\N$, and $t < r\in\N$. Suppose that $r$ is sufficiently large and that the following are satisfied.
\begin{enumerate}
\item[\textup{(i)}] $K^x_s(e) \geq s - C\log r$, for all $s \leq r - t$.
\item[\textup{(ii)}] $K_{r,s}(x) \leq r - s + \ve r$, for all $t \leq s \leq r$.
\end{enumerate}
Then 
\begin{center}
$K_{r,r,r,t}(x\mid p_e x, e, x) \leq 7\epsilon r$.
\end{center}
\end{lem}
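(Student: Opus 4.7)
The plan is to follow the proof of Lemma \ref{lem:projectionsIncreasingInterval} very closely, with one simplification: because the teal hypothesis (ii) already guarantees $K_r(x) \le K_t(x) + (r-t) + \ve r$, the complexity of $x$ at precision $r$ is already essentially as low as it can be, and there is no need to artificially lower it using the oracle $D$ from Lemma \ref{lem:oracles}. This is what lets us strip off the $K_{r,t}(x\mid x)-(r-t)$ term that appears in the yellow conclusion.

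First I would choose $\eta r = K_r(x) - \ve r/2$, so that condition~(i) of Lemma \ref{lem:point2} holds with equality. Next I would verify condition~(ii): for any $w \in B_{2^{-t}}(x)$ with $p_e w = p_e x$ and $s = -\log\|w-x\|$, combining Lemma \ref{lem:lowerBoundOtherPoints} with hypothesis (i) of the present lemma applied at scale $r-s \le r-t$ gives
\[K_r(w) \;\ge\; K_s(x) + K^x_{r-s}(e) - O(\log r) \;\ge\; K_s(x) + (r-s) - O(\log r).\]
When $r-s$ is not too small, the teal hypothesis $K_s(x) \ge K_r(x) - (r-s) - \ve r$ yields the uniform bound $K_r(w) \ge K_r(x) - \ve r - O(\log r)$; when $s$ is near $r$, the Lipschitz estimate $K_s(x) \ge K_r(x) - 2(r-s) - O(\log r)$ from Lemma \ref{lem:MD:3.9} yields the sharper bound $K_r(w) \ge K_r(x) - (r-s) - O(\log r)$. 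A short case analysis on $r-s$ combines these two estimates to deliver
\[K_r(w) \;\ge\; \eta r + \min\{\ve r,\, r-s-\ve r\},\]
possibly after a harmless rescaling of the constant in front of $\ve$.

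Having verified both hypotheses, applying Lemma \ref{lem:point2} directly (no oracle required) gives
\[K^A_{r,r,t}(x \mid e, x) \;\le\; K^A_{r,r,t}(p_e x \mid e, x) + 3\ve r + K(\ve,\eta) + O(\log r).\]
Finally, applying the symmetry of information from Lemma \ref{lem:unichain}(i) to move $p_e x$ into the condition on the left hand side gives
\[K_{r,r,r,t}(x \mid p_e x, e, x) \;\le\; 3\ve r + K(\ve,\eta) + O(\log r),\]
and using $K(\ve,\eta) = K(\ve) + K(\eta \mid \ve, K(\ve)) + O(1) \le K(\ve) + O(\log r)$ together with the assumption that $r$ is sufficiently large yields the desired bound of $7\ve r$.

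The one delicate point I expect is the verification of condition~(ii) of Lemma \ref{lem:point2}: the teal hypothesis alone provides only the uniform lower bound $K_r(w) \ge K_r(x) - \ve r$, which does \emph{not} grow with $r-s$, while Lemma \ref{lem:point2} asks for a bound that grows like $\min\{\ve r, r-s-\ve r\}$. The key observation is that this gap is bridged, for $s$ near $r$, by the generic Lipschitz bound on $K_s(x)$, and for $s$ away from $r$ by tuning the ratio between the $\ve$ appearing in the hypothesis and the $\ve$ fed to Lemma \ref{lem:point2}. This is exactly the same bookkeeping as in the yellow case but with the oracle step suppressed, which is why the two proofs are nearly identical and why the teal conclusion does not pick up the $K_{r,t}(x\mid x) - (r-t)$ term.
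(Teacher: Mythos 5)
Your proposal has a genuine gap: the oracle $D$ from Lemma \ref{lem:oracles} is not an optional convenience here, and dropping it breaks the verification of condition (ii) of Lemma \ref{lem:point2}. With your choice $\eta r = K_r(x) - \ve r/2$, condition (ii) in its main branch demands
\[K_r(w)\;\geq\;\eta r + \ve r\;=\;K_r(x) + \tfrac{\ve r}{2},\]
i.e., a lower bound on $K_r(w)$ that is strictly \emph{above} $K_r(x)$. But the teal hypothesis only gives $K_s(x) \geq K_r(x) - (r-s) - \ve r$, hence via Lemma \ref{lem:lowerBoundOtherPoints} only $K_r(w) \geq K_r(x) - \ve r - O(\log r)$, which falls short by $\tfrac{3\ve r}{2}$. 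No rescaling of the $\ve$ fed to Lemma \ref{lem:point2} repairs this, since the requirement $K_r(w)\geq K_r(x)+\ve' r/2$ exceeds the available bound $K_r(x)-\ve r$ for every $\ve'>0$. The near-diagonal branch has the same problem: for $\tfrac{3\ve r}{4} \lesssim r-s \lesssim 2\ve r$ neither your Lipschitz estimate nor the teal estimate meets the threshold $\eta r + (r-s) - \ve r$. The underlying obstruction is that a teal interval permits points $w$ with $p_e w = p_e x$ whose complexity at precision $r$ is genuinely \emph{below} $K_r(x)$ (by up to $\ve r$); such $w$ would survive the enumeration in Lemma \ref{lem:point2} and cannot be ruled out.

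The paper's proof resolves this exactly by the step you suppressed: it sets $\eta r = K_r(x) - 3\ve r$ and passes to the oracle $D=D(r,x,\eta)$, so that condition (i) holds relative to $D$ (since $K^D_r(x)\leq \eta r + \ve r/2$) while the threshold $\eta r + \ve r = K_r(x) - 2\ve r$ in condition (ii) now sits safely below the bound $K_r(x) - \ve r$ coming from the teal hypothesis. The price is the extra $K_r(x)-\eta r = 3\ve r$ reintroduced when the oracle is removed via Lemma \ref{lem:oracles}(iv), which is why the paper lands at $7\ve r$ rather than the roughly $4\ve r$ your (unsound) argument would yield. The rest of your outline (the use of Lemmas \ref{lem:lowerBoundOtherPoints} and \ref{lem:unichain}, the two-case analysis on $s$, and the handling of $K(\ve,\eta)$) matches the paper, but the proof cannot be completed without the oracle step.
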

\begin{proof}
Let $\eta \in \Q$ such that $\eta r = K_r(x) - 3\ve r$. Let $D = D(r, x, \eta)$ be the oracle of Lemma \ref{lem:oracles}. Note that, by Lemma \ref{lem:oracles}(i),
\begin{equation}\label{eq:projectionsDecreasingIntervals1}
K^D_r(x) \leq (\eta +\frac{\ve r}{2})
\end{equation}

Let $w\in B_{2^{-t}}(x)$ such that $p_e x = p_e w$. Let $s = -\log\|w - x\|$. Our goal is to show that the complexity of $w$ is sufficiently large so that we may apply Lemma \ref{lem:point2}. With that in mind, we first deduce that
\begin{align}
K^D_r(w) &\geq K^D_s(x) + K^D_{r-s,r}(e\mid x) - O(\log r)\tag*{[Lemma \ref{lem:lowerBoundOtherPoints}]}\\
&\geq K^D_s(x) + K_{r-s,r}(e\mid x)  - O(\log r)\tag*{[Lemma \ref{lem:oracles}(ii)]}\\
&\geq K^D_s(x) + r - s - O(\log r)\tag*{[Condition (i)]}.
\end{align}

By Lemma \ref{lem:oracles}(i) $K^D_s(x) = \min\{\eta r, K_s(x)\} + O(\log r)$. If $K^D_s(x) = K_s(x) - O(\log r)$, then 
\begin{align}
K^D_r(w) &\geq K_s(x) + r - s - O(\log r)\tag*{}\\
&\geq K_s(x) + K_{r,s}(x) - \ve r -O(\log r)\tag*{[Condition (ii)]}\\
&= K_r(x) - \ve r- O(\log r)\tag*{[Lemma \ref{lem:unichain}]}\\
&\geq K_r(x) - 2\ve r\tag*{}\\
&= (\eta + \ve) r\tag*{}.
\end{align}
Alternatively, if $K^D_s(x) = \eta r - O(\log r)$,
\begin{align*}
K^D_r(w) &\geq  \eta r + r - s- O(\log r)\\
&\geq \eta r + r- s -\ve r.
\end{align*}
Therefore,
\begin{equation}\label{eq:projectionsDecreasingIntervals2}
K^D_r(w) \geq \eta r + \min\{\ve r, r - s - \ve r\}.
\end{equation}

Inequalities (\ref{eq:projectionsDecreasingIntervals1}) and (\ref{eq:projectionsDecreasingIntervals2}) show that the conditions of Lemma \ref{lem:point2} are satisfied relative to $D$. Applying this yields
\begin{equation}\label{eq:ProjIntervalDecreasingMain}
K_{r,r,t}^{D}(x \mid e, x) \leq K^{D}_{r,r, t}( p_e x \mid e, x) + 3\ve r + K(\ve,\eta)+O(\log r).
\end{equation}
To complete the proof, we use symmetry of information and the properties of our oracle to show that (\ref{eq:ProjIntervalDecreasingMain}) implies the conclusion of our lemma. Formally
\begin{align}
K^D_{r,r,r, t}(x\mid p_e x, e, x) &= K^D_{r,r, t}(x \mid e, x) - K^D_{r,r, t}(p_e x\mid e, x) + O(\log r)\tag*{[Lemma \ref{lem:unichain}]}\\
&\leq  3\ve r + K(\ve,\eta) + O(\log r)\label{eq:align3}.
\end{align}

Again using the properties of our oracle $D$,
\begin{align}
K_{r, r, r,t}(x\mid p_e x, e, x) &\leq K^D_{r,r,r, t}(x\mid p_e x, e, x)\tag*{[Lemma \ref{lem:oracles}(iv)]}\\
&\;\;\;\;\;\;\;\;\;\; + K_r(x) - \eta r + O(\log r)\tag*{}\\
&\leq 3\ve r + K(\ve,\eta) +K_r(x) - \eta r + O(\log r)\tag*{[(\ref{eq:align3})]}\\
&= 6\ve r + K(\ve, \eta) + O(\log r)\label{eq:align4}.
\end{align}

Therefore, combining (\ref{eq:align4}), and the fact that 
\begin{equation}
K(\eta\mid \ve) \leq O(\log r)\tag*{},
\end{equation}
we see that
\begin{align*}
K_{r, r, r,t}(x\mid p_e x, e, x) &\leq 6\ve r + K(\ve,\eta) + O(\log r)\\
&= 6\ve r + K(\ve) + O(\log r)\\
&\leq 7\ve r,
\end{align*}
and proof is complete.
\end{proof}

With Lemmas \ref{lem:projectionsIncreasingInterval} and \ref{lem:projectionsDecreasingInterval} it is not difficult to extend to the case when $t,r$ are not necessarily rational. Additionally, the following Corollary helps prevent the error terms from cluttering the notation.
\begin{cor}\label{cor:projectionYellowTeal}
Let $x\in\R^2, e\in\mathcal{S}^1, \ve\in\Q_+$, $C\in\N$ and $t<r\in\R_+$. Suppose that $r$ is sufficiently large and $K^x_s(e) \geq s - C\log r$, for all $s\leq r-t$. Then the following hold.
\begin{enumerate}
\item If $[t,r]$ is yellow, 
\begin{center}
$K_{r,r,r,t}(x\mid p_e x, e,x) \leq K_{r,t}(x\mid x) - (r-t) + \ve r $.
\end{center}
\item If $[t,r]$ is teal, 
\begin{center}
$K_{r,r,r,t}(x\mid p_e x, e,x) \leq \ve r$.
\end{center}
\end{enumerate}
\end{cor}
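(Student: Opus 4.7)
The plan is to reduce the corollary to the integer-precision Lemmas \ref{lem:projectionsIncreasingInterval} and \ref{lem:projectionsDecreasingInterval} by rounding $t$ and $r$ up to integers and absorbing all rounding losses into the $\ve r$ slack. Set $t' := \lceil t \rceil$ and $r' := \lceil r \rceil$; by the author's convention on non-integral precisions one has $K_r(x) = K_{r'}(x)$ and $K_t(x) = K_{t'}(x)$. Fix a rational $\ve' \leq \ve/10$ and write $f$ for the piecewise linear extension of $s \mapsto K_s(x)$.

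First I would verify that hypothesis~(i) of the two lemmas is inherited. From $K^x_s(e) \geq s - C \log r$ for all real $s \leq r - t$, monotonicity of Kolmogorov complexity in the precision yields $K^x_s(e) \geq s - C' \log r'$ for all integer $s \leq r' - t'$, for some $C' = C + O(1)$ absorbing the at-most-unit shift $r' - t' - (r - t)$. This is hypothesis~(i) of both lemmas.

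For the yellow case, the assumption $f(c) - f(t) \geq c - t$ on $[t, r]$, combined with the universal bound $|f(b) - f(a)| \leq 2(b - a) + O(\log r)$ from Lemma~\ref{lem:MD:3.9}, gives for every integer $s \in [t', r']$
\[K_s(x) - K_{t'}(x) = f(s) - f(t') \geq (s - t) - 2(t' - t) - O(\log r) \geq s - t' - O(\log r).\]
By Lemma~\ref{lem:unichain}(ii) this translates to $K_{s, t'}(x \mid x) \geq s - t' - \ve' r'$ for $r$ sufficiently large, which is exactly condition~(ii) of Lemma~\ref{lem:projectionsIncreasingInterval}. Applying that lemma with parameters $(t', r', \ve')$ produces
\[K_{r', r', r', t'}(x \mid p_e x, e, x) \leq K_{r', t'}(x \mid x) - (r' - t') + 7 \ve' r',\]
and since $|r - r'|, |t - t'| \leq 1$, replacing $r', t'$ by $r, t$ throughout costs only $O(\log r)$, which is within $\ve r$ for large $r$. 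The teal case is symmetric: $f(r) - f(c) \leq r - c$ on $[t, r]$ gives, for integer $s \in [t', r']$, $K_{r'}(x) - K_s(x) = f(r') - f(s) \leq (r - s) + O(\log r) \leq r' - s + \ve' r'$, which by Lemma~\ref{lem:unichain}(ii) becomes condition~(ii) of Lemma~\ref{lem:projectionsDecreasingInterval}; that lemma yields the bound $7\ve' r'$, and rounding back to $r, t$ again costs only $O(\log r)$.

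The only real obstacle in either case is the mechanical bookkeeping of $O(\log r)$ errors introduced by rounding $t, r$ to integers and by the piecewise-linear interpolation of $f$ on the fractional tails $[t, t']$ and $[r, r']$. Each such error is routinely absorbed into $\ve r$ by the assumption that $r$ is sufficiently large (depending on $\ve$ and $C$), so no new ideas beyond a direct appeal to Lemmas~\ref{lem:projectionsIncreasingInterval} and~\ref{lem:projectionsDecreasingInterval} are required.
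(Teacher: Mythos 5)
Your proposal is correct and follows essentially the same route as the paper's proof: round $t,r$ up to integers, check that hypotheses (i) and (ii) of Lemmas \ref{lem:projectionsIncreasingInterval} and \ref{lem:projectionsDecreasingInterval} survive the rounding with slack $\ve'=\ve/10$ via the piecewise-linear function $f$ and Lemma \ref{lem:unichain}, then apply the lemmas and absorb the $O(\log r)$ rounding errors into $\ve r$. Your explicit verification that hypothesis (i) persists on the slightly enlarged range $s\leq r'-t'$ is a detail the paper glosses over, but it changes nothing substantive.
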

\begin{proof}
We begin by noting that condition (i) of Lemmas \ref{lem:projectionsIncreasingInterval} and \ref{lem:projectionsDecreasingInterval} hold immediately. 

We assume that $r$ is large enough to satisfy Lemmas \ref{lem:projectionsIncreasingInterval} and \ref{lem:projectionsDecreasingInterval} with respect to $\ve^\prime = \ve / 10$. Let $t^\prime = \lceil t \rceil$ and $r^\prime = \lceil r \rceil$. 

First assume that $[t, r]$ is yellow. It suffices to show that condition (ii) of Lemma \ref{lem:projectionsIncreasingInterval} holds. For every $t^\prime \leq s \leq r^\prime$,
\begin{align}
K_{s,t^\prime}(x\mid x) &= K_s(x) - K_{t^\prime}(x) - O(\log t^\prime)\tag*{[Lemma \ref{lem:unichain}]}\\
&= f(s) - f(t^\prime) - O(\log t^\prime)\tag*{}\\
&\geq f(s) - f(t) - O(\log t^\prime)\tag*{}\\
&\geq s - t - \ve^\prime r\tag*{},
\end{align} 
and the claim follows by Lemma \ref{lem:MD:3.9}.

Now assume that $[t, r]$ is teal. It suffices to show that condition (ii) of Lemma \ref{lem:projectionsDecreasingInterval} holds. For every $t^\prime \leq s \leq r^\prime$,
\begin{align*}
K_{r^\prime,s}(x\mid x) &= K_{r^\prime}(x) - K_{s}(x) + O(\log r^\prime)\\
&= f(r^\prime) - f(s) + O(\log r^\prime)\\
&\leq f(r) - f(s) + O(\log r^\prime)\\
&\leq r - s + \ve^\prime r.
\end{align*} 
Thus both conditions (ii) of Lemma \ref{lem:projectionsDecreasingInterval} hold. Applying it, and using Lemma \ref{lem:MD:3.9} concludes the proof.
\end{proof}

\subsection{Admissible partitions}
Corollary \ref{cor:projectionYellowTeal} gives tight bounds on yellow and teal intervals. Thus, if we had a partition of $[0,r]$ consisting solely of yellow and teal intervals, we could sum the bounds on each interval to get a bound on $K_r(x\mid p_e x, e)$. This is easily done via repeated applications of the symmetry of information. In order to avoid the error terms from piling up, we require the partition to have a constant number of intervals.

We say that a partition $\mathcal{P}=\{[a_i, a_{i+1}]\}_{i=0}^k$ of closed intervals with disjoint interiors is \textbf{\textit{$(M,r,t)$-admissible}} if $[0, r] = \cup_i [a_i, a_{i+1}]$, and it satisfies the following conditions.
\begin{itemize}
\item[\textup{(A1)}] $k \leq M$,
\item[\textup{(A2)}] $[a_i, a_{i+1}]$ is either yellow or teal,
\item[\textup{(A3)}] $a_{i+1} \leq a_i + t$.
\end{itemize}

Given an admissible partition containing $O(1)$ intervals, we use the symmetry of information and Corollary \ref{cor:projectionYellowTeal} to bound the complexity $K_r(x\mid p_e x, e)$.
\begin{lem}\label{lem:boundGoodPartitionProjection}
Suppose that $x \in \R^2$, $e \in \mathcal{S}^1$, $\ve\in \Q^+$, $C\in\N$, $t, r \in \N$ satisfy (P1)-(P3). If $\mathcal{P} = \{[a_i, a_{i+1}]\}_{i=0}^k$ is an  $(3C,r,t)$-admissible partition, and $r$ is sufficiently large, then
\begin{align*}
K_{r}(x \mid p_e x, e) &\leq \ve r + \sum\limits_{i\in \textbf{Bad}} K_{a_{i+1}, a_{i}}(x \mid x) - (a_{i+1} - a_i),
\end{align*}
where
\begin{center}
\textbf{Bad} $=\{i\leq k\mid [a_i, a_{i+1}] \notin T\}$.
\end{center}
\end{lem}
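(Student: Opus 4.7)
The plan is to telescope $K_r(x \mid p_e x, e)$ along the partition $\mathcal{P}$ using the chain rule (a relativized form of Lemma \ref{lem:unichain}(ii) where the oracle encodes $(p_e x, e)$), and then apply Corollary \ref{cor:projectionYellowTeal} to each of the $O(1)$ resulting pieces. Since $a_0 = 0$ and $a_{k+1} = r$, iterating the symmetry of information along the partition yields
\begin{align*}
K_r(x \mid p_e x, e) = K_{a_0}(x \mid p_e x, e) + \sum_{i=0}^{k} K_{a_{i+1}, a_i}(x \mid p_e x, e, x) + O(k \log r),
\end{align*}
where the error accumulates from $k+1 \leq 3C+1$ applications of the chain rule and the base term $K_{a_0}(x \mid p_e x, e) = O(1)$ is negligible.

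Next, I would bound each summand using Corollary \ref{cor:projectionYellowTeal}, applied with the interval $[a_i, a_{i+1}]$ playing the role of $[t, r]$ and with a fresh error parameter $\ve' = \ve/(6C)$. The hypothesis on $e$ is inherited from (P3): by (A3), $a_{i+1} - a_i \leq t$, so for all $s \leq a_{i+1} - a_i$ we have $K^x_s(e) \geq s - C \log s \geq s - C \log a_{i+1}$, and by (A2) each interval is yellow or teal. Since providing $p_e x$ and $e$ in full is at least as informative as providing them at precision $a_{i+1}$, we have
\begin{align*}
K_{a_{i+1}, a_i}(x \mid p_e x, e, x) \leq K_{a_{i+1}, a_{i+1}, a_{i+1}, a_i}(x \mid p_e x, e, x) + O(\log r).
\end{align*}
The corollary then contributes at most $\ve' a_{i+1} \leq \ve' r$ for each teal interval, and at most $K_{a_{i+1}, a_i}(x \mid x) - (a_{i+1} - a_i) + \ve' r$ for each yellow (``Bad'') interval.

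Summing over all $k + 1 \leq 3C + 1$ intervals, the aggregate $\ve'$-error is at most $(3C+1) \ve' r \leq \ve r / 2$, and the cumulative $O(k \log r)$ error is absorbed because $r$ is sufficiently large, producing the claimed bound. The main technical subtlety I anticipate is ensuring the corollary actually applies to every interval, since it requires $a_{i+1}$ to exceed a threshold depending on $\ve'$ and $C$: for the $O(1)$ intervals whose right endpoint falls below this threshold, I would instead invoke the trivial estimate $K_{a_{i+1}, a_i}(x \mid p_e x, e, x) \leq 2(a_{i+1} - a_i) + O(\log r) = O(1)$ from Lemma \ref{lem:MD:3.9}, which contributes only a constant to the total sum. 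Combining these ingredients yields
\begin{align*}
K_r(x \mid p_e x, e) \leq \ve r + \sum_{i \in \textbf{Bad}} \bigl( K_{a_{i+1}, a_i}(x \mid x) - (a_{i+1} - a_i) \bigr),
\end{align*}
as required.
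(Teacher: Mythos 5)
Your proposal is correct and follows essentially the same route as the paper's proof: telescope $K_r(x\mid p_e x, e)$ over the partition via repeated symmetry of information, apply Corollary \ref{cor:projectionYellowTeal} with error parameter $\ve/(6C)$ to each yellow or teal interval (checking its hypothesis on $e$ via (P3) and (A3)), and absorb the contribution of the $O(1)$ low-precision intervals along with the $O(k\log r)$ chain-rule errors using (A1) and the assumption that $r$ is large. The only cosmetic difference is that the paper merges all intervals with right endpoint below $\log r$ into a single block $[0,\log r]$ bounded by $K_{\log r}(x)=O(\log r)$, whereas you bound each small interval trivially; both are fine (though your "$2(a_{i+1}-a_i)+O(\log r)=O(1)$" should read $O(\log r)$, which is still negligible).
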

\begin{proof}
To begin, we assume that, if $[a_i, a_{i+1}]\in\mathcal{P}$ and $a_{i+1} \geq \log r$, then $a_{i+1}$ is large enough so that Corollary \ref{cor:projectionYellowTeal} holds with respect to the choices of $x$, $e$, $C$, $a_i$, $a_{i+1}$ and $\ve^\prime = \ve/6C$. Since $r$ is assumed to be sufficiently large, this assumption is valid. 

With this in mind, we modify $\mathcal{P}$ by removing all intervals with right endpoints less than $\log r$, and adding the interval $[0, \log r]$. We relabel the intervals so that $a_1 = \log r$, and  $a_i >\log r$ for all $1< i\leq k$.

For any teal interval $[a_i,a_{i+1}]\in \mathcal{P}$ such that  $a_{i}\geq \log r$, by Corollary \ref{cor:projectionYellowTeal}, with respect to $\ve^\prime$,
\begin{equation}\label{eq:boundGoodPartitionProjection1}
K_{a_{i+1},a_{i+1},a_{i+1}, a_i}(x \mid p_e x, e,x) \leq \ve^\prime a_{i+1}.
\end{equation}
Similarly, for any yellow interval $[a_i,a_{i+1}]\in \mathcal{P}$ such that  $a_{i}\geq \log r$, by Corollary \ref{cor:projectionYellowTeal}, with respect to $\ve^\prime$,
\begin{equation}\label{eq:boundGoodPartitionProjection2}
K_{a_{i+1},a_{i+1},a_{i+1}, a_i}(x \mid p_e x, e)  \leq K_{a_{i+1},a_i}(x\mid x) - (a_{i+1}-a_i) + \ve^\prime a_{i+1}.
\end{equation}

By repeated applications of the symmetry of information, and inequalities (\ref{eq:boundGoodPartitionProjection1}) and (\ref{eq:boundGoodPartitionProjection2}),
\begin{align}
K_{r-t}(x \mid p_e x, e) &\leq O(\log r) + \sum\limits_{i = 0}^k K_{a_{i+1}, a_{i+1}, a_{i+1},a_{i}}(x \mid p_e x, e, x)\tag*{[(A1)]}\\
&\leq O(\log r) + K_{\log r}(x) \tag*{}\\
&\;\;\;\;\;\;\;\;\;\;\;\;\;\;\;\;+ \sum\limits_{i = 1}^k K_{a_{i+1}, a_{i+1}, a_{i+1},a_{i}}(x \mid p_e x, e, x)\tag*{}\\
&\leq \ve^\prime r +\sum\limits_{I_i\in T} K_{a_{i+1}, a_{i+1}, a_{i+1},a_{i}}(x \mid p_e x, e,x)\tag*{[r is large]} \\
&\;\;\;\;\;\;\;\;\;\;\;\;\;\;\;\; + \sum\limits_{i\in \textbf{Bad}} K_{a_{i+1}, a_{i+1}, a_{i+1},a_{i}}(x \mid p_e x, e,x)\tag*{}\\
&\leq \ve r+\sum\limits_{i\in \textbf{Bad}} K_{a_{i+1}, a_{i}}(x\mid x)- (a_{i+1} - a_i)\tag*{[(\ref{eq:boundGoodPartitionProjection1}), (\ref{eq:boundGoodPartitionProjection2})]},
\end{align}
and the proof is complete.
\end{proof}

We now show that admissible partitions do exist for any subinterval $[a,b]$ of $[0,r]$.
\begin{lem}\label{lem:goodPartitionProjection}
Let $x\in\R^2$, $r, C\in\N$ and $\frac{r}{C}\leq t < r$. For any $0\leq a < b \leq r$, there is an $(3C,r,t)$-admissible partition of $[a,b]$.
\end{lem}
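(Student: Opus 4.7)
My plan is a two-step subdivision. First I cut $[a,b]$ into enough short pieces to meet the length constraint (A3), and then I cut each short piece at a single carefully chosen point to obtain two pieces that are guaranteed to be teal and yellow respectively.

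The first step is immediate: since $b-a \le r \le Ct$, I can choose $a = a_0 < a_1 < \cdots < a_k = b$ with $k \le C$ and $a_i - a_{i-1} \le t$ for every $i$ (e.g.\ by taking $a_i = a + it$ for $i < k$ and $a_k = b$).

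For the second step, the key observation is that tealness and yellowness can be restated in terms of the ``excess function'' $g(s) := f(s) - s$. Unraveling the definitions, an interval $[u,v]$ is teal iff $g(v) = \min_{s\in[u,v]} g(s)$, and yellow iff $g(u) = \min_{s\in[u,v]} g(s)$. Since $f$, and hence $g$, is continuous (piecewise linear) on $\R_+$, each subinterval $[a_{i-1}, a_i]$ admits some minimizer $c_i^* \in [a_{i-1}, a_i]$ of $g$. By the characterization above, $[a_{i-1}, c_i^*]$ is then automatically teal and $[c_i^*, a_i]$ is automatically yellow. If $c_i^*$ happens to coincide with an endpoint of $[a_{i-1}, a_i]$, one of the two sub-pieces is degenerate and is simply discarded.

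Concatenating the non-degenerate sub-pieces in left-to-right order produces a partition of $[a,b]$ into at most $2k \le 2C \le 3C$ intervals, each of length at most $t$ and each either teal or yellow, which is exactly a $(3C, r, t)$-admissible partition in the sense of (A1)--(A3) (adapted from $[0,r]$ to $[a,b]$). There is no genuine obstacle here; the only mild bookkeeping issue is handling the boundary case where the minimizer $c_i^*$ lies at an endpoint, which is dealt with by omission as just described.
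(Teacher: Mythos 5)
Your proposal is correct, and it takes a genuinely different route from the paper. The paper builds the partition greedily: starting from $a$ it repeatedly takes the largest $d\leq\min\{c+t,b\}$ with $[c,d]$ yellow or teal, and then controls the number of intervals by a case analysis showing that any two consecutive intervals of the greedy partition together have length exceeding $t$. You instead pre-subdivide $[a,b]$ into at most $C$ pieces of length at most $t$ (possible since $b-a\leq r\leq Ct$) and then observe that any interval splits into a teal piece followed by a yellow piece at a minimizer of the excess function $g(s)=f(s)-s$. Your characterization is exactly right: $[u,v]$ is teal iff $g(v)=\min_{[u,v]}g$ and yellow iff $g(u)=\min_{[u,v]}g$, so a global minimizer $c^*$ of $g$ on $[u,v]$ makes $[u,c^*]$ teal and $[c^*,v]$ yellow, with the degenerate endpoint cases discarded as you say; the count $2C\leq 3C$ and the length bound (A3) are immediate. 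Your argument is shorter and makes the underlying structural fact (every short interval is ``teal then yellow'' about the $g$-minimizer) explicit, whereas the paper's greedy procedure has the advantage that essentially the same procedure is reused, with an extra constraint on the cut points, in the construction of the more refined partition of Lemma~\ref{lem:admissiblenoredgreenblue}. For the statement as given, both arguments are complete.
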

\begin{proof}
Define the following procedure for every $a\leq c \leq b$.

\bigskip

\noindent \textbf{Procedure}$(c,b)$:
\begin{itemize}
\item If $c = b$, halt.
\item Let 
\begin{center}
$d = \max\limits_{d\leq c+t, b} \{[c,d] \in Y\cup T\}$.
\end{center}
\item Add $[c,d]$ to $\mathcal{P}$.
\item Call Procedure$(d, b)$.
\end{itemize}
It suffices to show that the partition $\mathcal{P}$ produced by Procedure$(a,b)$ is $3C$-admissible. 

We begin by showing that the algorithm is well defined and eventually halts. Let $a\leq c < b$. By the definition of $f$, it is clear that the set
\begin{center}
$\{d\mid d\leq c+t, \,[c,d] \in Y\cup T\}$
\end{center}
is non-empty and closed, and so the algorithm is well defined. Moreover, $d \geq \max\{\lceil c\rceil, \lfloor c\rfloor+1\}$, and so the algorithm halts. Thus, the partition $\mathcal{P}$ produced by Procedure$(a, b)$ satisfies conditions (A2) and (A3) of good partitions.

To complete the proof, we now prove that $\mathcal{P} = \{[a_i, a_{i+1}]\}_{i=0}^k$ satisfies (A1), i.e., $\mathcal{P}$ contains at most $3C+1$ intervals. It suffices to show that $a_{i+2} > a_i + t$, for every $i < k-1$. Let $[a_i, a_{i+1}]\in\mathcal{P}$. If it is yellow, then we must have $a_{i+1} = a_i + t$. So, assume that $[a_{i}, a_{i+1}]$ is teal. If $[a_{i+1}, a_{i+2}]$ is also teal, then by the definition of our algorithm, $a_{i+2} > a_i + t$. If $[a_{i+1}, a_{i+2}]$ is yellow, then $a_{i+2} = a_{i+1} +t$, and the proof is complete.
\end{proof}

\subsection{Proof of Theorem \ref{thm:projectionMainTheorem}}\label{subsec:proofTheoremProjection}
Unfortunately, Lemmas \ref{lem:boundGoodPartitionProjection}  and \ref{lem:goodPartitionProjection} are not enough to imply Theorem \ref{thm:projectionMainTheorem}. The partition constructed in Lemma \ref{lem:goodPartitionProjection} is too naive to give the bounds required. Intuitively, we need our partition to minimize the contribution of the ``bad" intervals (the intervals which are not teal). We accomplish this by giving a finer classification of types intervals.

We introduce three more colors of intervals: red, blue and green. An interval $[a,b]$ is called \textbf{\textit{red}} if $f$ is strictly increasing on $[a,b]$. An interval $[a,b]$ is called \textbf{\textit{blue}} if $f$ is constant on $[a,b]$. An interval $[a,b]$ is \textbf{\textit{green}} if it is both yellow and teal, and its length is at most $t$. That is,
\[
[a, b] \text{ is }
\begin{cases} 
\text{red} 
    & \text{ if }f(d) - f(c) > d-c \text{ for every } a \leq c < d \leq b \\
\text{blue}
    & \text{ if }f(a) = f(c)  \text{ for every } a \leq c \leq b\\
\text{green} &\text{ if } [a,b] \in Y, \, b-a \leq t, \text{ and } f(b) -f(a) = b-a.
\end{cases}
\]
From the definition of $f$, it is clear that every $s\in [0,r]$ is covered by a red, blue or green interval. We also make the following observation.
\begin{obs}\label{obs:redbluegreen}
If $[a, b]$ is red and $[b, c]$ is blue, then $b$ is contained in the interior of a green interval.
\end{obs}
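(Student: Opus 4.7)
The plan is to first pin down the position of $b$, and then exhibit an explicit green interval straddling $b$. The key structural fact to exploit is that $f$ is piecewise linear with breakpoints at the integers and that, on each integer interval $[n, n+1]$, the slope of $f$ is $0$, $1$, or an integer $\geq 2$. Because $[a, b]$ is red, the slope on every integer sub-interval meeting $[a, b]$ must be strictly greater than $1$, hence $\geq 2$; because $[b, c]$ is blue, the slope on every integer sub-interval meeting $[b, c]$ must be $0$. If $b$ were not an integer, both constraints would have to hold simultaneously on the single integer interval containing $b$ in its interior, which is impossible. Thus $b \in \N$, and $b \geq 1$ since the red interval $[a, b]$ is non-degenerate.

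Let $M$ denote the slope of $f$ on $[b-1, b]$; by the preceding paragraph, $M \geq 2$. The candidate green interval will be of the form $[a', b'] = [b - \delta,\, b + \delta(M-1)]$ for a sufficiently small $\delta > 0$. Specifically, I pick
\[0 < \delta < \min\Big\{1,\; \frac{c - b}{M - 1},\; \frac{t}{M}\Big\},\]
which is legal since $c > b$, $M \geq 2$, and $t > 0$. With this choice, $a' \in (b - 1, b)$, so $f$ is linear with slope $M$ on $[a', b]$; and $b < b' \leq c$, so $f$ is constant on $[b, b']$.

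Verifying that $[a', b']$ is green then reduces to direct arithmetic. First, $f(b') - f(a') = M\delta$ and $b' - a' = \delta + \delta(M - 1) = M\delta$, so $f(b') - f(a') = b' - a'$; the length $M\delta$ is at most $t$ by the choice of $\delta$. For the yellow condition: when $z \in [a', b]$, $f(z) - f(a') = M(z - a') \geq z - a'$ because $M \geq 1$; and when $z \in [b, b']$, $f(z) - f(a') = M\delta = b' - a' \geq z - a'$. Hence $[a', b']$ is yellow and satisfies the two remaining clauses of the definition, so it is green, and since $a' < b < b'$, $b$ lies in its interior. The only conceptually non-trivial step in this plan is the integrality of $b$; the rest is the bookkeeping needed to balance $\delta$ and $\delta(M - 1)$ on the two sides of the breakpoint.
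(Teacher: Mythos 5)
Your proof is correct. The paper actually states Observation~\ref{obs:redbluegreen} without any proof, so there is nothing to compare line-by-line; the closest analogue is the paper's proof of Observation~\ref{obs:redgreenblue}, which locates a green interval via a soft intermediate-value argument (``for all sufficiently small $\delta$ there is some $s\in(a,b)$ such that\ldots''), whereas you argue explicitly through the piecewise-linear structure of $f$. Your two steps both check out: (i) if $b$ were not an integer, the single integer interval $(\lfloor b\rfloor,\lfloor b\rfloor+1)$ containing $b$ in its interior would have to carry slope $>1$ (from redness of $[a,b]$, using that $[\max(a,\lfloor b\rfloor),b]$ has nonempty interior) and slope $0$ (from blueness of $[b,c]$), a contradiction; (ii) with $M\ge 2$ the slope on $[b-1,b]$, the interval $[b-\delta,\,b+\delta(M-1)]$ has $f$-increment $M\delta$ equal to its length, is yellow (slope $M\ge 1$ on the left piece, flat on the right piece where the yellow inequality is slack), and has length $M\delta\le t$ by your choice of $\delta$, so it meets the paper's displayed definition of green and contains $b$ in its interior. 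One cosmetic remark: your blanket claim that redness forces slope $>1$ on ``every integer sub-interval meeting $[a,b]$'' should be restricted to sub-intervals whose intersection with $[a,b]$ has nonempty interior, but the only two instances you actually invoke satisfy this, so the argument stands.
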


Let $G$ denote the set of all green intervals. We remove any interval $I_1\in G$ such that $I_1 \subseteq I_2$ for some $I_2\in G$. For every $[a,b], [c,d] \in G$ such that $a < c < b < d$, remove $[c,d]$ from $G$, and add $[b, d]$ to $G$. Thus, after this process is complete, any two intervals of $G$ have disjoint interiors. 

\medskip

\noindent \textbf{Construction of} $\hat{P} = P(x,r,t)$: We now construct a partition $\hat{P} = P(x,r,t)$ of $[0, r]$ of intervals with disjoint interiors, each of which is either red, blue or green. To begin, we add all intervals in $G$ to $\hat{P}$. Let $[b, c]\subseteq [0,r]$ be an interval whose interior is disjoint from $\hat{P}$ and such that $[a,b], [c,d]\in \hat{P}$, for some $a, d$. Then if
\[
[b, c] \text{ is }
\begin{cases} 
\text{red} 
    & \text{ we add } [b,c] \text{ to } \hat{P} \\
\text{blue}
    & \text{ we add } [b,c]  \text{ to } \hat{P}\\
[b,s] \cup [s,c] &\text{ we add } [b,s], [s,c] \text{ to } \hat{P}.
\end{cases}
\]
Note that in the third case, $[b,s]$ is blue and $[s,c]$ is red. It is clear that the intervals of $\hat{P}$ have disjoint interiors, and their union is $[0, r]$.

A \textbf{\textit{red-green-blue}} sequence in $\hat{P}$ is a sequence $I_0,\ldots, I_{n+1}$ of consecutive intervals in $\hat{P}$ such that $I_0$ is red, $I_1,\ldots, I_n$ are green, and $I_{n+1}$ is blue.
\begin{obs}\label{obs:redgreenblue}
Suppose $I_0,\ldots, I_{n+1}$ is a a red-green-blue sequence in $\hat{P}$. Then the total length of $I_1,\ldots, I_n$ is at least $t$.
\end{obs}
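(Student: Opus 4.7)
The plan is to argue by contradiction: assume $L := a_{n+1} - a_1 < t$, and produce an original green interval $J$ whose interior strictly enters the red $\hat{P}$-interval $I_0$. The contradiction will come from the fact that the two disjoint-ification operations on $G$, namely subset removal and the trimming step $\{[a,b],[c,d]\}\mapsto\{[a,b],[b,d]\}$, both preserve $\bigcup G$, so the green part of $\hat{P}$ covers exactly $\bigcup G$ and in particular is disjoint from the interior of every red interval of $\hat{P}$.

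Since $I_0$ is red, every integer piece meeting $I_0$ has slope $\geq 2$; let $M \geq 2$ be the slope of the rightmost such piece, and let $\theta > 0$ be the length of that slope-$M$ stretch inside $I_0$. Since $I_{n+1}$ is blue, let $\sigma > 0$ be the length of the leftmost piece of $I_{n+1}$, on which $f$ is constant. A composition of the yellow and teal properties of the individual $I_1, \ldots, I_n$, in the spirit of Corollary \ref{cor:projectionYellowTeal}, shows that $[a_1, a_{n+1}]$ is itself both yellow and teal, and in particular $f(a_{n+1}) - f(a_1) = L$. Now pick any $\delta$ with $0 < \delta \leq \min\{\theta,\,\sigma/(M-1),\,(t-L)/M\}$, and set
\[
J = [a_1 - \delta,\; a_{n+1} + (M-1)\delta].
\]
Then $|J| = L + M\delta \leq t$, and the coefficient $(M-1)\delta$ on the right is forced by the endpoint equality $f(\max J) - f(\min J) = |J|$: yellow at $\max J$ demands $\gamma \leq (M-1)\delta$ and teal at $\min J$ demands $\gamma \geq (M-1)\delta$, where $\gamma$ is the amount we extend on the blue side.

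To finish, I would check yellow and teal at all interior points of $J$ by splitting into three subregions: $[a_1 - \delta, a_1]$ where $f$ has slope $M$, $[a_1, a_{n+1}]$ where yellow and teal come from the composition above, and $[a_{n+1}, a_{n+1} + (M-1)\delta]$ where $f$ is constant. Each case reduces to an elementary inequality that holds because of the way $\gamma = (M-1)\delta$ was chosen, so $J \in G$. Then $(a_1 - \delta, a_1) \subseteq J \subseteq \bigcup G$ is a nonempty open subset of the interior of $I_0$ that simultaneously lies in some green interval of $\hat{P}$, contradicting the fact that $I_0$ is red in $\hat{P}$. This forces $L \geq t$. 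I expect the main technical obstacle to be the three-case verification of yellow and teal on $J$; the balance $\gamma = (M-1)\delta$ makes the two endpoint inequalities tight at once, after which the interior checks are routine given the slope information on each piece.
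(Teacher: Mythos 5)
Your proof is correct and follows essentially the same strategy as the paper's: both produce a green interval that protrudes into the interior of the red interval $I_0$, contradicting the construction of $\hat{P}$. The paper first reduces to $n=1$ and then extends the green block rightward by $\delta$ into the blue region, locating the matching left endpoint in $I_0$ by an intermediate-value argument, whereas you extend explicitly on both sides with the balance $\gamma=(M-1)\delta$; the two executions are interchangeable and both are sound.
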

\begin{proof}
We first note that, if $[a,b]$, $[b,c]\in\hat{P}$ are green, then $c > a + t$. Hence, it suffices to prove the claim in the case of a red-green-blue sequence $[a,b]$, $[b,c]$, $[c,d]$. For all sufficiently small $\delta > 0$, since $[a,b]$ is red and $[c,d]$ is blue,
\begin{align*}
f(c+\delta) - f(a) &= f(c) - f(a)\\
&\geq c - b + 2(b - a)\\
&> c + \delta - a.
\end{align*}
On the other hand, $f(c+\delta) - f(b) < c+\delta - b$. Therefore, there is some $s \in (a, b)$ such that $[s, c+\delta]$ is yellow and $f(c+\delta) - f(s) = c+\delta - s$. Hence, if $c+\delta \leq s + t$, $[s, c+\delta]$ is green. By our construction of $\hat{P}$, $[s, c+\delta]$ is not green, and so we must have that $c = b+t$.
\end{proof}

The last lemma we need to prove Theorem \ref{thm:projectionMainTheorem} gives the existence of an admissible partition in the case that there are no red-green-blue sequences in $\hat{P}$. The main idea is that, since $\dim(x) > 1$, no red-green-blue sequences implies that there are no blue intervals in $\hat{P}$ (after some small precision). Thus we can construct an admissible partition $\mathcal{P}$ containing only yellow intervals.
\begin{lem}\label{lem:admissiblenoredgreenblue}
Let  $x \in \R^2$, $C\in\N$, and $t, r \in \N$. Suppose $\dim(x) > 1$, $t\geq \frac{r}{C}$ and $r$ is sufficiently large. If there are no red-green-blue sequences in $\hat{P}$, then there is a $3C$-admissible partition $\mathcal{P}$ of $[0,r]$ such that 
\begin{equation*}
\sum\limits_{I_i \in \mathcal{P}-Y} a_{i+1} - a_i \leq s,
\end{equation*}
for some constant $s$ depending only on $x$. 
\end{lem}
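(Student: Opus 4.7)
The plan is to combine $\dim(x)>1$ with the no-RGB hypothesis to show that every blue interval of $\hat{P}$ is confined to an initial segment $[0,p^*]$ of length bounded by a constant depending only on $x$, and then to construct $\mathcal{P}$ with this segment as a single teal interval and the remainder as yellow chunks aligned to the $\hat{P}$-breakpoints.

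First I would fix $\delta=(\dim(x)-1)/2>0$ and choose $s_0$ with $f(u)\ge(1+\delta)u$ for every $u\ge s_0$; letting $c_0=f(0)$, set $s=\max(s_0,\,c_0/\delta)$, which depends only on $x$ and satisfies $s\le t$ whenever $r$ is large enough. Let $p^*$ denote the start of the first red interval in $\hat{P}$ (which must exist once $r>s$). The key structural step is to observe that every red interval in $\hat{P}$ occurs as the last subinterval of its gap and so is followed in $\hat{P}$ by a green; if after the red a chain of one or more greens were followed by a blue (from a pure-blue gap or the blue prefix of a mixed gap), we would have a red-green-blue sequence. Hence the no-RGB hypothesis forces $\hat{P}$ to alternate only red and green after $p^*$. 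In particular $[0,p^*]$ contains only green and blue intervals of $\hat{P}$, so $f$ grows only on the greens and $f(p^*)\le c_0+p^*$. If $p^*\ge s_0$, the bound $f(p^*)\ge(1+\delta)p^*$ forces $p^*\le c_0/\delta$, so in either case $p^*\le s$. The same calculation gives $f(p^*)-f(c)\le p^*-c$ for every $c\in[0,p^*]$, so $[0,p^*]$ is teal.

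Next I would construct $\mathcal{P}$ by taking $[0,p^*]$ as the first interval (teal, length $\le s\le t$) and partitioning $[p^*,r]$ greedily along $\hat{P}$-endpoints: accumulate consecutive red or green intervals of $\hat{P}$ into a chunk while its total length stays $\le t$, start a new chunk when the next $\hat{P}$-interval would cause an overflow, and split any red of length $>t$ into subintervals of length $t$ (each such subinterval is itself a red, hence yellow). Since a union of consecutive yellow intervals that share endpoints remains yellow, every chunk is yellow. A standard pairing argument---each pair of consecutive chunks spans more than $t$---bounds the number of chunks in $[p^*,r]$ by at most $2C$, so $\mathcal{P}$ has at most $2C+1\le 3C$ intervals and is $3C$-admissible. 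Since only $[0,p^*]$ is possibly non-yellow, $\sum_{I_i\in\mathcal{P}-Y}(a_{i+1}-a_i)\le p^*\le s$, as required.

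The hardest part of the plan is the structural analysis: one must carefully enumerate the adjacency patterns that the construction of $\hat{P}$ permits (green-green, green-red, green-blue, and green-blue-red arising from a mixed gap) and verify that the no-RGB hypothesis, used together with $f(u)\ge(1+\delta)u$, really does push every blue interval into $[0,p^*]$; in particular one must handle cleanly the case where $\hat{P}$ begins with a pure-blue or mixed gap, so that the first red interval is preceded by a blue rather than by a green.
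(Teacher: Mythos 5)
Your proposal is correct and follows essentially the same route as the paper: use the absence of red-green-blue sequences (together with the facts that in $\hat{P}$ every red is immediately followed by a green and every blue is immediately preceded by a green) to conclude that no blue interval occurs past the first red interval, use $\dim(x)>1$ to bound the start $p^*$ of that first red interval by a constant depending only on $x$, and greedily partition $[p^*,r]$ into $O(C)$ yellow intervals of length at most $t$ via the pairing argument. Your quantitative treatment of the bound on $p^*$ (via $f(u)\ge(1+\delta)u$ against $f(p^*)\le f(0)+p^*$) and your observation that $[0,p^*]$ can be taken as a single teal interval are, if anything, slightly more careful than the paper's corresponding steps, which instead invoke Lemma~\ref{lem:goodPartitionProjection} to cover $[0,p^*]$.
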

\begin{proof}
Let $[a,b]$ be the first red interval of $\hat{P}$. Then, since there are no red-green-blue sequences, there are no blue intervals in $\hat{P}$ which intersect $[b, r]$. Since $\dim(x) > 1$, there is a constant $s^\prime$ such that $f(s) > s$, for all $s \geq s^\prime$. Therefore, $a < s^\prime$.

We now construct an admissible partition $\mathcal{P}$ of $[0,r]$ so that $[a,r]$ is covered by only yellow intervals in $\mathcal{P}$. Define the following procedure, for any $a\leq c < r$.

\bigskip

\noindent \textbf{Procedure}$(c,r)$:
\begin{itemize}
\item If $c = r$, halt.
\item Let 
\begin{center}
$d = \max\limits_{d\leq c+t, r} \{[c,d] \in Y \text{ and } d\notin int(I) \text{ for any green } I\in \hat{P}\}$.
\end{center}
\item Add $[c,d]$ to $\mathcal{P}$.
\item Call Procedure$(d, b)$.
\end{itemize}

Let $a\leq c < r$ be any real not contained in the interior of a green interval of $\hat{P}$. Then, since $[a,r]$ is covered exclusively by red and green intervals of $\hat{P}$,
\begin{center}
$\{d\mid [c,d] \in Y  \text{ and } d\notin int(I) \text{ for any green } I\in \hat{P}\}$
\end{center}
is non-empty and closed. Hence, the procedure is well defined. Moreover, $d \geq \max\{\lfloor c \rfloor + 1, \lceil c \rceil\}$, and so Procedure$(a,r)$ eventually halts. Therefore, by construction $\mathcal{P}$ satisfies (A2) and (A3). 

Finally, we show that $\mathcal{P} = \{[a_i, a_{i+1}]\}^k_{i=0}$ satisfies (A1). Let $i < k$. Then, since $[a_i, a_{i+1}]$ and $[a_{i+1},a_{i+2}]$ are yellow,  $[a_i, a_{i+2}]$ is yellow. By the definition of our algorithm, we must have $a_{i+2} > a_i + t$, and so $\mathcal{P}$ is $2C$-admissible.

Finally, we extend $\mathcal{P}$ to cover $[0, a]$ with yellow and teal intervals using Lemma \ref{lem:goodPartitionProjection}. Since $a < s^\prime$, we see that
\begin{equation*}
\sum\limits_{I_i \in \mathcal{P}-Y} a_{i+1} - a_i \leq a < s^\prime,
\end{equation*}
and the proof is complete.
\end{proof}

\begin{T2}
Let  $x \in \R^2$, $e \in \mathcal{S}^1$, $\ve\in \Q^+$, $C\in\N$, and $t, r \in \N$. Suppose that $r$ is sufficiently large, and that the following hold.
\begin{enumerate}
\item[\textup{(P1)}] $\dim(x) > 1$.
\item[\textup{(P2)}] $t \geq \frac{r}{C}$.
\item[\textup{(P3)}] $K^x_s(e) \geq s - C\log s$, for all $s \leq t$. 
\end{enumerate}
Then 
\begin{center}
$K_{r}(x \, | \, p_e x, e) \leq K_r(x) - \frac{r+t}{2} + \ve r$.
\end{center}
\end{T2}
\begin{proof}
Assume $x, e, r, t$ and $\ve$ satisfy the hypothesis. Let $\hat{P} = P(x,r,t)$ be the partition of $[0,r]$ into red, blue and green intervals. There are two cases to consider. 

First assume that there are no red-green-blue sequences in $\hat{P}$. Let $\mathcal{P}$ be the $3C$-admissible partition of Lemma \ref{lem:admissiblenoredgreenblue}. Thus
\begin{equation*}
\sum\limits_{I_i \in \mathcal{P}-Y} a_{i+1} - a_i \leq s,
\end{equation*}
for some constant $s$, and so
\begin{equation}
B := \sum\limits_{I_i \in \mathcal{P}\cap Y} a_{i+1} - a_i \geq r - \frac{\ve r}{2}.
\end{equation}

By repeated applications of the symmetry of information, and Lemma \ref{lem:boundGoodPartitionProjection} with respect to $\ve / 4$,
\begin{align}
K_r(x) &\geq \sum\limits_{I_i \in \mathcal{P}\cap Y} K_{a_{i+1}, a_i}(x\mid x) - O(\log r)\tag*{[Lemma \ref{lem:unichain}]}\\
&\geq -\frac{\ve r}{4} + \sum\limits_{I_i \in \mathcal{P}\cap Y} K_{a_{i+1}, a_i}(x\mid x)\tag*{[$\mathcal{P}$ admissible]}\\
&\geq K_r(x\mid p_e x, e) + B -\frac{\ve r}{2}\tag*{[Lemma \ref{lem:boundGoodPartitionProjection}]}\\
&\geq K_r(x\mid p_e, x, e) + r - \ve r.\label{eq:projectionMainThm4}
\end{align}
Rearranging (\ref{eq:projectionMainThm4}) , we see that
\begin{align*}
K_r(x\mid p_e x, e) &\leq K_r(x) - r + \ve r\\
&\leq K_r(x) - \frac{r+t}{2} + \ve r,
\end{align*}
and the proof is complete for this case.

We now consider the case that there is at least one red-green-blue sequence in $\hat{P}$. In particular, by Observation \ref{obs:redgreenblue}, there is a sequence of consecutive green intervals $I_1,\ldots, I_n\in \hat{P}$ with total length at least $t$. By our construction of $\hat{P}$, there must be consecutive\footnote{Technically, there could be only one green interval of length exactly $t$. In this case, abusing notation, we take the second interval to empty.} green intervals $[a,b]$, $[b,c] \in \hat{P}$ such that $c \geq a + t$. Let $\mathcal{P}_1$ be a $3C$-admissible partition of $[0,a]$, and $\mathcal{P}_2$ be a $3C$-admissible partition of $[c, r]$, guaranteed by Lemma \ref{lem:goodPartitionProjection}. Let 
\begin{center}
$\mathcal{P} = \mathcal{P}_1 \cup [a,b] \cup [b, c] \cup \mathcal{P}_2$.
\end{center}
Note that $\mathcal{P}$ is a $10C$-admissible partition. 

Let 
\begin{equation}\label{eq:lengthOfTealIntervals}
L = \sum\limits_{I_i \in \mathcal{P}\cap G} a_{i+1} - a_i
\end{equation}
be the total length of the green intervals in $\mathcal{P}$. Note that $L \geq t$. Let 
\begin{equation}
B = \sum\limits_{i\in \textbf{Bad}} a_{i+1}-a_i
\end{equation}
be the total length of the bad (non-teal) intervals in $\mathcal{P}$. 

We first prove that 
\begin{equation}\label{eq:projectionMainThm1}
K_r(x\mid p_e x, e) \leq \ve r + \min\{K_r(x) - B - t, B\}
\end{equation}
Since $x$ is an element of $\R^2$, by Lemma \ref{lem:MD:3.9},
\begin{center}
$K_{a_{i+1}, a_i}(x\mid x) \leq 2(a_{i+1} - a_i) + O(\log r)$.
\end{center}
Therefore, by Lemma \ref{lem:boundGoodPartitionProjection}, with respect to $\ve / 4$,
\begin{equation}\label{eq:projectionMainThm2}
K_r(x\mid p_e x, e) \leq \frac{\ve r}{2} + B.
\end{equation}
By repeated applications of the symmetry of information,
\begin{align}
K_r(x) &\geq -\frac{\ve r}{2} + \sum\limits_{I_i \in \mathcal{P}\cap T} K_{a_{i+1}, a_i}(x\mid x) + \sum\limits_{i\in \textbf{Bad}} K_{a_{i+1}, a_i}(x\mid x)\tag*{}\\
&\geq -\frac{\ve r}{2} + \sum\limits_{I_i \in \mathcal{P}\cap G} K_{a_{i+1}, a_i}(x\mid x) + \sum\limits_{i\in \textbf{Bad}} K_{a_{i+1}, a_i}(x\mid x)\tag*{}\\
&= -\frac{\ve r}{2} +L + \sum\limits_{i\in \textbf{Bad}} K_{a_{i+1}, a_i}(x\mid x)\tag*{}\\
&\geq t + K_r(x\mid p_e x, e) + B -\ve r.\label{eq:projectionMainThm3}
\end{align}
Combining (\ref{eq:projectionMainThm2}) and (\ref{eq:projectionMainThm3}) proves inequality (\ref{eq:projectionMainThm1}). 

By inequality (\ref{eq:projectionMainThm1}), if
\begin{center}
$B \leq K_r(x) - \frac{r+t}{2}$,
\end{center}
we are done, so we assume otherwise. Applying (\ref{eq:projectionMainThm1}) again, this implies that
\begin{align}
K_r(x\mid p_e x,e) &\leq \ve r+ K_r(x) - t - B\tag*{[(\ref{eq:projectionMainThm1})]}\\
&< \ve r+ K_r(x) - t - K_r(x) + \frac{r+t}{2}\tag*{}\\
&= \ve r+ \frac{r-t}{2}\tag*{}\\
&< \ve r+ K_r(x) - \frac{r+t}{2}\tag*{[(P1)]},
\end{align}
and the proof is complete.
\end{proof}

\section{Complexity of distances}\label{sec:complexityDistances}
In this section, we prove a lower bound on the complexity, $K^x_r(\|x-y\|)$, of the distance between two points.
\begin{thm}\label{thm:DistanceLowerBoundComplexity}
Suppose that $x, y \in \R^2$, $e = \frac{x - y}{\|x - y\|}$ and $C\in\N$ satisfy the following for every $r\in\N$.
\begin{itemize}
\item[\textup{(D1)}] $K_r(y) > r - C\log r$.
\item[\textup{(D2)}] $K^x_r(y) > K_r(y) - C\log r$.
\item[\textup{(D3)}] $K_r(e\mid y) =  r - o(r)$.
\end{itemize}
Then, for every $\ve\in\Q_+$ and all sufficiently large $r\in\N$,
\begin{center}
$K^x_{r}(\|x - y\|) \geq \frac{K_r(y)}{2} - \ve r$.
\end{center}
\end{thm}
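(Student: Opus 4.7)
The plan is to apply Lemma \ref{lem:pointDistance}, the distance analog of Lemma \ref{lem:point2}, with parameters $\eta$ and $t$ chosen carefully. Set $\eta r = K_r(y) - \ve r/2$, so that hypothesis (i) of that lemma, $K_r(y) \leq (\eta + \ve/2)r$, holds automatically. Granting hypothesis (ii) (discussed below), Lemma \ref{lem:pointDistance} relative to the oracle $x$, combined with the trivial bound $K^x_{r,t}(\|x-y\|\mid y) \leq K^x_r(\|x-y\|) + O(\log r)$, yields
\[K^x_r(\|x-y\|) \;\geq\; K^x_{r,t}(y\mid y) \;-\; 3\ve r \;-\; O(\log r).\]
Applying Lemma \ref{lem:unichain}(ii) relativized to $x$, together with hypothesis (D2), shows that the right-hand side is at least $K_r(y) - K_t(y) - O(\ve r)$. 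The conclusion will then follow once $t$ is chosen so that $K_t(y) \leq K_r(y)/2 + O(\ve r)$, which is always possible by the intermediate-value property of the piecewise-linear extension $f(s) = K_s(y)$ introduced in Section \ref{sec:projections}: by Lemma \ref{lem:MD:3.9} this $f$ is non-decreasing, $2$-Lipschitz, and satisfies $f(r) = K_r(y)$.

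The crux is verifying hypothesis (ii): for every $w \in B_{2^{-t}}(y)$ with $\|x-w\| = \|x-y\|$ and $s := -\log\|y-w\|$, I must show $K_r(w) \geq \eta r + \min\{\ve r, r-s-\ve r\}$. The key geometric identity comes from expanding $\|x-w\|^2 = \|x-y\|^2$, which gives $(w-y)\cdot e = \|w-y\|^2/(2\|x-y\|)$; hence the direction $(w-y)/\|w-y\|$ differs from $\pm e^\perp$ by an angle of order $2^{-s}$, and since precision-$r$ approximations of $y, w$ give $(w-y)/\|w-y\|$ to precision $r - s$, it follows that $(y, w)$ at precision $r$ determine $e$ to precision $\min(s, r-s) + O(\log r)$. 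Combining this with the chain rule and hypothesis (D3),
\[K_r(e\mid y) \;\leq\; K_r(w\mid y) + K_r(e\mid w, y) + O(\log r) \;\leq\; K_r(w\mid y) + (r - \min(s, r-s)) + O(\log r),\]
yields $K_r(w\mid y) \geq \min(s, r-s) - o(r)$. Transferring to the oracle $D = D(r, y, \eta)$ of Lemma \ref{lem:oracles} (which is computable from $y$ and therefore preserves (D3)) and case-splitting on whether $K^D_s(y) = K_s(y)$ or $K^D_s(y) = \eta r + O(\log r)$ as in Lemma \ref{lem:oracles}(i), the argument parallels the proofs of Lemmas \ref{lem:projectionsIncreasingInterval} and \ref{lem:projectionsDecreasingInterval} and verifies (ii) on yellow and teal subintervals of $[0, r]$.

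The main obstacle is that the circle analog above produces the factor $\min(s, r-s)$ in place of the cleaner $r - s$ appearing in the projection analog of Lemma \ref{lem:lowerBoundOtherPoints}: at the critical scale $s \approx r/2$ the quadratic tangential correction $\|w-y\|^2/\|x-y\|$ and the linear chord-angle estimate $\|w-y\|$ have comparable size, and the projection-type argument alone cannot verify (ii) at all scales. As signaled in the abstract, I plan to close this gap by bootstrapping with the naive one-dimensional bound
\[K^x_r(\|x-y\|) \;\geq\; K_r(y) - r - O(\log r),\]
which follows from hypothesis (D2) together with $K^x_r(y\mid \|x-y\|) \leq r + O(\log r)$ (since given $x$ and $\|x-y\|$ the point $y$ lies on a one-dimensional circle). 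The naive bound is already sharp when $K_r(y)$ is close to $2r$, the projection-type argument handles the complementary regime where $K_r(y)$ is close to $r$, and a red/blue/green partition of $[0, r]$ adapted from Section \ref{sec:projections} organises the case analysis across scales and interpolates between the two bounds.
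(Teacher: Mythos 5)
You have assembled the right toolkit — Lemma \ref{lem:pointDistance}, the quadratic estimate of Observation \ref{obs:geometricObs}, the incompressibility bound of Lemma \ref{lem:lowerBoundOtherPoints} driven by (D3), and the oracle $D$ of Lemma \ref{lem:oracles} — and your identification of the $\min(s,r-s)$ obstruction is genuinely the right diagnosis of why a single application of Lemma \ref{lem:pointDistance} at one scale $t$ cannot work (note also that your $t$ with $K_t(y)=K_r(y)/2$ can be as small as about $r/4$, and that hypothesis (ii) with $\eta r\approx K_r(y)$ fails outright on yellow stretches, so the clean inequality $K^x_{r}(\|x-y\|)\geq K^x_{r,t}(y\mid y)-3\ve r$ of your first paragraph is not available). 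The gap is that your proposed repair never becomes an argument. The paper's resolution of the $\min(s,r-s)$ problem is not a bootstrap against the naive circle bound but a scale restriction: every application of the incompressibility argument is confined to an interval $[a_i,a_{i+1}]$ with $a_{i+1}\leq 2a_i$ (condition (G2) of a ``good'' partition), so that any $w\in B_{2^{-a_i}}(y)$ has $s\geq a_i\geq a_{i+1}/2$ and the quadratic correction $2^{-2s+c}$ is already below the working precision $2^{-a_{i+1}}$; one then chains the per-interval bounds by symmetry of information (Lemmas \ref{lem:distancesIncreasingIntervals}--\ref{lem:existenceOfGoodPartitionDistance}). You never state this doubling condition, and the red/blue/green machinery you invoke is the wrong partition — it is tailored to locating green intervals for Theorem \ref{thm:projectionMainTheorem} and plays no role here.

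More importantly, the constant $\tfrac12$ has a specific source that your ``interpolation'' does not supply. Writing $B$ for the total length of the yellow (non-teal) intervals of the good partition, the teal intervals contribute $O(\ve r)$ and each yellow interval $[a_i,a_{i+1}]$ contributes at most $K_{a_{i+1},a_i}(y\mid y)-(a_{i+1}-a_i)$, which is simultaneously at most $a_{i+1}-a_i$ (since $K_{a_{i+1},a_i}(y\mid y)\leq 2(a_{i+1}-a_i)$) and sums, against $K_r(y)\geq\sum_{\textbf{Bad}}K_{a_{i+1},a_i}(y\mid y)$, to give
\begin{equation*}
K^x_r(y\mid\|x-y\|)\;\leq\;\min\{B,\;K_r(y)-B\}+O(\ve r)\;\leq\;\tfrac{K_r(y)}{2}+O(\ve r),
\end{equation*}
after which (D2) and symmetry of information finish. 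Your two endpoint regimes ($K_r(y)$ near $r$ versus near $2r$) are special cases of this min, but a general $y$ mixes yellow and teal stretches arbitrarily and the interpolation must be carried out interval by interval exactly as above; ``the partition organises the case analysis'' is a hope, not a proof. Finally, the device you lean on in the last paragraph — fixing $t$ by $K_t(y)=\tfrac{dr}{2}$ and bootstrapping against Theorem \ref{thm:projectionMainTheorem} — is the architecture of Theorem \ref{thm:mainThmEffDim}, where the present theorem serves as the naive input; importing it here conflates the two proofs.
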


Although this bound tight\footnote{This result is the best possible, for example, when $\dim(y) = 2$.} in general, we can do better when the complexity of $y$ is significantly less than $2r$. Indeed, Theorem \ref{thm:mainThmEffDim}, shows that we can conclude a bound of $\frac{3r}{4}$. The key point of Theorem \ref{thm:DistanceLowerBoundComplexity} is that it relates the complexity of $\|x-y\|$ to the complexity of $\|y\|$  \textit{at every precision}. We will use this in the proof of Theorem \ref{thm:mainThmEffDim} via a bootstrapping argument. 

Before getting into the details, we first give intuition behind the proof of Theorem \ref{thm:DistanceLowerBoundComplexity}. The proof is very similar to that of our projection theorem (Section \ref{sec:projections}). The connect is due to the fact that, if $z\in\R^2$ such that $\|z - x\| = \|y-x\|$, then (Observation \ref{obs:geometricObs})
\begin{center}
$\vert p_e y - p_e w \vert \lesssim \|y-z\|^2$.
\end{center}
Since $e$ is essentially random relative to $y$ (condition (D3)), we have techniques to bound the set of all such $z$.

The proof of Theorem \ref{thm:projectionMainTheorem} relies on analyzing the function $s \mapsto K_s(x)$. Instead of working with this function directly, it is convenient to work with a function defined on real numbers. Let $f:\R_+ \rightarrow \R_+$ be the piece-wise linear function which agrees with $K_s(y)$ on the integers, and 
\begin{center}
$f(a) = f(\lfloor a\rfloor) + (a -\lfloor a\rfloor)(f(\lceil a \rceil) - f(\lfloor a\rfloor)) $,
\end{center}
for any non-integer $a$. Since
\begin{center}
$K_s(y) \leq K_{s+1}(y)$ 
\end{center} 
for every $s\in\N$, $f$ is non-decreasing and by Lemma \ref{lem:MD:3.9}, for every $s_1 < s_2\in\N$,
\begin{center}
$f(b) - f(a) \leq 2(b-a) + O(\log \lceil b \rceil)$.
\end{center}
Finally, we note that, for any $a\in \N$,
\[
f(a+1) - f(a) =
\begin{cases} 
0 
    & \text{  } \\
1
    & \text{  }\\
M &\text{ for some } M \geq 2.
\end{cases}
\]

An interval $[a,b]$ is called \textbf{\textit{teal}} if $f(b) - f(c) \leq b-c$ for every $a\leq c\leq b$. It is called \textbf{\textit{yellow}} if $f(c) - f(a) \geq c - a$, for every $a\leq c \leq b$. We denote the set of teal intervals by $T$ and the set of yellow intervals by $Y$.

The overall strategy of Theorem \ref{thm:DistanceLowerBoundComplexity} is nearly identical to that of Theorem \ref{thm:projectionMainTheorem}. Specifically, in Section \ref{subsec:distYellowTeal}, we give bounds on yellow and teal intervals. Given a sufficiently nice partition of $[0,r]$ we can sum the bounds on individual intervals to conclude a bound on the complexity $K^x_r(y\mid \|x-y\|)$. Finally, we show the existence of a good partition which allows us to conclude the bound of Theorem \ref{thm:DistanceLowerBoundComplexity}.

\subsection{Complexity of distances on yellow and teal intervals}\label{subsec:distYellowTeal}
The proofs of Lemmas \ref{lem:distancesIncreasingIntervals} and \ref{lem:distancesDecreasingIntervals} follow the same strategy as \ref{lem:projectionsIncreasingInterval} and \ref{lem:projectionsDecreasingInterval}. Thus, the intuition given in Section \ref{subsec:projectionsIntervals} can be used here. The main difference is that we now allow our intervals to be of the form $[\frac{r}{2}, r]$, instead of $[t, r]$. We begin with a brief description of the proofs. Since 
\begin{center}
$K^x_{r,t}(y \mid\|x-y\|,y) \approx r -\log \mu\left(N\right)$,
\end{center}
where
\begin{center}
$N=\{w \in B_{2^{-t}(y)} \mid \|x-y\|=\|x-w\|, \text{ and } K_r(w)\leq K_r(y)\}$,
\end{center}
a bound for $K^x_{r,t}(y \mid\|x-y\|,y)$, is equivalent to bounding the set of points $w\in B_{2^{-t}}(y)$ such that $\|x-y\|=\|x-w\|$. Thus, if this set contains only points close to $y$, then $K^x_{r,t}(y \mid\|x-y\|,y)$ is small. Lemma \ref{lem:pointDistance} formalizes this intuition (and is the analog of Lemma \ref{lem:point2} for distances).

As in Section \ref{subsec:projectionsIntervals}, we bound the set of $w$ using an incompressibility argument. To do so, we apply the fact that, if $w\in\R^2$ such that $\|w - x\| = \|y-x\|$, then (Observation \ref{obs:geometricObs})
\begin{center}
$\vert p_e y = p_e w \vert \lesssim \|y-w\|^2$.
\end{center}
Thus, if $w\in B_{2^{-\frac{r}{2}}}(y)$, then $p_e w \approx p_e y$. We can then use the method described for projections to conclude that we can compute $e$ given $w$ and $y$ (Lemma \ref{lem:lowerBoundOtherPoints}). Therefore, if $[\frac{r}{2}, r]$ is teal, this contradicts our assumption that $e$ is random. When $[\frac{r}{2}, r]$ is yellow, we use Lemma \ref{lem:oracles} to artificially decrease the complexity of $y$ at precision $r$. Once this is done, we are in the case where $[t, r]$ is teal.
\begin{lem}\label{lem:distancesIncreasingIntervals}
Let $x,y \in \R^2$, $e = \frac{x-y}{\|x-y\|}$, $\ve\in\Q_+$, and $t < r\leq 2t \in\N$. Suppose $r$ is sufficiently large and the following are satisfied.
\begin{enumerate}
\item[\textup{(i)}] $K_{s,r}(e \mid y) \geq s - \frac{\ve r}{2}$, for all $s \leq r - t$
\item[\textup{(ii)}] $K_{s,t}(y) \geq  s-t - \ve r$ for all $t \leq s \leq r$.  
\end{enumerate}
Then 
\begin{center}
$K^x_{r,r,t}(y\mid \|x - y\|, y) \leq K_{r,t}(y\mid y) - (r - t) + 7\ve r$.
\end{center}
\end{lem}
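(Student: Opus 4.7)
The plan is to imitate the proof of Lemma~\ref{lem:projectionsIncreasingInterval}, replacing the projection-based incompressibility step with its distance analogue. First, I would set $\eta r = K_t(y) + (r-t) - 3\ve r$ and invoke Lemma~\ref{lem:oracles} to obtain the oracle $D = D(r,y,\eta)$, which yields $K^D_r(y) \leq (\eta + \ve/2)r$. This establishes condition (i) of Lemma~\ref{lem:pointDistance} relativized to $D$, and the goal becomes verifying the complexity lower bound in condition (ii) for every $w \in B_{2^{-t}}(y)$ with $\|x-w\| = \|x-y\|$.

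The crucial geometric input is Observation~\ref{obs:geometricObs}: any such $w$ satisfies $|p_e y - p_e w| \lesssim \|y-w\|^2 \leq 2^{-2s}$, where $s = -\log\|y-w\| \geq t$. Because the hypothesis enforces $r \leq 2t$, we obtain $2s \geq r$, so $p_e w$ and $p_e y$ are indistinguishable at precision $r$. Consequently, the computation underlying Lemma~\ref{lem:lowerBoundOtherPoints} goes through at precision $r$: from a precision-$s$ approximation of $y$ and a precision-$r$ approximation of $w$, one can reconstruct $e$ to precision $r-s$, giving
\[ K^D_r(w) \geq K^D_s(y) + K^D_{r-s,r}(e \mid y) - O(\log r). \]
Applying Lemma~\ref{lem:oracles}(ii) to pass from $K^D_{r-s,r}(e\mid y)$ to $K_{r-s,r}(e\mid y)$, and then invoking hypothesis (i), I obtain $K^D_r(w) \geq K^D_s(y) + (r-s) - \tfrac{\ve r}{2} - O(\log r)$.

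A case split on $K^D_s(y)$, identical to the one in Lemma~\ref{lem:projectionsIncreasingInterval}, then verifies condition (ii) of Lemma~\ref{lem:pointDistance}. Specifically, if $K^D_s(y) = K_s(y) - O(\log r)$, then using Lemma~\ref{lem:unichain}(ii) together with hypothesis (ii) and the definition of $\eta$ shows $K^D_r(w) \geq (\eta + \ve)r$; otherwise $K^D_s(y) = \eta r - O(\log r)$, giving $K^D_r(w) \geq \eta r + (r-s) - \ve r$. Either way the minimum bound holds. Applying Lemma~\ref{lem:pointDistance} relative to $D$ yields
\[ K^{D,x}_{r,t}(y \mid y) \leq K^{D,x}_{r,t}(\|x-y\| \mid y) + 3\ve r + K(\ve,\eta) + O(\log r). \]
The proof concludes by using Lemma~\ref{lem:unichain} (symmetry of information) to pass to the joint form $K^{D,x}_{r,r,t}(y \mid \|x-y\|, y)$, Lemma~\ref{lem:oracles}(iv) to remove the oracle $D$ at the cost of $K_r(y) - \eta r = K_{r,t}(y\mid y) - (r-t) + 3\ve r + O(\log r)$, and the bound $K(\ve,\eta) \leq K(\ve) + O(\log r)$.

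The main obstacle is precisely the transition from the projection setting's \emph{exact} relation $p_e z = p_e w$ to the distance setting's \emph{approximate} relation supplied by Observation~\ref{obs:geometricObs}. The constraint $r \leq 2t$ in the hypothesis is exactly what is required to absorb the $O(\|y-w\|^2)$ discrepancy into the $2^{-r}$ tolerance; without it, the projections would separate before precision $r$, and the incompressibility argument underlying Lemma~\ref{lem:lowerBoundOtherPoints} would no longer apply. All other steps---the oracle setup, the case split, and the symmetry-of-information bookkeeping---are essentially mechanical translations of the projection argument.
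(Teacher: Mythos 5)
Your proposal is correct and follows essentially the same route as the paper's own proof: the same choice of $\eta$, the same oracle $D$ from Lemma~\ref{lem:oracles}, the same use of Observation~\ref{obs:geometricObs} together with $r\leq 2t$ to reduce the approximate relation $\vert p_e y - p_e w\vert \lesssim 2^{-2t}$ to the setting of Lemma~\ref{lem:lowerBoundOtherPoints}, the same case split on $K^D_s(y)$, and the same application of Lemma~\ref{lem:pointDistance} followed by symmetry of information and Lemma~\ref{lem:oracles}(iv). You also correctly identify the role of the hypothesis $r\leq 2t$ as the one genuinely new ingredient relative to the projection lemma.
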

\begin{proof}
Let $\eta \in \Q$ such that $\eta r = K_t(y) + r - t - 3\ve r$. Let $D = D(r, y, \eta)$ be the oracle of Lemma \ref{lem:oracles}. Note that, by Lemma \ref{lem:oracles}(i), $K^D_r(y) \leq \eta r + \frac{\ve r}{2}$.

Suppose that $w\in B_{2^{-t}}(y)$ such that $\|x - w\| = \|x - y\|$. Let $s = -\log\|w-y\|$. Our goal is to show a sufficiently large lower bound on the complexity of $w$, so that we may apply Lemma \ref{lem:pointDistance}. We first note that, by Observation \ref{obs:geometricObs},
\begin{center}
$\|p_{e} y - p_{e} w\| < 2^{-2t+c}\leq 2^{-r+c}$,
\end{center}
where $c = \log\|x-y\|$. Therefore, by Lemma \ref{lem:lowerBoundOtherPoints},
\begin{align}
K^D_r(w) &\geq K^D_{r-c}(z) \tag*{}\\
&\geq K^D_s(y) + K^D_{r-c-s,r-c}(e\mid y) - O(\log r)\tag*{[Lemma \ref{lem:lowerBoundOtherPoints}]}\\
&\geq K^D_s(y) + K^D_{r-s,r}(e\mid y) - O(\log r)\tag*{[Lemma \ref{lem:MD:3.9}]}\\
&\geq K^D_s(y) + K_{r-s,r}(e\mid y)  - O(\log r)\tag*{[Lemma \ref{lem:oracles}(ii)]}\\
&\geq K^D_s(y) + r - s -\frac{\ve r}{2} - O(\log r)\tag*{[Condition (i)]}.
\end{align}
To continue, we bound the complexity of $y$, relative to oracle $D$. By Lemma \ref{lem:oracles}(i), there are two cases to consider. If $K^D_s(y) = K_s(y) - O(\log r)$, then 
\begin{align}
K^D_r(w) &\geq K_s(y) + r - s -\frac{\ve r}{2}- O(\log r)\tag*{}\\
&= K_t(y) + K_{s,t}(y) + r - s - \frac{\ve r}{2} -O(\log r)\tag*{[Lemma \ref{lem:unichain}(ii)]}\\
&\geq K_t(y) + r - t - \frac{3\ve r}{2}- O(\log r)\tag*{[Condition (ii)]}\\
&= K_t(y) + r - t - 2\ve r\tag*{}\\
&\geq (\eta + \ve) r\tag*{}.
\end{align}
For the other case, when $K^D_s(y) = \eta r - O(\log r)$,
\begin{align*}
K^D_r(w) &\geq  \eta r + r - s -\frac{\ve r}{2} - O(\log r)\\
&\geq \eta r + r- s -\ve r
\end{align*}
Thus the conditions of Lemma \ref{lem:pointDistance} are satisfied. Applying this, relative to $D$, yields
\begin{equation}\label{eq:distanceIntervalIncreasingMain}
K_{r,t}^{D,x}(y \mid y) \leq K^{D,x}_{r, t}( \|x-y\| \mid y) + 3\ve r + K(\ve,\eta)+O(\log r).
\end{equation}
Using symmetry of information arguments, it is not difficult to show that inequality (\ref{eq:distanceIntervalIncreasingMain}) implies the conclusion of the present lemma. Formally, 
\begin{align}
K^{D,x}_{r,r, t}(y\mid \|x-y\|, y) &= K^{D,x}_{r,r, t}(y\mid y) \tag*{[Lemma \ref{lem:unichain}]} \\
&\,\,\,\,\,\,\,\,\,\,\,\,\,\,\,\,\,\,\,\,\,- K^{D,x}_{r, t}(\|x-y\|\mid y) + O(\log r)\tag*{}\\
&\leq  3\ve r + K(\ve,\eta) + O(\log r)\label{eq:distalign1}.
\end{align}

To remove the presence of oracle $D$, we use the properties of the oracle (Lemma \ref{lem:oracles}) and symmetry of information. Formally,
\begin{align}
K^{x}_{r,r,t}(y\mid \|x-y\|, y) &\leq K^{D,x}_{r,r, t}(y\mid \|x-y\|, y)\tag*{}\\
&\;\;\;\;\;\;\;\;\; + K_r(y) - \eta r + O(\log r)\tag*{[Lemma \ref{lem:oracles}(iv)]}\\
&\leq 3\ve r + K(\ve,\eta) + K_r(y) - \eta r + O(\log r)\tag*{[(\ref{eq:distalign1})]}\\
&= K_{r,t}(y\mid y) + 6\ve r+ K(\ve,\eta) + O(\log r).\label{eq:distalign2}
\end{align}
Finally, note that 
\begin{equation}
K(\eta\mid \ve) \leq O(\log r),
\end{equation}
and so
\begin{align*}
K^{x}_{r,r,t}(y\mid \|x-y\|, y) &\leq K_{r,t}(y\mid y) - (r-t)  + 6\ve r + K(\ve) + O(\log r)\\
&\leq K_{r,t}(y\mid y) - (r-t)  + 7\ve r,
\end{align*}
and proof is complete.
\end{proof}

Using essentially the same strategy as before, we can bound the complexity on approximately teal intervals.
\begin{lem}\label{lem:distancesDecreasingIntervals}
Let $x,y \in \R^2$, $e = \frac{x-y}{\|x-y\|}$, $\ve\in\Q_+$,  and $t < r\leq 2t \in\N$. Suppose that $r$ is sufficiently large and the following are satisfied.
\begin{enumerate}
\item[\textup{(i)}] $K_{s,r}(e \mid y) \geq s - \frac{\ve r}{2}$, for all $s \leq r - t$
\item[\textup{(ii)}]  $K_{r,s}(y) \leq r - s +\ve r$ for all $t \leq s \leq r$.
\end{enumerate}
Then 
\begin{center}
$K^x_{r,r,t}(y\mid \|x-y\|, y) \leq 7\epsilon r $.
\end{center}
\end{lem}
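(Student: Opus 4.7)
The proof will closely mirror the structure of Lemma \ref{lem:distancesIncreasingIntervals}, differing only in the way the complexity $K_r(y)$ (rather than $K_t(y) + (r-t)$) is used to set the ``artificial'' complexity threshold $\eta r$. Specifically, the plan is to choose $\eta\in\Q$ with $\eta r = K_r(y) - 3\ve r$, invoke Lemma \ref{lem:oracles} to produce an oracle $D=D(r,y,\eta)$ so that $K^D_r(y)\le \eta r + \tfrac{\ve r}{2}$, and then verify that the hypotheses of Lemma \ref{lem:pointDistance} are satisfied relative to $D$. Once this is done, the lemma follows from Lemma \ref{lem:pointDistance} by exactly the same chain of symmetry-of-information and oracle-removal computations used in Lemma \ref{lem:distancesIncreasingIntervals}.

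The main geometric step is to control, for every $w\in B_{2^{-t}}(y)$ with $\|x-w\|=\|x-y\|$, the quantity $K^D_r(w)$, where $s:=-\log\|y-w\|$. By Observation \ref{obs:geometricObs}, $|p_e y - p_e w| \le 2^{-2t+c}\le 2^{-r+c}$ with $c=\log\|x-y\|$, so Lemma \ref{lem:lowerBoundOtherPoints} applied to a witness point $z$ on the line through $y$ perpendicular to $e$ (passing through $w$ up to precision $r-c$) yields
\begin{equation*}
K^D_r(w) \;\ge\; K^D_s(y) + K^D_{r-s,r}(e\mid y) - O(\log r).
\end{equation*}
Using Lemma \ref{lem:oracles}(ii) and inequality (\ref{eq:OraclesDontIncrease}) together with hypothesis (i), this becomes
\begin{equation*}
K^D_r(w) \;\ge\; K^D_s(y) + r - s - \tfrac{\ve r}{2} - O(\log r).
\end{equation*}

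Next I would split on the two cases provided by Lemma \ref{lem:oracles}(i). If $K^D_s(y) = K_s(y) - O(\log r)$, then using Lemma \ref{lem:unichain}(ii) to write $K_r(y) = K_s(y) + K_{r,s}(y\mid y) + O(\log r)$ and invoking hypothesis (ii), $K_s(y) \ge K_r(y) - (r-s) - \ve r - O(\log r)$, so that $K^D_r(w) \ge K_r(y) - \tfrac{3\ve r}{2} - O(\log r) \ge (\eta+\ve)r$. If instead $K^D_s(y)=\eta r - O(\log r)$, the bound becomes $K^D_r(w)\ge \eta r + r-s-\ve r$. Either way, the hypothesis of Lemma \ref{lem:pointDistance} holds relative to $D$, yielding
\begin{equation*}
K^{D,x}_{r,t}(y\mid y) \le K^{D,x}_{r,t}(\|x-y\|\mid y) + 3\ve r + K(\ve,\eta) + O(\log r).
\end{equation*}

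To finish, I would apply Lemma \ref{lem:unichain} to rewrite the left side as $K^{D,x}_{r,r,t}(y\mid \|x-y\|, y) + K^{D,x}_{r,t}(\|x-y\|\mid y) + O(\log r)$, cancel the distance term, then use Lemma \ref{lem:oracles}(iv) to remove $D$ at a cost of $K_r(y) - \eta r = 3\ve r$, and finally observe that $K(\eta\mid\ve) = O(\log r)$ by the definition of $\eta$. The accumulated bound is $6\ve r + K(\ve) + O(\log r) \le 7\ve r$ for $r$ sufficiently large, matching the claim. The main technical obstacle I anticipate is verifying that the geometric ``closeness of projections'' bound lets us substitute $z$ for $w$ in Lemma \ref{lem:lowerBoundOtherPoints} without losing more than an additive $O(\log r)$ term, but this is handled exactly as in the increasing case of Lemma \ref{lem:distancesIncreasingIntervals}.
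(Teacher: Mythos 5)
Your proposal is correct and follows essentially the same route as the paper's proof: the same choice $\eta r = K_r(y)-3\ve r$, the same oracle $D(r,y,\eta)$, the same use of Observation~\ref{obs:geometricObs} and Lemma~\ref{lem:lowerBoundOtherPoints} to lower-bound $K^D_r(w)$, the same case split via Lemma~\ref{lem:oracles}(i) with hypothesis (ii) handling the first case, and the same application of Lemma~\ref{lem:pointDistance} followed by symmetry of information and oracle removal. No gaps.
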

\begin{proof}

Let $\eta \in \Q$ such that $\eta r = K_r(y) - 3\ve r$. Let $D = D(r, y, \eta)$ be the oracle of Lemma \ref{lem:oracles}. 

Suppose that $w\in\R^2$ such that $\|x - w\| = \|x - y\|$ and $w\in B_{2^{-t}}(y)$. Let $s = -\log\|w-y\|$. Our goal is to show a sufficiently large lower bound on the complexity of $w$, so that we may apply Lemma \ref{lem:pointDistance}. We first note that, by Observation \ref{obs:geometricObs},
\begin{center}
$\|p_{e} y - p_{e} w\| < 2^{-2t+c}\leq 2^{-r+c}$,
\end{center}
where $c = \log \|x-y\|$. Therefore, by Lemma \ref{lem:lowerBoundOtherPoints},
\begin{align}
K^D_r(w) &\geq K^D_{r-c}(w)\tag*{}\\
&\geq K^D_s(y) + K^D_{r-c-s,r-c}(e\mid y) - O(\log r)\tag*{[Lemma \ref{lem:lowerBoundOtherPoints}]}\\
&\geq K^D_s(y) + K^D_{r-s,r}(e\mid y) - O(\log r)\tag*{[Lemma \ref{lem:MD:3.9}]}\\
&\geq K^D_s(y) + K_{r-s,r}(e\mid y)  - O(\log r)\tag*{[Lemma \ref{lem:oracles}(ii)]}\\
&\geq K^D_s(y) + r - s -\frac{\ve r}{2} - O(\log r)\tag*{[Condition (i)]}.
\end{align}
To continue, we bound on the complexity of $y$, relative to oracle $D$. By Lemma \ref{lem:oracles}(i), there are two cases to consider. If $K^D_s(y) = K_s(y) - O(\log r)$, then 
\begin{align}
K^D_r(w) &\geq K_s(y) + r - s -\frac{\ve r}{2}- O(\log r)\tag*{}\\
&\geq K_s(y) + K_{r,s}(y) - \frac{3\ve r}{2} -O(\log r)\tag*{[Condition (ii)]}\\
&= K_r(y) - \frac{3\ve r}{2} -O(\log r)\tag*{[Lemma \ref{lem:unichain}]}\\
&= K_r(y) - 2\ve r\tag*{}\\
&= (\eta + \ve)r \tag*{}.
\end{align}
For the other case, when $K^D_s(y) = \eta r - O(\log r)$,
\begin{align*}
K^D_r(w) &\geq  \eta r + r - s -\frac{\ve r}{2} - O(\log r)\\
&\geq \eta r + r- s -\ve r
\end{align*}
Thus the conditions of Lemma \ref{lem:pointDistance} are satisfied. Applying this, relative to $D$, yields
\begin{equation}\label{eq:distanceIntervalDecreasingMain}
K_{r,t}^{D,x}(y \mid y) \leq K^{D,x}_{r, t}( \|x-y\| \mid y) + 3\ve r + K(\ve,\eta)+O(\log r).
\end{equation}
Using symmetry of information arguments, it is not difficult to show that inequality (\ref{eq:distanceIntervalDecreasingMain}) implies the conclusion of the present lemma. Formally, 
\begin{align}
K^{D,x}_{r,r, t}(y\mid \|x-y\|, y) &= K^{D,x}_{r,r,r, t}(y, \|x-y\|\mid \|x-y\|, y) \tag*{} \\
&\,\,\,\,\,\,\,\,\,\,\,\,\,\,\,\,\,\,\,\,\,- K^{D,x}_{r, t}(\|x-y\|\mid y) + O(\log r)\tag*{}\\
&= K^{D,x}_{r,r, t}(y \mid \|x-y\|, y)- K^{D,x}_{r, t}(\|x-y\|\mid y) + O(\log r)\tag*{}\\
&\leq K^{D,x}_{r, t}(y \mid y)- K^{D,x}_{r, t}(\|x-y\|\mid y) + O(\log r)\tag*{}\\
&\leq  3\ve r + K(\ve,\eta) + O(\log r)\label{eq:distalign3}.
\end{align}

To remove the presence of oracle $D$, we use the properties of the oracle (Lemma \ref{lem:oracles}) and symmetry of information. Formally,
\begin{align}
K^{x}_{r,r,t}(y\mid \|x-y\|, y) &\leq K^{D,x}_{r,r, t}(y\mid \|x-y\|, y)\tag*{}\\
&\;\;\;\;\;\;\;\;\; + K_r(y) - \eta r + O(\log r)\tag*{[Lemma \ref{lem:oracles}(iv)]}\\
&\leq 3\ve r + K(\ve,\eta) + K_r(y) - \eta r + O(\log r)\tag*{[(\ref{eq:distalign1}]}\\
&= 6\ve r+ K(\ve,\eta) + O(\log r)\tag*{}.
\end{align}
Finally, note that 
\begin{equation*}
K(\eta\mid \ve) \leq O(\log r),
\end{equation*}
and so
\begin{align*}
K^{x}_{r,r,t}(y\mid \|x-y\|, y) &\leq  6\ve r + K(\ve) + O(\log r)\\
&\leq  7\ve r,
\end{align*}
and proof is complete.
\end{proof}

\begin{cor}\label{cor:distancesYellowTeal}
Suppose that $x, y \in \R^2$ and $e = \frac{x - y}{\|x - y\|}$ satisfy (D1)-(D3) for every $r\in\N$. Then for every $\ve\in\Q_+$ and all sufficiently large $r\in\N$, the following hold.
\begin{enumerate}
\item If $[t,r]$ is yellow and $t\leq r\leq 2t$
\begin{center}
$K^x_{r,r,t}(y\mid \|x-y\|, y) \leq K_{r,t}(y\mid y) - (r-t) + \ve r $.
\end{center}
\item If $[t,r]$ is teal,  and $t\leq r\leq 2t$,
\begin{center}
$K^x_{r,r,t}(y\mid \|x-y\|, y) \leq \ve r $.
\end{center}
\end{enumerate}
\end{cor}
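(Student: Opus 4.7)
The plan is to adapt the proof of Corollary \ref{cor:projectionYellowTeal} to the distance setting, substituting Lemma \ref{lem:distancesIncreasingIntervals} for Lemma \ref{lem:projectionsIncreasingInterval} in the yellow case and Lemma \ref{lem:distancesDecreasingIntervals} for Lemma \ref{lem:projectionsDecreasingInterval} in the teal case. As there, I first set $t' = \lceil t \rceil$ and $r' = \lceil r \rceil$, verify the hypotheses of the relevant distance lemma with respect to $\ve' = \ve/10$, and apply the lemma to obtain the stated bound on $K^x_{r', r', t'}(y \mid \|x-y\|, y)$. Converting between the integer precisions $t', r'$ and the original reals $t, r$ adds only a logarithmic error that is absorbed into the $\ve r$ slack via Lemma \ref{lem:MD:3.9}.

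The yellow and teal property of $[t, r]$ transfers to the integer interval $[t', r']$ using the piecewise linear function $f$ built from $K_s(y)$, exactly as in the proof of Corollary \ref{cor:projectionYellowTeal}. For a yellow interval, Lemma \ref{lem:unichain}(ii) gives $K_{s, t'}(y\mid y) = f(s) - f(t') + O(\log t')$, and the yellow inequality $f(s) - f(t) \geq s - t$ yields condition (ii) of Lemma \ref{lem:distancesIncreasingIntervals}, namely $K_{s, t'}(y) \geq s - t - \ve' r$. For a teal interval, the symmetric estimate $K_{r', s}(y\mid y) = f(r') - f(s) + O(\log r')$ together with the teal inequality $f(r) - f(s) \leq r - s$ verifies condition (ii) of Lemma \ref{lem:distancesDecreasingIntervals}. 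Once both conditions are in place, invoking the corresponding lemma and converting back from $K^x_{r', r', t'}$ to $K^x_{r, r, t}$ completes the two claims.

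The more delicate step is verifying condition (i) of both lemmas, namely $K_{s, r'}(e \mid y) \geq s - \ve' r'/2$ for every $s \leq r' - t'$. While (D3) applied at precision $s$ gives $K_{s,s}(e\mid y) \geq s - o(s)$, the lemmas require a bound on the mixed-precision quantity $K_{s, r'}(e\mid y)$, whose conditional precision $r'$ is larger, and a naive monotonicity argument via Lemma \ref{lem:MD:3.9} loses up to $2(r'-s)$ bits, which can be as large as $r'$ when $r \leq 2t$. The plan is to instead use symmetry of information (Lemma \ref{lem:unichain}) for the joint object $(e, y)$ and exploit the effective independence of $e$ and $y$ implied by (D1)--(D3): combining $K_r(e\mid y) \geq r - o(r)$ with $K_r(y) \geq r - C\log r$ forces $K_r(e, y) \geq K_r(y) + r - o(r)$ at every precision, which (after a chain-rule computation bridging the $(s, r')$-decomposition) yields the required bound. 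Executing this bridging step cleanly is the main obstacle, since in the projection analog the corresponding condition followed immediately from (P3) via inequality (\ref{eq:OraclesDontIncrease}) applied to the relativized complexity $K^x_s(e)$.
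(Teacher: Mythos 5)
Your first two paragraphs are exactly the paper's proof: the author rounds $t,r$ up to integers, verifies condition (ii) of Lemma~\ref{lem:distancesIncreasingIntervals} (resp.\ Lemma~\ref{lem:distancesDecreasingIntervals}) from the yellow (resp.\ teal) property of $f$ via Lemma~\ref{lem:unichain}, applies the lemma with $\ve'=\ve/10$, and absorbs the rounding error with Lemma~\ref{lem:MD:3.9}. For condition (i) the paper simply writes ``by (D3), condition (i) holds,'' so you are right that this step is left implicit, and you are also right that blind monotonicity in the first precision parameter, with the dimension-$n$ constant of Lemma~\ref{lem:MD:3.9}, would lose $2(r'-s)$ bits and give nothing useful. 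But the fix is much shorter than the independence/chain-rule detour you sketch: $e$ lies on $\mathcal{S}^1$ and is determined by one real parameter, so refining an approximation of $e$ from precision $s$ to precision $r'$ costs at most $r'-s+O(\log r')$ bits (this one-dimensional count is already being used implicitly whenever the paper writes $K_r(e\mid y)=r-o(r)$ rather than $2r-o(r)$). Hence by symmetry of information, for any $s\le r'-t'$,
\begin{align*}
K_{s,r'}(e\mid y) &\geq K_{r',r'}(e\mid y) - (r'-s) - O(\log r')\\
&\geq r' - o(r') - (r'-s) - O(\log r') \;=\; s - o(r'),
\end{align*}
which is at least $s-\ve' r'/2$ for $r$ sufficiently large. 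Note that here (D3) is invoked at the \emph{large} precision $r'$, not at $s$; that is what sidesteps the mixed-precision issue you worried about. With this observation your argument closes and coincides with the paper's.
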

\begin{proof}
We assume that $r$ is large enough to satisfy Lemmas \ref{lem:distancesIncreasingIntervals} and \ref{lem:distancesDecreasingIntervals} with respect to $\ve^\prime = \ve / 10$. 

Let $t\leq r \leq 2t$. Let $t^\prime = \lceil t \rceil$ and $r^\prime = \lceil r \rceil$. Then, by (D3), condition (i) of Lemmas \ref{lem:distancesIncreasingIntervals} and \ref{lem:distancesDecreasingIntervals} hold.

Assume that $[t, r]$. It suffices to show that that Lemma \ref{lem:distancesIncreasingIntervals}(ii) holds. For every $t^\prime \leq s \leq r^\prime$,
\begin{align*}
K_{s,t^\prime}(y\mid y) &= K_s(y) - K_{t^\prime}(y) - O(\log t^\prime)\\
&= f(s) - f(t^\prime) - O(\log t^\prime)\\
&\geq f(s) - f(t) - O(\log t^\prime)\\
&\geq s - t - \ve^\prime r,
\end{align*} 
and the claim follows.

Now assume that $[t, r]$ is teal. It suffices to show that that Lemma \ref{lem:distancesDecreasingIntervals}(ii) holds. For every $t^\prime \leq s \leq r^\prime$,
\begin{align*}
K_{r^\prime,s}(y\mid y) &= K_{r^\prime}(y) - K_{s}(y) + O(\log r^\prime)\\
&= f(r^\prime) - f(s) + O(\log r^\prime)\\
&\leq f(r) - f(s) + O(\log r^\prime)\\
&\leq r - s + \ve^\prime r,
\end{align*} 
and the proof is complete.
\end{proof}

\subsection{Proof of Theorem \ref{thm:DistanceLowerBoundComplexity}}
As in the case of projections (Section \ref{sec:projections}), the proof of Theorem \ref{thm:DistanceLowerBoundComplexity} relies on the existence of a ``nice" partition of $[1,r]$. Fortunately, since we only need a weak lower bound for conclusion of Theorem \ref{thm:DistanceLowerBoundComplexity}, our partition does not have to be optimized.

We say that a partition $\mathcal{P} = \{[a_i, a_{i+1}]\}_{i=0}^k$ of intervals with disjoint interiors is \textbf{\textit{good}} if $[1, r] = \cup_i [a_i, a_{i+1}]$ and it satisfies the following conditions.
\begin{itemize}
\item[\textup{(G1)}] $[a_i, a_{i+1}]$ is either yellow or teal,
\item[\textup{(G2)}] $a_{i+1} \leq 2a_i$, for every $i$ and
\item[\textup{(G3)}] $a_{i+2} > a_{i}$ for every $i < k$.
\end{itemize}
Note that (G3) ensures that the errors do not pile up.

The following lemma uses repeated applications of the symmetry of information to write $K^x_r(y\mid \|x-y\|)$ as a sum of its complexity on the intervals of a partition. The conclusion then follows via Corollary \ref{cor:distancesYellowTeal}. 
\begin{lem}\label{lem:boundGoodPartitionDistance}
Let $\mathcal{P} = \{[a_i, a_{i+1}]\}_{i=0}^k$ be a good partition. Then
\begin{align*}
K^x_{r}(y \mid\|x-y\|) &\leq \ve r + \sum\limits_{i\in \textbf{Bad}} K_{a_{i+1}, a_{i}}(y \mid y) - (a_{i+1} - a_i),
\end{align*}
where
\begin{center}
\textbf{Bad} $=\{i\leq k\mid [a_i, a_{i+1}] \notin T\}$.
\end{center}
\end{lem}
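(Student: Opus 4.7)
The plan is to mirror the argument of Lemma~\ref{lem:boundGoodPartitionProjection}, using Corollary~\ref{cor:distancesYellowTeal} in place of Corollary~\ref{cor:projectionYellowTeal} and invoking the goodness conditions (G1)--(G3) where the projection analog used admissibility.

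First I would fix $\ve' = \ve/c$ for a constant $c$ to be chosen and assume $r$ is large enough that Corollary~\ref{cor:distancesYellowTeal} applies, with parameter $\ve'$, to every interval $[a_i, a_{i+1}] \in \mathcal{P}$ satisfying $a_{i+1} \geq \log r$. Any initial intervals with $a_{i+1} < \log r$ would be consolidated into a single term contributing only $O(\log r)$ to the final estimate. Condition (G1) ensures each remaining interval is yellow or teal, while (G2), $a_{i+1} \leq 2 a_i$, is exactly the hypothesis $t \leq r \leq 2t$ required by the corollary.

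Next, by repeated applications of the symmetry of information (Lemma~\ref{lem:unichain}), relativized to $x$, I would telescope
\begin{equation*}
K^x_r(y \mid \|x-y\|) \leq O(\log r) + \sum_{i = 0}^{k} K^x_{a_{i+1}, a_{i+1}, a_i}(y \mid \|x-y\|, y).
\end{equation*}
By Corollary~\ref{cor:distancesYellowTeal}, each teal interval contributes at most $\ve' a_{i+1}$, and each yellow (hence bad) interval contributes at most $K_{a_{i+1}, a_i}(y\mid y) - (a_{i+1} - a_i) + \ve' a_{i+1}$. Collecting these and separating the teal terms yields
\begin{equation*}
K^x_r(y\mid \|x-y\|) \leq \sum_{i \in \textbf{Bad}} \bigl[K_{a_{i+1}, a_i}(y\mid y) - (a_{i+1} - a_i)\bigr] + \ve' \sum_{i = 0}^{k} a_{i+1} + O(\log r).
\end{equation*}

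The crux is to absorb the accumulated error $\ve' \sum_{i=0}^k a_{i+1}$ into $\ve r / 2$, and this is where (G2) and (G3) must be put to work. Iterating (G2) forces $a_i \geq r/2^{k+1-i}$, which already constrains the partition to have $k = O(\log r)$ intervals. Condition (G3), forbidding $a_{i+2} \leq a_{i}$, prevents consecutive short intervals from clustering, so that $\sum_i a_{i+1}$ is only a constant multiple of $r$. Choosing $c$ sufficiently large then makes $\ve' \sum_i a_{i+1} \leq \ve r / 2$, and the proof concludes.

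The main obstacle is precisely this final accounting. In the projection analog Lemma~\ref{lem:boundGoodPartitionProjection}, the admissibility condition (A1) immediately bounds the number of pieces by a fixed constant; here one only has (G2)--(G3), and must argue from them that neither the number of intervals nor the sum of their right endpoints can blow up relative to $r$.
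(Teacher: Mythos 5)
Your proposal follows the paper's proof essentially verbatim: consolidate the initial segment below $\log r$, telescope $K^x_r(y\mid\|x-y\|)$ by repeated symmetry of information, apply Corollary~\ref{cor:distancesYellowTeal} to each yellow or teal interval (with (G2) supplying the hypothesis $t\le r\le 2t$), and absorb the accumulated $\ve' a_{i+1}$ errors into $\ve r$. One small correction to your final accounting: (G2) yields a \emph{lower} bound $k\gtrsim\log r$ on the number of intervals rather than an upper bound, and both $k=O(\log r)$ and $\sum_i a_{i+1}=O(r)$ in fact come from (G3) in the form actually established in Lemma~\ref{lem:existenceOfGoodPartitionDistance}, namely $a_{i+2}>2a_i$, which forces the right endpoints to decay geometrically backwards from $a_{k+1}=r$.
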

\begin{proof}
To begin, we assume that, if $[a_i, a_{i+1}]\in\mathcal{P}$ and $a_{i+1} \geq \log r$, then $a_{i+1}$ is large enough so that Corollary \ref{cor:distancesYellowTeal} holds with respect to $\ve^\prime = \ve/8$. Since $r$ is assumed to be sufficiently large, this assumption is valid. Note that, for 
\begin{equation}
K_{\log r}(y) \leq O(\log r).
\end{equation}
With this in mind, we modify $\mathcal{P}$ by removing all intervals with right endpoints less than $\log r$, and adding the interval $[1, \log r]$. We relabel the intervals so that $a_1 = \log r$, and  $a_i >\log r$ for all $1< i\leq k$.

Let $[a_i,a_{i+1}]\in \mathcal{P}$ be a teal interval such that  $a_{i+1}\geq \log r$. Then by Corollary \ref{cor:distancesYellowTeal}, with respect to $\ve^\prime$,
\begin{equation}\label{eq:boundGoodPartitionDistance1}
K^x_{a_{i+1},a_{i+1}, a_i}(y \mid \|x-y\|, y) \leq \ve^\prime a_{i+1}.
\end{equation}
Similarly, let $[a_i,a_{i+1}]\in \mathcal{P}$ be a yellow interval such that  $a_{i+1}\geq \log r$. Then by Corollary \ref{cor:distancesYellowTeal} with respect to $\ve^\prime$,
\begin{equation}\label{eq:boundGoodPartitionDistance2}
K^x_{a_{i+1},a_{i+1}, a_i}(y \mid \|x-y\|, y)  \leq K_{b,a}(y\mid y) - (a_{i+1}-a_i) + \ve^\prime a_{i+1}.
\end{equation}

By repeated applications of the symmetry of information, and inequalities (\ref{eq:boundGoodPartitionDistance1}) and (\ref{eq:boundGoodPartitionDistance2}),
\begin{align*}
K^x_{r}(y \mid \|x-y\|) &\leq \sum\limits_{i = 0}^k K^x_{a_{i+1}, a_{i+1},a_{i}}(y \mid \|x-y\|, y) + O(\log r)\\
&\leq K_{\log r}(y) + \sum\limits_{i = 1}^k K^x_{a_{i+1}, a_{i+1}, a_{i}}(y \mid \|x-y\|, y) + O(\log r)\\
&\leq \ve^\prime r +\sum\limits_{I_i\in T} K^x_{a_{i+1}, a_{i+1},a_{i}}(y \mid \|x-y\|, y) \\
&\;\;\;\;\;\;\;\;\;\;\;\;\; + \sum\limits_{i\in \textbf{Bad}} K^x_{a_{i+1}, a_{i+1},a_{i}}(y \mid \|x-y\|, y)\\
&\leq \ve^\prime r +\sum\limits_{I_i\in T} \ve^\prime a_{i+1} \\
&\;\;\;\;\;\;\;\;\;\; + \sum\limits_{i\in \textbf{Bad}} K_{a_{i+1}, a_{i}}(y\mid y)- (a_{i+1} - a_i) + \ve^\prime a_{i+1}\\
&\leq \ve r+\sum\limits_{i\in \textbf{Bad}} K_{a_{i+1}, a_{i}}(y\mid y)- (a_{i+1} - a_i),
\end{align*}
and the conclusion follows.
\end{proof}

We now show the existence of a good partition by construction.
\begin{lem}\label{lem:existenceOfGoodPartitionDistance}
For every $x\in \R^2$ and every $r\in\N$, there is a good partition of $[1, r]$.
\end{lem}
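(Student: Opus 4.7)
The plan is to construct the partition via a greedy extension procedure. Set $a_0 := 1$, and inductively, having defined $a_i < r$, set
\[
a_{i+1} := \max\bigl\{ d \in (a_i, \min(2a_i, r)] \,:\, [a_i, d] \in Y \cup T \bigr\},
\]
halting when $a_k = r$. I would then verify that this is well-defined and produces a good partition.

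First, I would check that the maximum is attained and is strictly larger than $a_i$. The defining set is non-empty: since $f$ is piecewise linear with breakpoints only at positive integers, on any sufficiently short sub-interval $[a_i, a_i + \delta]$ the function $f$ is linear with some slope $m$, so $[a_i, a_i+\delta]$ is yellow if $m \geq 1$ and teal if $m \leq 1$. Closure of the set under taking limits $d_n \to d^\star$ follows from the continuity of $f$, since the defining inequalities for $Y$ and for $T$ pass to the limit.

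Next, I would show the procedure terminates by arguing that every $a_i$ is a positive integer. Starting from $a_0 = 1 \in \Z$, suppose inductively that $a_i \in \Z$; then the cap $\min(2a_i, r)$ is an integer. If $a_{i+1}$ equals the cap, it is integer. Otherwise, $a_{i+1}$ lies strictly inside a linear segment of $f$; letting $m$ be the slope there, a direct case analysis on $m$ and on whether $[a_i, a_{i+1}]$ is yellow or teal yields a small $\epsilon > 0$ with $[a_i, a_{i+1} + \epsilon] \in Y \cup T$, contradicting maximality. (For example, if $[a_i, a_{i+1}]$ is yellow and $m \geq 1$ then yellow persists; if teal with $m > 1$ then $[a_i, a_{i+1}]$ could not have been teal in the first place; and the remaining cases either preserve the current color or permit a switch from yellow to teal.) Hence $a_{i+1}$ is an integer, and since $a_{i+1} \geq a_i + 1$, the procedure halts within $r$ steps.

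Finally, (G1) and (G2) are built into the definition of $a_{i+1}$, while (G3) is immediate because the sequence $a_0 < a_1 < \cdots < a_k$ is strictly increasing, so $a_{i+2} > a_{i+1} > a_i$ for every $i < k$. The main obstacle is the case analysis in the termination step, although it is elementary and relies only on $f$ being a non-decreasing piecewise linear function with breakpoints confined to the integers.
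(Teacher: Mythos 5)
Your construction is the same greedy procedure as the paper's (extend each interval to the maximal right endpoint $d\leq\min(2a_i,r)$ with $[a_i,d]\in Y\cup T$), and your treatment of well-definedness and termination is fine; in fact your integrality argument is a reasonable substitute for the paper's observation that $b\geq\max\{\lceil a\rceil,\lfloor a\rfloor+1\}$. The problem is your verification of (G3). You read (G3) literally as $a_{i+2}>a_i$ and dismiss it as immediate from strict monotonicity. But that reading makes (G3) vacuous for \emph{every} partition, which is incompatible with the remark that ``(G3) ensures that the errors do not pile up'' and with how the partition is used downstream: in Lemma~\ref{lem:boundGoodPartitionDistance} one sums error terms $\ve' a_{i+1}$ over all intervals, and this sum is $O(\ve' r)$ only if the right endpoints grow geometrically, i.e., only if $a_{i+2}>2a_i$. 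That is exactly the inequality the paper's proof establishes (``we show that $\mathcal{P}$ satisfies (G3), i.e., that $a_{i+2}>2a_i$''); the displayed statement of (G3) is evidently a typo. With only strict monotonicity the partition could have $\Theta(r)$ intervals each with right endpoint $\Theta(r)$, and the accumulated error $\Theta(\ve' r^2)$ destroys the bound.

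Proving $a_{i+2}>2a_i$ is the one genuinely non-trivial part of this lemma, and it is entirely missing from your write-up. The needed argument is: if $[a_i,a_{i+1}]$ is yellow and $a_{i+1}<2a_i$, then maximality forces $f(a_{i+1})-f(a_i)=a_{i+1}-a_i$ with $f$ flat just beyond $a_{i+1}$, whence $[a_i,a_{i+1}+\delta]$ is teal, contradicting maximality; so every maximal yellow interval reaches its cap $2a_i$. If instead $[a_i,a_{i+1}]$ is teal with $a_{i+1}<2a_i$, one uses that the concatenation of two teal intervals is teal to rule out $[a_{i+1},a_{i+1}+\delta]$ being teal, which forces the next interval to be yellow and hence (by the first case) to reach $2a_{i+1}>2a_i$. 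You should supply this argument (or an equivalent one); without it the lemma as actually used in the proof of Theorem~\ref{thm:DistanceLowerBoundComplexity} is not established.
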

\begin{proof}
We define the following recursive procedure which constructs $\mathcal{P}$. Given a real $a< r$, the procedure does the following.

\bigskip

\noindent \textbf{Procedure}$(a, r)$: 
\begin{itemize}
\item Let 
\begin{center}
$b = \max\limits_{d \leq 2a,r} \{[a,d] \in Y\cup T\}$.
\end{center}
\item Add $[a,b]$ to $\mathcal{P}$
\item If $b = r$, stop the procedure.
\item Otherwise, call Procedure$(b,r)$.
\end{itemize}

\bigskip

To begin, we note that, for any $a< r$, the set
\begin{center}
$\{d \leq 2a, r \mid [a,d]\in Y \cup T\}$
\end{center}
is non-empty and closed. Thus $b$ is well defined and the interval $[a,b]$ added to $\mathcal{P}$ is teal or yellow. Moreover, $b \geq \max\{\lceil a \rceil, \lfloor a \rfloor + 1\}$, and so Procedure$(1,r)$ halts. Therefore $\mathcal{P}$ satisfies (G1) and (G2). To complete the proof, we show that $\mathcal{P}$ satisfies (G3), i.e., that $a_{i+2} > 2a_i$, for every $i < k$.

Let $i < k$. First assume that $[a_{i}, a_{i+1}]$ is yellow and $a_{i+1} < 2a_i$. Then by the definition of Procedure, we must have $f(a_{i+1}+\delta) - f(a_i) < a_{i+1} +\delta - a_i$ for all sufficiently small $\delta > 0$. Therefore $f(a_{i+1}) - f(a_i) = a_{i+1} - a_i$. But then $[a_i, a_{i+1} +\delta]$ is teal, a contradiction. Thus, if $[a_i,a_{i+1}]$ is yellow, then $a_{i+1} = 2a_i$. 

Now assume that $[a_i, a_{i+1}]$ is teal. Then either $a_{i+1} = 2a_i$, or $[a_{i+1}, a_{i+2}]$ is yellow. In either case, we see that $a_{i+2} > 2a_i$, and the proof is complete.
\end{proof}

With Lemma \ref{lem:existenceOfGoodPartitionDistance}, we are now able to prove the main result of this section. 
\begin{T3}
Suppose that $x, y \in \R^2$ and $e = \frac{x - y}{\|x - y\|}$ and $C\in\N$ satisfy the following for every $r\in\N$.
\begin{itemize}
\item[\textup{(D1)}] $K_r(y) > r - C\log r$.
\item[\textup{(D2)}] $K^x_r(y) > K_r(y) - C\log r$.
\item[\textup{(D3)}] $K_r(e\mid y) =  r - o(r)$.
\end{itemize}
Then, for every $\ve\in\Q_+$ and all sufficiently large $r\in\N$,
\begin{center}
$K^x_{r}(\|x - y\|) \geq \frac{K_r(y)}{2} - \ve r$.
\end{center}
\end{T3}
\begin{proof}
Let $\mathcal{P}$ be a good partition of $[0,r]$ guaranteed by Lemma \ref{lem:goodPartitionProjection}. Let 
\begin{equation}\label{eq:lengthBadIntervalsDistance}
B = \sum\limits_{i\in \textbf{Bad}} a_{i+1}-a_i
\end{equation}
be the total length of the bad (non-teal) intervals in $\mathcal{P}$. 

Let $\ve^\prime = \ve/4$. We first prove that 
\begin{equation}\label{eq:distanceComplexityMainThm1}
K^x_r(y\mid \|x-y\|) \leq 2\ve^\prime r + \min\{K_r(x) - B, B\}
\end{equation}
Since $y$ is an element of $\R^2$, by Lemma \ref{lem:MD:3.9},
\begin{center}
$K_{a_{i+1}, a_i}(y\mid y) \leq 2(a_{i+1} - a_i) + O(\log a_i)$.
\end{center}
Therefore, by Lemma \ref{lem:boundGoodPartitionDistance}, with respect to $\ve^\prime$,
\begin{equation}\label{eq:distanceComplexityMainThm2}
K^x_r(y \mid \|x-y\|) \leq 2\ve^\prime r + B.
\end{equation}
By repeated applications of the symmetry of information,
\begin{align}
K_r(y) &\geq \sum\limits_{I_i \in \mathcal{P}} K_{a_{i+1}, a_i}(y\mid y) - O(\log a_{i+1}) \tag*{}\\
&\geq -\ve^\prime r + \sum\limits_{i\in \textbf{Bad}} K_{a_{i+1}, a_i}(y\mid y)\tag*{}\\
&\geq  K^x_r(y\mid \|x-y\|) + B - 2\ve^\prime r.\label{eq:distanceComplexityMainThm3}
\end{align}
Combining (\ref{eq:distanceComplexityMainThm2}) and (\ref{eq:distanceComplexityMainThm3}) proves inequality (\ref{eq:distanceComplexityMainThm1}).

To finish the proof, by (\ref{eq:distanceComplexityMainThm1}), we can conclude that
\begin{center}
$K^x_r(y\mid \|x-y\|) \leq \frac{K_r(y)}{2} + 2\ve^\prime r$.
\end{center}
By the symmetry of information, this implies that
\begin{align}
K^x_r(\|x-y\|) &= K^x_r(y) - K^x_r(y\mid \|x-y\|) - O(\log r)\tag*{[Lemma \ref{lem:unichain}]}\\
&\geq K_r(y) - \frac{K_r(y)}{2} - 2\ve^\prime r-O(\log r)\tag*{[(D2)]}\\
&\geq \frac{K_r(y)}{2} - \ve r\tag*{},
\end{align}
and the proof is complete.
\end{proof}

\section{Effective dimension of distances}\label{sec:effdim}
In this section, we prove the point-wise analog of the main theorem of this paper. That is, we prove the following. 
\begin{thm}\label{thm:mainThmEffDim}
Suppose that $x, y\in\R^2$, $e = \frac{y-x}{\|y-x\|}$, $d > 1$ and $A\subseteq\N$ satisfy the following.
\begin{itemize}
\item[\textup{(C1)}] $\dim^{A}(x) > 1$
\item[\textup{(C2)}] $K^{x,A}_r(e) = r - O(\log r)$ for all $r$.
\item[\textup{(C3)}] $\dim^A(y) \geq d$.
\item[\textup{(C4)}] $K^{x,A}_r(y) \geq K^{A}_r(y) - O(\log r)$ for all sufficiently large $r$. 
\item[\textup{(C5)}] $K^{A}_r(e\mid y) = r - o(r)$ for all $r$.
\end{itemize}
Then 
\begin{center}
$\dim^{x,A}(\|x-y\|) \geq \frac{d}{4} + \frac{1}{2}$.
\end{center}
\end{thm}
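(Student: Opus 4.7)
My plan is to improve the naive $d/2$ bound that follows from Theorem~\ref{thm:DistanceLowerBoundComplexity} to the target $d/4 + 1/2$ by combining that theorem with the projection theorem (Theorem~\ref{thm:projectionMainTheorem}) in a bootstrapping argument. The extra $(2-d)/4$ gain will arise from an $r/2$ complexity increment that the projection theorem contributes at fine scales.

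First I will verify that the hypotheses (D1)--(D3) of Theorem~\ref{thm:DistanceLowerBoundComplexity} hold under (C1)--(C5) relative to oracle $A$: (D1) is immediate from (C3) since $\dim^A(y)\geq d>1$ forces $K^A_r(y) > r - C\log r$ eventually, (D2) is (C4), and (D3) is (C5). Hence Theorem~\ref{thm:DistanceLowerBoundComplexity} applies at every sufficiently large precision $s$, giving the naive bound $K^{x,A}_s(\|x-y\|) \geq K^A_s(y)/2 - \varepsilon' s$. Specializing to $s = r/2$ produces the coarse lower bound $K^{x,A}_{r/2}(\|x-y\|) \geq K^A_{r/2}(y)/2 - \varepsilon' r/2$.

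Next I will apply Theorem~\ref{thm:projectionMainTheorem} to the point $y$ with direction $e$, relative to oracle $A$, at precision $r$ with parameter $t$ as close to $r$ as possible. Hypothesis (P1) is (C3); for (P3), condition (C5) combined with symmetry of information (Lemma~\ref{lem:symmetry}) propagates $K^A_r(e\mid y) = r-o(r)$ into $K^{y,A}_s(e) \geq s - o(s)$ for all $s\leq r$. The conclusion $K^A_r(y\mid p_e y, e) \leq K^A_r(y) - (r+t)/2 + \varepsilon r$ rearranges, via Lemma~\ref{lem:unichain} and (C5), to $K^A_r(p_e y\mid e) \geq r - \varepsilon r$, meaning $p_e y$ is essentially random relative to $(A,e)$. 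Finally, I will use the geometric identity $p_e y = p_e x + D$: relative to oracle $(x,A)$, knowing $p_e y$ and $e$ is equivalent to knowing $D$ and $e$ (up to $O(\log r)$). The independence conditions (C1), (C2), and (C4) let me transfer the projection bound to the target oracle; in particular, (C4) via Lemma~\ref{lem:symmetry} shows $y$ gives no info about $x$ at any scale, and (C2) lets me invoke the projection theorem applied to $x$ with direction $e$, controlling how much $p_e x$ tells us about $p_e y$. The bootstrap then reads the projection theorem's bound as asserting a complexity increment of $r/2$ in $K^{x,A}(D)$ between scales $r/2$ and $r$, which adds to the Theorem~\ref{thm:DistanceLowerBoundComplexity} bound at scale $r/2$ to give $K^{x,A}_r(\|x-y\|) \geq dr/4 + r/2 - \varepsilon r$, yielding the theorem by taking $\liminf$.

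The main obstacle is the last step: transferring the projection theorem bound from oracle $A$ to oracle $(x,A)$. Since $p_e x$ is computable from $(x,e)$, it provides non-trivial information about $p_e y = p_e x + D$ relative to $(x,A,e)$, so the naive inequality $K^{x,A}_r(p_e y\mid e) \geq K^A_r(p_e y\mid e)$ fails. To argue that the projection information for $p_e y$ still ``splits off'' the contribution of $p_e x$ and leaves the claimed $r/2$ increment for $D$ will require careful manipulation of conditional Kolmogorov complexity via Lemmas~\ref{lem:unichain} and~\ref{lem:symmetry}, simultaneously invoking the projection theorem for $y$ (giving randomness of $p_e y$ given $e$) and for $x$ (giving randomness of $p_e x$ given $e$ via (C2)), and reconciling these through the additive decomposition $p_e y = p_e x + D$ across multiple precisions.
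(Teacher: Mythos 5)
Your high-level architecture matches the paper's: a naive $d/4$ bound at a coarse scale $t\approx r/2$ from Theorem~\ref{thm:DistanceLowerBoundComplexity}, plus an essentially full increment of $r/2$ between scales $t$ and $r$ obtained with help from Theorem~\ref{thm:projectionMainTheorem}. The accounting $dr/4 + r/2$ is right. But the mechanism you propose for the increment has a genuine gap, and it is not the ``careful manipulation'' you defer to the end --- it is a circularity. You apply the projection theorem to $y$ in the direction $e$ to conclude that $p_e y$ is nearly random given $e$ relative to $A$, and then try to transfer this to the oracle $(x,A)$ via $p_e y = p_e x + \|x-y\|$. Relative to $(x,A)$, however, $p_e x$ is computable from $e$, so $K^{x,A}_r(p_e y\mid e) = K^{x,A}_r(\|x-y\|\mid e) + O(\log r)$: the quantity you would need after the transfer is (a conditioned version of) exactly what the theorem asserts. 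Moreover, knowing that $p_e x$ and $p_e y$ are each individually random given $e$ says nothing about the complexity of their difference relative to $x$; and the mutual information $I(x:y\mid e)$ is of order $r$ (given $x$ and $e$, the point $y$ is confined to a ray), so no data-processing or independence argument from (C4) closes the gap. A secondary issue: (C5) gives $K^A_s(e\mid y)=s-o(s)$, which does not yield the hypothesis (P3) of the projection theorem for the pair $(y,e)$, which demands $K^{y,A}_s(e)\geq s-C\log s$.

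The paper's route to the increment is different and is the part your proposal is missing. One fixes $t$ with $K^A_t(y)=dr/2$ and lower-bounds $K^{x,A}_{r,t}(\|x-y\|\mid y)$ by an incompressibility argument on the circle $\{z:\|x-z\|=\|x-y\|\}$: if $z$ is on this circle with $\|y-z\|=2^{-s}$, the midpoint $w$ of $y,z$ satisfies $p_{e'}w=p_{e'}x$ for $e'=\frac{y-z}{\|y-z\|}$ (which is within $2^{-s}$ of $e^{\perp}$, so (C2) gives its initial randomness relative to $x$). Lemma~\ref{lem:lowerBoundOtherPointDistance} then shows $x$ is computable from $(y,z)$ plus $K_{r-s}(x\mid p_{e'}x,e')$ bits, and Theorem~\ref{thm:projectionMainTheorem} --- applied to $x$ with direction $e'$, not to $y$ with direction $e$ --- bounds that quantity by $K_{r-s}(x)-r/2$. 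Combined with (C4) and (C1) this forces every such $z$ to satisfy $K_r(z)\gtrsim K_s(y)+r/2$, and Lemma~\ref{lem:pointDistance} converts this into $K^{x,A}_{r,t}(\|x-y\|\mid y)\geq r/2-O(\ve r)$. Without Observation~\ref{obs:geometricObs}, Lemma~\ref{lem:lowerBoundOtherPointDistance}, and Lemma~\ref{lem:pointDistance} (or some substitute for them), the $r/2$ increment does not follow from the projection theorem alone.
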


Before giving the details, we briefly describe the overall strategy of the proof. We first note that the conditions (C1)-(C5) ensure that the conditions of Theorem \ref{thm:projectionMainTheorem} and Theorem \ref{thm:DistanceLowerBoundComplexity} hold. 

In Theorem \ref{thm:DistanceLowerBoundComplexity}, we gave a naive lower bound on the complexity $K^x_r(\|x-y\|)$, namely, that
\begin{equation}\label{eq:effDimEq1}
K^x_r(\|x-y\|) \gtrsim \frac{K_r(y)}{2}.
\end{equation}
The key property is that it bounds the complexity of $\|x-y\|$ as a function of the complexity of $y$ \textit{at every precision}. 

The bound (\ref{eq:effDimEq1}) is not strong enough, in general, to imply Theorem \ref{thm:mainThmEffDim} by itself. However, we can leverage (\ref{eq:effDimEq1})) to improve the bound on $K^x_r(\|x-y\|)$  by combining (\ref{eq:effDimEq1})) with Theorem \ref{thm:projectionMainTheorem}. Informally, at precision $r$, we judiciously choose a precision $t < r$. Using (\ref{eq:effDimEq1})), we have
\begin{equation}\label{eq:effDimEq2}
K^x_t(\|x-y\|) \gtrsim \frac{K_t(y)}{2}.
\end{equation}
We use a separate argument (described below) which applies Theorem \ref{thm:projectionMainTheorem} to lower bound the complexity of the remaining bits of $\|x-y\|$, i.e., to bound $K^x_{r,t}(\|x-y\|)$. By choosing $t$ to be the precision at which $K_t(y) = \frac{r}{2}$, we are able to conclude that 
\begin{equation}\label{eq:effDimEq3}
K^x_{r,t}(\|x-y\|) \geq \frac{r}{2}.
\end{equation}
With this choice of $t$, (\ref{eq:effDimEq2}) implies that $K^x_t(\|x-y\|) \geq \frac{r}{4}$. Combining this with (\ref{eq:effDimEq3}) yields the desired bound.

We now briefly describe the argument using Theorem \ref{thm:projectionMainTheorem}. Informally, we prove that there are no points $z$ close to $y$ such that $\|x-z\| = \|x-y\|$. To do this, we use an \textit{incompressibility argument}. That is, we prove if there were such a $z$ then the complexity of $x$ would be smaller than it actually is. To perform the incompressibility argument, we show that $x$ is computable given $y, z$ and few additional bits.

This relies on the following observation. Suppose that $\|x-z\| = \|x-y\|$, and let $w$ be the midpoint $y$ and $z$. Then $p_{e^\prime} w = p_{e^\prime} x$, where $e^\prime = \frac{z-y}{\|z-y\|}$. Thus, to compute $x$ given $y, z$, we only need the information of $K(x\mid p_{e^\prime} x, e^\prime)$. Bounding the complexity of $K(x\mid p_{e^\prime} x, e^\prime)$ is exactly the content of Theorem \ref{thm:projectionMainTheorem}.

We now formalize this intuition. We begin with the argument that $x$ is computable given $y$ and $z$.
\begin{lem}\label{lem:lowerBoundOtherPointDistance}
Let $x, y\in \R^2$ and $r\in \N$. Let $z\in \R^2$ such that $\|x-y\| = \|x-z\|$. Then for every $B\subseteq \N$,
\begin{center}
$K^B_{r-t,r}(x\mid y) \leq K^B_r(z\mid y) + K_{r-t}(x\mid p_{e^\prime}, e^\prime) + O(\log r)$,
\end{center}
where $e^\prime = \frac{y-z}{\|y-z\|}$ and $t = -\log \|y-z\|$.
\end{lem}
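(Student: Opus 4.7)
The plan is to convert the elementary geometric observation into a Turing-machine construction. The observation is that $\|x-y\| = \|x-z\|$ forces $x$ to lie on the perpendicular bisector of the segment $yz$, so setting $w = (y+z)/2$ and $e' = (y-z)/\|y-z\|$ we have $e' \cdot (x - w) = 0$, i.e.\ $p_{e'} x = e' \cdot w$. In particular, both $e'$ and $p_{e'} x$ are \emph{computable} from $y$ and $z$, and the only remaining information needed to recover $x$ is a witness to $K_{r-t}(x \mid p_{e'} x, e')$.

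To make this quantitative, I would describe a $B$-oracle Turing machine $M$ that, given a rational $q \in B_{2^{-r}}(y) \cap \Q^2$ and an input $\pi = \pi_1 \pi_2 \pi_3$ with $\pi_3$ encoding the pair $(r,t)$, proceeds as follows. First, $M$ runs $U^B(\pi_1, q)$ to obtain a rational $\hat z$ with $\|\hat z - z\| \le 2^{-r}$. It then forms $\hat w = (q + \hat z)/2$ and $\hat e' = (q - \hat z)/\|q - \hat z\|$, and computes the scalar $\hat u = \hat e' \cdot \hat w$. Finally, $M$ runs $U(\pi_2, (\hat e', \hat u))$ and outputs the resulting rational. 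Taking $\pi_1$ to be a shortest witness to $K^B_r(z \mid y)$, $\pi_2$ a shortest witness to $K_{r-t}(x \mid p_{e'} x, e')$, and $\pi_3$ a witness to $K(r,t)$, the total length of $\pi$ is bounded by $K^B_r(z \mid y) + K_{r-t}(x \mid p_{e'} x, e') + O(\log r)$.

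The main technical obstacle, and essentially the only real work, is the precision bookkeeping, which hinges on the fact that $\|y - z\| \asymp 2^{-t}$. A $2^{-r}$-error in $q - \hat z$ produces an error of order $2^{-r}/2^{-t} = 2^{-(r-t)}$ in $\hat e'$ after normalization; combined with the $2^{-r}$-error in $\hat w$, this yields $|\hat u - p_{e'} x| \lesssim 2^{-(r-t)}$, so $(\hat e', \hat u)$ is a valid $2^{-(r-t)}$-approximation of $(e', p_{e'} x)$. Consequently $\pi_2$ produces a $2^{-(r-t)}$-approximation of $x$, as required. By the definition of $K^B_{r-t, r}(x \mid y)$ this proves the claimed inequality; the $O(\log r)$ term absorbs the fixed description of $M$, the encoding of $r$ and $t$, and the universal multiplicative constants arising from the interval-arithmetic estimates above.
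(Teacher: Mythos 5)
Your proposal is correct and is essentially the paper's argument: both rest on the observation that $x$ lies on the perpendicular bisector of $yz$, so with $w=(y+z)/2$ one has $p_{e'}x=p_{e'}w$, and both hinge on the same precision accounting, namely that normalizing $y-z$ (of norm $2^{-t}$) from $2^{-r}$-approximations of $y$ and $z$ yields $e'$ and $p_{e'}x$ only to precision $2^{-(r-t)}$. The paper merely packages your explicit machine as a chain of symmetry-of-information inequalities, bounding $K^B_r(z,w\mid y)$, $K^B_{r-t,r}(e'\mid y,z)$, and $K^B_{r-t}(p_{e'}x\mid w,e')$ separately by $O(\log r)$ terms.
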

\begin{remark}
In the proof of Theorem \ref{thm:mainThmEffDim}, we will use the following, equivalent, inequality.
\begin{center}
$K^B_r(z) \geq K^B_t(y) + K^B_{r-t, r}(x\mid y) - K_{r-t}(x\mid p_{e^\prime} x, e^\prime) - O(\log r)$,
\end{center}
\end{remark}
\begin{proof}
Let $x,y,z\in\R^2$ and $B\subseteq\N$ such that $\|x-y\| = \|x-z\|$. Let $w = \frac{z+y}{2}$, and $e^\prime = \frac{y-z}{\|y-z\|}$. Note that $p_{e^\prime} w = p_{e^\prime} x$.

\medskip

It is clear that
\begin{align}
K^{B}_{r-t, r}(x\mid y) &\leq K^{B}_{r}(z,w \mid y)+ K^B_{r-t, r}((p_{e^\prime} x, e^\prime) \mid (y, z,w)) \label{eq:lowerBoundDist1}\\
&\;\;\;\;\;\;\;\;\;\;\;\;\;\;\;\;+ K_{r-t}(x \mid p_{e^\prime} x, e^\prime) + O(\log r)\tag*{}
\end{align}

Since a $2^{-r}$ approximation of $w$ is computable from $2^{-r}$ approximations of $y$ and $z$, 
\begin{center}
$K^{B}_{r}(w\mid y, z) \leq O(\log r)$.
\end{center}
Thus, by the symmetry of information,
\begin{align}
K^B_r(z, w \mid y) &= K^B_r(z\mid y) + K^B_r(w\mid y,z)+ O(\log r)\tag*{}\\
&\leq K^B_r(z\mid y) + O(\log r)\label{eq:lowerBoundDist2}.
\end{align}

Let $e_2 = (e^\prime)^\bot$. Then, since $p_{e_2} y = p_{e_2} z$, $K_{r-t,r}(e_2\mid y,z) \leq O(\log r)$ (see \cite{LutStu18}). Since $e^\prime$ is computable from $e_2$, this implies that
\begin{equation}\label{eq:lowerBoundDist3}
K^B_{r-t, r}(e^\prime \mid y, z) \leq O(\log r)
\end{equation}

Finally, since $p_{e^\prime}$ is a linear function, and $p_{e^\prime} x = p_{e^\prime} w$, 
\begin{equation}\label{eq:lowerBoundDist4}
K^{B}_{r-t}(p_{e^\prime} x\mid w, e^\prime)\leq O(\log r).
\end{equation}

Combining (\ref{eq:lowerBoundDist3}) and (\ref{eq:lowerBoundDist4}) yields
\begin{align}
K^B_{r-t, r}((p_{e^\prime} x, e^\prime) \mid (y, z,w))  &= K^B_{r-t, r}(e^\prime \mid y, z,w) \tag{[Lemma \ref{lem:unichain}]}\\
&\;\;\;\;\;\;\;\; + K^{B}_{r-t, r, r-t}(p_{e^\prime} x\mid (y,z,w), e^\prime)\tag*{}\\
&\leq O(\log r)\label{eq:lowerBoundDist5}.
\end{align}

Therefore, combining (\ref{eq:lowerBoundDist1}), (\ref{eq:lowerBoundDist2}), and (\ref{eq:lowerBoundDist5}) we see that
\begin{equation}\label{eq:lowerBoundDist6}
K^{B}_{r-t, r}(x\mid y) \leq K^B_r(z \mid y) + K_{r-t}(x \mid p_{e^\prime} x, e^\prime) + O(\log r).
\end{equation}

We now show that this is indeed equivalent to the inequality stated in the remark. Since $\|y-z\| < 2^{1-t}$,
\begin{align*}
K^B_r(z\mid y) &\leq K^B_{r, t-1}(z \mid z) +O(\log r)\\
&= K^B_r(z) - K^B_{t-1}(z) +O(\log r)\\
&= K^B_r(z) - K^B_{t-1}(y) +O(\log r)\\
&= K^B_r(z) - K^B_{t}(y) +O(\log r).
\end{align*}
Combining this with (\ref{eq:lowerBoundDist6}), we see that
\begin{center}
$K^B_r(z) \geq K^{B}_{r-t, r}(x\mid y) + K^B_t(y) - K_{r-t}(x \mid p_{e^\prime} x, e^\prime) - O(\log r)$,
\end{center}
and the proof is complete.
\end{proof}

We are now able to prove the main result of this section.
\begin{T4}
Suppose that $x, y\in\R^2$, $e = \frac{y-x}{\|y-x\|}$, $A\subseteq\N$ and $d > 1$ satisfy the following.
\begin{itemize}
\item[\textup{(C1)}] $\dim^{A}(x) > 1$
\item[\textup{(C2)}] $K^{x,A}_r(e) = r - O(\log r)$ for all $r$.
\item[\textup{(C3)}] $\dim^A(y) \geq d$.
\item[\textup{(C4)}] $K^{x,A}_r(y) \geq K^{A}_r(y) - O(\log r)$ for all sufficiently large $r$. 
\item[\textup{(C5)}] $K^{A}_r(e\mid y) = r - o(r)$ for all $r$.
\end{itemize}
Then 
\begin{center}
$\dim^{x,A}(\|x-y\|) \geq \frac{d}{4} + \frac{1}{2}$.
\end{center}
\end{T4}
\begin{proof}
Let $\ve \in \Q_+$. Let $r\in\N$ be sufficiently large. Let $t\in\N$ be the precision
\begin{center}
$t = \max\{s < r \mid K^A_{s}(y) \leq \frac{dr}{2}\}$.
\end{center}
By (C3) such a $t$ exists. Moreover, by (C4) and Lemma \ref{lem:MD:3.9},
\begin{equation}\label{eq:boundComplexityAtT}
K^{A}_t(y) = \frac{dr}{2}  - O(\log r).
\end{equation}
Finally, since $K^A_s(y) \leq 2s + O(\log s)$, we see that $t \geq \frac{r}{4} - O(\log r)$. 

It is immediate from (C3), (C4) and (C5) that the conditions of Theorem \ref{thm:DistanceLowerBoundComplexity}, relative to $A$, are satisfied for $x,y, C, \ve / 2$ and $t$. Therefore by Theorem \ref{thm:DistanceLowerBoundComplexity} and equation  (\ref{eq:boundComplexityAtT}),
\begin{equation}\label{eq:lowerBoundDistanceAtT}
K^{A,x}_{t}(\|x-y\|) \geq \frac{dr}{4} - \ve r.
\end{equation}

\bigskip

\noindent We now show that 
\begin{equation}\label{eq:mainThmEffDimEq1}
K^{A,x}_{r,t}(\|x-y\|\mid y) \geq \frac{r}{2} - 4\ve r.
\end{equation}
To prove this, we will prove sufficient lower bounds on points $z\in\R^2$ whose distance from $x$ is equal to $\|x-y\|$, allowing us to apply Lemma \ref{lem:pointDistance}. 

Let $\eta\in\Q$ be a rational such that $\eta r \leq \frac{(d + 1)r}{2} - 4\ve r$. Let $D = (r, y, \eta)$ be the oracle of Lemma \ref{lem:oracles} relative to $A$. Let $z \in \R^2$  such that $z\in B_{2^{-t}}(y)$, and $\|x-y\| = \|x-z\|$. Let $s = -\log \|y-z\|$. There are two cases to consider. 

\bigskip

\noindent \textit{Case 1: $s \geq \frac{r}{2} - \log r$.} By Observation \ref{obs:geometricObs}, 
\begin{center}
$\|p_e y - p_e z\| < r^2 2^{-r}$.
\end{center}
Let $r^\prime = r - 2\log r$. Then, by Lemma \ref{lem:lowerBoundOtherPoints},
\begin{align}
K^{A,D}_r(z) &\geq K^{A,D}_{r^\prime}(z)\tag*{}\\
&\geq K^{A,D}_s(y) + K^{A,D}_{r^\prime-s,r^\prime}(e\mid y) - O(\log r)\tag*{[Lemma \ref{lem:lowerBoundOtherPoints}]}\\
&= K^{A,D}_s(y) + K^{A,D}_{r-s,r}(e\mid y) - O(\log r)\tag*{[Lemma \ref{lem:MD:3.9}]}\\
&\geq K^{A,D}_s(y) + K^A_{r-s,r}(e\mid y)  - O(\log r)\tag*{[Lemma \ref{lem:oracles}(ii)]}\\
&\geq K^{A,D}_s(y) + r - s -\frac{\ve r}{2} - O(\log r)\tag*{[(C5)]}.
\end{align}
By Lemma \ref{lem:oracles}(i),
\begin{center}
$K^{A,D}_s(y) = \min\{\eta r, K^A_s(y)\} + O(\log r)$.
\end{center}
If $K^{A,D}_s(y) = K^A_s(y) - O(\log r)$, then 
\begin{align}
K^{A,D}_r(z) &\geq K^A_s(y) + r - s -\frac{\ve r}{2}- O(\log r)\tag*{}\\
&> ds - \ve r + r - s - \ve r \tag*{[(C3)]}\\
&= dr - \left((d-1) (r-s)\right) - 2\ve r\tag*{}\\
&\geq dr - \frac{(d-1)r}{2} - 3\ve r\tag*{}\\
&= \frac{(d+1)r}{2} - 4\ve r \tag*{}\\
&\geq (\eta + \ve) r.
\end{align}
Alternatively, if $K^{A,D}_s(y) = \eta r - O(\log r)$,
\begin{align*}
K^{A,D}_r(z) &\geq  \eta r + r - s -\frac{\ve r}{2} - O(\log r)\\
&\geq \eta r + r- s - \ve r
\end{align*}
Therefore, if $s \geq \frac{r}{2} - \log r$,
\begin{equation}\label{eq:mainThmEffDim5}
K^{A,D}_r(z) \geq \eta r + \min\{\ve r, r - s -\ve r\}.
\end{equation}

\bigskip

\noindent \textit{Case 2: $s < \frac{r}{2} - \log r$.}  Let $e^\prime = \frac{y-z}{\|y-z\|}$. In this case, we will need to apply Theorem \ref{thm:projectionMainTheorem}. We begin by showing that the conditions of Theorem \ref{thm:projectionMainTheorem} are satisfied for $x, e^\prime, \ve, s, r-s$ and $C^\prime$, where $C^\prime$ depends only on $x, y$ and $e$.

Since $s < \frac{r}{2} - \log r$, and $r$ is assumed to be sufficiently large, $r - s$ is sufficiently large. Conditions (P1) and (P2) are clearly satisfied.  To see (P3), by Observation \ref{obs:geometricObs}, 
\begin{align*}
\|e^\bot - e^\prime\| &\leq \frac{\|y-z\|}{\|y-x\|}\\
&= \frac{2^{-s}}{\|x-y\|}.
\end{align*}
Therefore, for every $s^\prime \leq s$, by (C2),
\begin{align}
K^{A,x}_{s^\prime}(e^\prime) &= K^{A,x}_{s^\prime}(e^\bot) + O(\log s^\prime)\tag*{[Lemma \ref{lem:MD:3.9}]}\\
&= K^{A,x}_{s^\prime}(e) + O(\log s^\prime)\tag*{[$e$ computable from $e^\bot$]}\\
&\geq s^\prime - C^\prime\log s^\prime\tag*{[(C2)]}.
\end{align}
Thus the conditions for Theorem \ref{thm:projectionMainTheorem}, relative to $A$, are satisfied for $x, e^\prime, \ve, s, r-s$ and $C^\prime$. 

We now turn to giving a lower bound on $K^{A,D}_r(z)$. By Lemma \ref{lem:lowerBoundOtherPointDistance},
\begin{align}
K^{A,D}_r(z) &\geq K^{A,D}_s(y) + K^{A,D}_{r-s, r}(x\mid y) \tag*{[Lemma \ref{lem:lowerBoundOtherPointDistance}]}\\
&\;\;\;\;\;\;\;\;\;\;\;\;\;\;- K^{A,D}_{r-s}(x\mid p_{e^\prime}, e^\prime) - O(\log r)\tag*{}\\
&\geq K^{A,D}_s(y) + K^A_{r-s, r}(x\mid y)  \tag*{[Lemma \ref{lem:oracles}]}\\
&\;\;\;\;\;\;\;\;\;\;\;\;\;\; - K^{A,D}_{r-s}(x\mid p_{e^\prime}, e^\prime) - O(\log r)\tag*{}\\
&\geq K^{A,D}_s(y) + K^{A}_{r-s}(x)  - K^{A,D}_{r-s}(x\mid p_{e^\prime}, e^\prime) - O(\log r)\tag*{[Lemma \ref{lem:symmetry}(ii)]}\\
&\geq K^{A,D}_s(y) + K^{A}_{r-s}(x)  - K^{A}_{r-s}(x\mid p_{e^\prime}, e^\prime) - O(\log r)\label{eq:mainThmEffDim2}.
\end{align}

Applying Theorem \ref{thm:projectionMainTheorem}, relative to $A$, yields
\begin{center}
$K^{A}_{r-s}(x\mid p_{e^\prime} x, e^\prime) \leq K^A_{r-s}(x) - \frac{r}{2} + \ve (r-s)$.
\end{center}
By combining this with (\ref{eq:mainThmEffDim2}) we see that
\begin{equation}\label{eq:mainThmEffDim3}
K^{A,D}_r(z)\geq K^{A,D}_s(y) + \frac{r}{2} - \ve r - O(\log r).
\end{equation}
Since $s \geq t$, by (\ref{eq:boundComplexityAtT}), $K^{A,D}_s(y) \geq \frac{dr}{2} - O(\log r)$. Thus,
\begin{align}
K^{A,D}_r(z) &\geq K^{A,D}_s(y) + \frac{r}{2} - \ve r- O(\log r)\tag*{}\\
&\geq  \frac{(d+1)r}{2} - 2\ve r \tag*{}\\
&\geq (\eta + \ve) r\label{eq:mainThmEffDim6}.
\end{align}

\bigskip

Therefore, by (\ref{eq:mainThmEffDim5}) and (\ref{eq:mainThmEffDim6}) the conditions of Lemma \ref{lem:pointDistance} are satisfied. Applying this, relative to $(A,D)$, yields
\begin{equation*}
K_{r,t}^{A,D,x}(y \mid y) \leq K^{A,D,x}_{r,t}(\|x-y\|\mid y) +3\ve r + K(\ve,\eta)+O(\log r).
\end{equation*}
Rearranging and using the fact that additional information does not increase complexity shows that
\begin{equation}\label{eq:mainThmEffDim4}
K^{A,x}_{r,t}(\|x-y\|\mid y) \geq K^{A,D,x}_{r,t}(y\mid y) -4\ve r.
\end{equation}
A standard symmetry of information argument bound the right hand side. Formally,
\begin{align}
K^{A,D,x}_{r,t}(y\mid y) &=  K^{A,D,x}_{r}(y) - K^{A,D,x}_{t}(y) - O(\log r)\tag*{[Lemma \ref{lem:unichain}]}\\
&= K^{A,D}_r(y) - K^{A,D}_t(y) - O(\log r) \tag*{[Lemma \ref{lem:oracles}(iii), (C4]}\\
&= \frac{r}{2} - 4\ve r - O(\log r)\tag*{}.
\end{align}
Combining this with (\ref{eq:mainThmEffDim4}) and (\ref{eq:lowerBoundDistanceAtT}) we have
\begin{align*}
K^{A,x}_r(\|x-y\|) &= K^{A,x}_{t}(\|x-y\|) + K^{A,x}_{r,t}(\|x-y\|\mid \|x-y\|) \\
&\geq K^{A,x}_{t}(\|x-y\|) + K^{A,x}_{r,t}(\|x-y\|\mid y) \\
&\geq \frac{dr}{4} + K^{A,D,x}_{r,t}(y\mid y) -5\ve r\\
&\geq \frac{dr}{4} + \frac{r}{2} - 9\ve r - O(\log r)\\
&= \frac{dr}{4} + \frac{r}{2} - 9\ve r - O(\log r).
\end{align*}
Since this bound holds for all sufficiently large $r$, we conclude that
\begin{center}
$\dim^{A,x}(\|x-y\|) \geq \frac{d}{4}  + \frac{1}{2} - 9\ve$.
\end{center}
Since $\ve$ was chosen arbitrarily, the conclusion follows.
\end{proof}

\section{Dimension bounds for distance sets}
In this section, we prove the main theorem of the paper. The strategy is to use the point-to-set principle to reduce this to the analogous bound for effective dimension, Theorem \ref{thm:mainThmEffDim}. We need to show that, for all $x$ outside a set of dimension at most $1$, there is a $y\in E$ so that $x$ and $y$ satisfy the following conditions, relative to an appropriate oracle $A$.
\begin{itemize}
\item[\textup{(C1)}] $\dim^{A}(x) > 1$
\item[\textup{(C2)}] $K^{x,A}_r(e) = r - O(\log r)$ for all $r$.
\item[\textup{(C3)}] $\dim^A(y) \geq s$.
\item[\textup{(C4)}] $K^{x,A}_r(y) \geq K^{A}_r(y) - O(\log r)$ for all sufficiently large $r$. 
\item[\textup{(C5)}] $K^{A}_r(e\mid y) = r - o(r)$ for all $r$.
\end{itemize}
To prove the existence of points $x$ and $y$ with these conditions, we combine an important result of Orponen \cite{Orponen19DimSmooth} on radial projections with recent work in algorithmic information theory.

\begin{lem}\label{lem:reductionToEff}
Let $E\subseteq\R^2$ be a compact set such that $0<\mathcal{H}^s(E)<\infty$ for some $s > 1$. Let $A\subseteq\N$ be any oracle relative to which $E$ is effectively compact. Then there is a set $B\subseteq\R^2$ of Hausdorff dimension at most $1$ such that, for every  $x\in\R^2 - B$, there is a $y\in E$ such that $x,y$ satisfy (C1)-(C5).
\end{lem}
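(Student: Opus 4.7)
The plan is to extract a well-behaved subset $E_0 \subseteq E$ and then, for each of the conditions (C1)--(C5), exhibit an exceptional set of $x$ of Hausdorff dimension at most one, so that the final $B$ can be taken as their union. First, by enlarging $A$ if necessary to also encode an $s$-Frostman measure on $E$ (which exists because $0<\mathcal{H}^s(E)<\infty$) and applying the point-to-set principle (Theorem~\ref{thm:strongPointToSetDim}) relative to this stronger oracle, I may assume that $\dim^A(y)\geq s$ for $\mathcal{H}^s$-almost every $y\in E$; let $E_0$ denote this set of dimensionally generic points, so every $y\in E_0$ satisfies (C3). Condition (C1) fails only on $\{x:\dim^A(x)\leq 1\}$, which has Hausdorff dimension at most one, again by the point-to-set principle.

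For condition (C2) I invoke Orponen's radial projection theorem~\cite{Orponen19DimSmooth}: because $\dim_H(E_0)\geq s>1$, there is a set $B_2$ of Hausdorff dimension at most one such that for every $x\notin B_2$ the pushforward of $\mathcal{H}^s|_{E_0}$ under the radial projection $y\mapsto (y-x)/\|y-x\|$ is absolutely continuous with respect to arc length on $S^1$. Since the set of $e\in S^1$ failing $K^{x,A}_r(e)=r-O(\log r)$ has $\mathcal{H}^1$-measure zero (a standard counting estimate for prefix-free codes of length below $r-c\log r$), this absolute continuity forces condition (C2) to hold for $\mathcal{H}^s$-a.e. $y\in E_0$ whenever $x\notin B_2$.

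The remaining and most delicate step is to control the two independence conditions (C4) and (C5) simultaneously. Here I will appeal to the algorithmic independence results from~\cite{Stull22a}, which I expect to imply that for $\mathcal{H}^s$-almost every $y\in E_0$, the set of $x\in\R^2$ at which $K^{x,A}_r(y)<K^A_r(y)-O(\log r)$ or $K^A_r(e\mid y)\neq r-o(r)$ has Hausdorff dimension at most one. A Fubini-type interchange applied to the Borel set of ``bad'' pairs $(x,y)\subseteq\R^2\times E_0$ then yields a set $B_3$ of Hausdorff dimension at most one such that for every $x\notin B_3$ both (C4) and (C5) hold for $\mathcal{H}^s$-a.e. $y\in E_0$. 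Taking $B=\{x:\dim^A(x)\leq 1\}\cup B_2\cup B_3$, which has Hausdorff dimension at most one, for every $x\notin B$ the set of $y\in E_0$ simultaneously satisfying (C1)--(C5) has positive $\mathcal{H}^s$-measure and is therefore non-empty, producing the required $y$.

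The main obstacle will be verifying that the independence results of~\cite{Stull22a} can be set up precisely to yield \emph{simultaneous} validity of (C4) and (C5) for the $\mathcal{H}^s$-typical $y$ already selected to satisfy (C2); in particular, one must carry out the interchange argument in a way that is compatible with the absolute continuity from Orponen's theorem, so that the resulting $\mathcal{H}^s$-generic $y$ lies in the intersection of all three favorable sets. This is precisely the synthesis of Orponen's radial projection theorem with algorithmic independence that the introduction advertises.
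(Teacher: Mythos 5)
Your skeleton (union of exceptional sets; Orponen for the radial projections; a counting argument showing the non-random directions have $\mathcal{H}^1$-measure zero) matches the paper's, but the step where you handle (C4) and (C5) has a genuine gap. You state the independence result in the wrong direction --- ``for $\mathcal{H}^s$-a.e.\ $y$, the set of bad $x$ has dimension at most one'' --- and then invoke a ``Fubini-type interchange'' to swap the quantifiers. No such interchange is valid for Hausdorff dimension of exceptional sets: knowing that for a.e.\ $y$ the bad $x$-slice has dimension $\leq 1$ does not bound the dimension of the set of $x$ whose bad $y$-slice has positive measure (the union over $y$ of one-dimensional slices can fill the plane). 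The interchange is also unnecessary: the quantifier order you actually need is available directly. For a \emph{fixed} $x$, the set $\{y\in E : (\exists^\infty r)\, K^{A,x}_r(y) < K^A_r(y) - 8\log r\}$ has $\mu$-measure zero, where $\mu=\mathcal{H}^s|_E$; this is Lemma~\ref{lem:independenceResults}(2), proved via the Case--Lutz generalization of Levin's coding theorem applied to the semimeasure $d\mapsto \mu(Q_d)/\mu(E)$. That gives (C4) for $\mu$-a.e.\ $y$ with no interchange. Moreover, (C5) is not obtained from any measure-theoretic independence statement at all: since $e=(y-x)/\|x-y\|$ depends on both points, the paper instead proves (Lemma~\ref{lem:c5fromc14}) that (C5) follows \emph{deterministically} from (C1)--(C4), by a symmetry-of-information computation using $p_{e^\bot}x = p_{e^\bot}y$ and the projection theorem of \cite{LutStu18}. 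Your proposal leaves (C5) resting on an unspecified appeal to \cite{Stull22a} that does not exist in the form you need.

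A second omission: the entire argument (Orponen's theorem as used here, and the complexity estimates downstream, which involve terms like $\log\|x-y\|$) only applies to $x$ outside the support of $\mu$, and $\mathrm{spt}(\mu)$ can have dimension $s>1$, so it cannot be absorbed into $B$. The paper resolves this by splitting $E$ into two disjoint compact subsets $E_1,E_2$ of positive finite $\mathcal{H}^s$-measure; every $x$ avoids the support of at least one of $\mu_1,\mu_2$, and the two exceptional sets $B_1,B_2$ are unioned. Your single-set argument, as written, only yields the conclusion for $x\notin B\cup\mathrm{spt}(\mu)$, which is weaker than the lemma claims.
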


Assuming Lemma \ref{lem:reductionToEff}, we can prove the main theorem of this paper.
\begin{T1}
Let $E\subseteq\R^2$ be analytic such that $\dim_H(E) = s > 1$. Then for all $x\in\R^2$ outside a set $B$ with $\dim_H(B)\leq 1$,
\begin{center}
$\dim_H(\Delta_x E) \geq \frac{s}{4} + \frac{1}{2}$.
\end{center}
\end{T1}
\begin{proof}
Let $E\subseteq\R^2$ be an analytic set such that $\dim_H(E) > 1$. Then for every $1<s<\dim_H(E)$, there is a compact subset $F\subseteq E$  such that $0 < \mathcal{H}^s(F) < \infty$. For every $n\in\N$, let $s_n = s - 1/n$, and let $F_n$ such a compact subset of $E$ with respect to $s_n$.  Let $A_n \subseteq \N$ be an oracle relative to which $F_n$ is effectively compact. Finally, Let $A$ be the oracle which is the join of all $A_n$. Note that, for every $n\in\N$, $F_n$ is effectively compact relative to $A$.

For each $n$, let $B_n \subseteq \R^2$ be the oracle and the set guaranteed by Lemma \ref{lem:reductionToEff}, with respect to $F_n$. Let $B$ be the union of all $B_n$, noting that $\dim_H(B) \leq 1$. Then, for every $x \in \R^2 - B$, and every $n$, there is a point $y\in F_n$ such that $x,y$ satisfy the conditions (C1)-(C5). Therefore, for all $x \in\R^2 - B$, and every $n\in\N$, by Theorem \ref{thm:mainThmEffDim}
\begin{center}
$\dim^{A,x}(\|x-y\|) \geq \frac{s}{4} + \frac{1}{2} - \frac{1}{4n}$.
\end{center}

Finally, let $n\in\N$. Since $F_n$ is effectively compact relative to $A$, $\Delta_x F_n$ is effectively compact relative to $(A,x)$, for every $x\in \R^2$ (Proposition \ref{prop:effcompact}). Therefore, by Theorem \ref{thm:strongPointToSetDim}, 
\begin{align*}
\dim_H(\Delta_x E)&\geq \dim_H(\Delta_x F_n) \\
&= \sup\limits_{y\in E} \dim^{A,x}(\|x-y\|)\\
&\geq \frac{s}{4} + \frac{1}{2} - \frac{1}{4n}.
\end{align*}
Since $n$ was arbitrary, the conclusion follows.
\end{proof}

\subsection{Proof of Lemma \ref{lem:reductionToEff}}

We first show that condition (C5) follows from (C1)-(C4). 
\begin{lem}\label{lem:c5fromc14}
If $x,y\in\R^2$ satisfy (C1)-(C4) with respect to $A\subseteq\N$, then they also satisfy (C5).
\end{lem}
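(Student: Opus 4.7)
The plan is to derive (C5) from (C1)--(C4) via the standard symmetry-of-information toolkit, followed by a geometric fiber-counting bound. First, I would symmetrize condition (C4): since $K^{x,A}_r(y) \geq K^A_r(y) - O(\log r)$, applying Lemma~\ref{lem:symmetry} yields $K^A_{s,r}(x \mid y) \geq K^A_s(x) - O(\log r)$ for every $s \leq r$, and in particular by (\ref{eq:OraclesDontIncrease}) the same bound $K^{A,y}_r(x) \geq K^A_r(x) - O(\log r)$ holds when $y$ is promoted from a precision-$r$ approximation to a full oracle. Thus $x$ is essentially incompressible relative to the pair $(A, y)$.

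Next, exploit the fact that $e = (y-x)/\|y-x\|$ is computable from $(x,y)$ at every precision, so $K^A_r(x, y, e) = K^A_r(x, y) + O(\log r)$. Combined with Lemma~\ref{lem:unichain} and the previous step, this gives $K^A_r(x, y, e) = K^A_r(y) + K^A_r(x) + O(\log r)$. Decomposing the same quantity in the other order as $K^A_r(x, y, e) = K^A_r(y, e) + K^A_r(x \mid y, e) + O(\log r)$ and using $K^A_r(e \mid y) = K^A_r(y, e) - K^A_r(y) + O(\log r)$, I obtain the key identity $K^A_r(e \mid y) \geq K^A_r(x) - K^A_r(x \mid y, e) - O(\log r)$.

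The third step is geometric. Given oracle access to $y$ and $e$, the point $x$ is constrained to the ray $\{y - t e : t \geq 0\}$, and is therefore determined by the single bounded real parameter $t = \|x - y\|$. This forces the trivial bound $K^A_r(x \mid y, e) \leq r + O(\log r)$, and substitution yields $K^A_r(e \mid y) \geq K^A_r(x) - r - O(\log r)$.

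The main obstacle is closing the gap to the required $r - o(r)$. The naive substitution using only (C1) $\dim^A(x) > 1$ gives $K^A_r(e \mid y) \geq (\dim^A(x) - 1) r + o(r)$, which is strictly positive but falls short of $r - o(r)$ unless $\dim^A(x) = 2$. To bridge this, I expect one must leverage (C2) and (C3) jointly: applying Lemma~\ref{lem:symmetry} to (C2) at every scale $s \leq r$ gives $K^{A,e}_s(x) \geq K^A_s(x) - O(\log r)$, and combining this with the lower bound $\dim^A(y) \geq d > 1$ from (C3) should force the distance parameter $\|x-y\|$ to inherit nearly all of its complexity from sources other than the ray-fiber, sharpening $K^A_r(x \mid y, e)$ to approximately $K^A_r(x) - r + o(r)$. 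Making this rigorous, likely via a scale-by-scale analysis reminiscent of the yellow/teal decomposition in Section~\ref{sec:projections}, is where the technical bulk of the proof will lie.
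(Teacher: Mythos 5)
Your setup is sound and, after a symmetry-of-information rearrangement, is essentially equivalent to the paper's: both reduce (C5) to showing that $K^A_r(x \mid y, e)$ (equivalently, $K^A_r(y \mid x, e)$) falls below its trivial bound by a full $r - o(r)$. But the proposal stops exactly where the real content lies, and the sketch you give for closing the gap points in the wrong direction. The missing ingredient is not a joint use of (C2) and (C3), nor a fresh yellow/teal decomposition; it is the effective Marstrand projection theorem of Lutz and Stull \cite{LutStu18}. Since (C2) says $e$ (hence $e^\bot$) is random relative to $(A,x)$ at every precision, and (C1) gives $\dim^A(x) > 1$, that theorem yields $K^A_r(p_{e^\bot} x \mid e^\bot) = r - o(r)$. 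Because $x - y$ is parallel to $e$, we have $p_{e^\bot} x = p_{e^\bot} y$, so the pair $(y, e)$ already encodes a full-complexity projection of $x$; combined with the independence of $x$ from $e^\bot$ (again from (C2), via symmetry of information), this gives $K^A_r(x \mid y, e) \leq K^A_r(x \mid p_{e^\bot} x, e^\bot) + O(\log r) \leq K^A_r(x) - r + o(r)$, which is exactly the sharpening you ask for. Substituting into your key identity $K^A_r(e \mid y) \geq K^A_r(x) - K^A_r(x \mid y, e) - O(\log r)$ finishes the proof. (The paper runs the mirror-image computation, bounding $K^A_r(y \mid e^\bot, x) \leq K^A_r(y \mid e^\bot) - K^A_r(p_{e^\bot} x \mid e^\bot) + o(r)$, but the essential input is the same projection lower bound.)

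Two smaller points. Condition (C3) plays no role here --- the lemma needs only (C1), (C2) and (C4) --- so your expectation that $\dim^A(y) \geq d$ must enter is a red herring. And as written your argument establishes only $K^A_r(e \mid y) \geq (\dim^A(x) - 1)r - o(r)$, which you correctly identify as insufficient; since you explicitly defer ``the technical bulk of the proof,'' the proposal as it stands has a genuine gap rather than being a complete alternative argument.
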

\begin{proof}
Assume that $x,y\in \R^2$ satisfy (C1)-(C4). Since 
\begin{center}
$K^{A,x}_r(e) = K^{A,x}_r(e^\bot) + O(\log r)$
\end{center}
for every $r\in\N$, by the main result of \cite{LutStu18}, 
\begin{center}
$\dim^{A,e^\bot}(p_{e^\bot} x) = 1$.
\end{center}
By the definition of of effective dimension, we have
\begin{equation}\label{eq:c1c2ProjLowerBound}
K^A_r(p_{e^\bot} x \mid e^\bot) \geq r - o(r).
\end{equation}

Let $r\in\N$. Then, by symmetry of information, and conditions (C2) and (C4),
\begin{align}
K^A_r(x) + K^A_r(y) &= K^A_r(x) + K^A_r(y\,\mid\, x) +O(\log r)\tag*{}\\
&= K^A_r(x, y)+O(\log r)\tag*{}\\
&= K^A_r(x,y,e)+O(\log r)\tag*{}\\
&= K^A_r(x) + K^A_r(e \, | \, x)  + K^A_r(y \, |\, e, x)+O(\log r)\tag*{}\\
&= K^A_r(x) + r + K^A_r(y \, | \, e, x)+O(\log r)\tag*{}.
\end{align}
Rearranging, we see that 
\begin{equation}\label{eq:c5fromc14eq1}
K^A_r(y) = K^A_r(y \, |\, e, x) + r+O(\log r).
\end{equation}
Note that $p_{e^\bot}x = p_{e^\bot} y$ and $K^A_r(y \, | \, e, x) = K^A_r(y \, | \, e^\bot, x) + O(\log r)$. Thus, by (\ref{eq:c1c2ProjLowerBound}),
\begin{align}
K^A_r(y \, |\, e^\bot, x) &\leq K^A_r(y \, | \, e^\bot, p_{e^\bot} x)+O(\log r)\tag*{}\\
&= K^A_r(y \, | \, e^\bot) - K^A_r(p_{e^\bot} x \, | \, e^\bot)+O(\log r)\tag*{}\\
&= K^A_r(y \, | \, e^\bot) - r + o(r)\label{eq:c5fromc14eq2}, 
\end{align}
Combining (\ref{eq:c5fromc14eq2}) and (\ref{eq:c5fromc14eq1}), we conclude that
\begin{align}
K^A_r(y) &= K^A_r(y \, |\, e, x) + r+O(\log r)\tag*{}\\
&= K^A_r(y \, | \, e^\bot, x) + r+O(\log r)\tag*{}\\
&\leq K^A_r(y \, | \, e^\bot) + o(r)\tag*{}\\
&= K^A_r(y \, | \, e) + o(r)\tag*{}\\
&\leq K^A_r(y) + o(r)\tag*{}.
\end{align}

This shows that $e^\bot$, and therefore $e$, contains (essentially) no information about $y$. So, by the symmetry of information, $y$ does not contain information about $e$. Formally,
\begin{align}
K^A_r(y) + K^A_r(e  \mid y) &= K^A_r(y, e)+O(\log r)\tag*{}\\
&= K^A_r(e) + K^A_r(y\mid e) +O(\log r)\tag*{}\\
&= K^A_r(e) + K^A_r(y)+ o(r)\tag*{}\\
\end{align}
Therefore, $K^A_r(e) = K^A_r(e \mid  y) + o(r)$, and the proof is complete. 
\end{proof}

We will also need a theorem of Orponen \cite{Orponen19DimSmooth} on radial projections. For any $x\in \R^2$, define the \textit{radial projection} $\pi_x : \R^2 - \{x\} \to \mathcal{S}^1$ by 
\begin{center}
$\pi_x(y) = \frac{y-x}{\|x-y\|}$.
\end{center}
For any $x$ and measure $\mu$, the pushforward of $\mu$ under $\pi_x$ is the measure given by
\begin{center}
$\pi_{x\#} \mu (S) = \mu\left( \pi_x^{-1}(S)\right)$.
\end{center}
If the pushforward of $\mu$ under $\pi_x$ is absolutely continuous with respect to $\mathcal{H}^1\vert_{\mathcal{S}^1}$, then for all $S$ such that $\mathcal{H}^1\vert_{\mathcal{S}^1}(S) = 0$, we must have $\mu(\pi_x^{-1}(S)) = 0$.

\begin{thm}[\cite{Orponen19DimSmooth}]\label{thm:orponenradial}
Let $E \subseteq \R^2$ be a Borel set with $s = \dim_H(E) > 1$ such that there is a measure $\mu\in \mathcal{M}(E)$ satisfying $I_d(\mu) < \infty$ for all $1 < d < s$. Then there is a Borel set $B \subseteq \R^2$ with $\dim_H(B) \leq 2 - \dim_H(E)$ such that, for every $x \in \R^2 - B$, $\mathcal{H}^1(\pi_x(E)) > 0$. Moreover,  the pushforward of $\mu$ under $\pi_x$ is absolutely continuous with respect to $\mathcal{H}^1 \vert_{\mathcal{S}^1}$ for $x \notin B$.
\end{thm}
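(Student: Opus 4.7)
My plan is to use Orponen's radial projection theorem together with algorithmic independence results from \cite{Stull22a} and a Fubini-style measure argument on a Frostman measure supported on $E$.

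First, set up the Frostman measure. Since $0 < \mathcal{H}^s(E) < \infty$, the normalized restriction $\mu := \mathcal{H}^s\vert_E / \mathcal{H}^s(E)$ is a probability measure on $E$ satisfying $\mu(B(z,r)) \lesssim r^t$ for every $1 < t < s$, and in particular $I_t(\mu) < \infty$. I will absorb into the oracle $A$ any further computability content needed to describe $\mu$ and $E$, without loss of generality.

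Second, construct $B$ as the union of three exceptional sets, each of Hausdorff dimension at most $1$. Let
\[
B_1 := \{x \in \R^2 : \dim^A(x) \leq 1\};
\]
by the point-to-set principle $\dim_H(B_1) \leq 1$. Apply Theorem \ref{thm:orponenradial} to $\mu$ to obtain $B_2$ with $\dim_H(B_2) \leq 2 - s < 1$ such that for every $x \notin B_2$, $\pi_{x\#}\mu \ll \mathcal{H}^1\vert_{\mathcal{S}^1}$. Finally, invoke the independence result of \cite{Stull22a} to produce $B_3$ with $\dim_H(B_3) \leq 1$, consisting of those $x$ for which the set
\[
\{y \in E : K^{x, A}_r(y) < K^A_r(y) - C \log r \text{ for infinitely many } r\}
\]
has positive $\mu$-measure (for a fixed sufficiently large constant $C$). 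Set $B := B_1 \cup B_2 \cup B_3$, which has $\dim_H(B) \leq 1$.

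Third, fix any $x \notin B$ and verify that (C1)--(C4) each hold on a $\mu$-conull subset of $E$, so they hold simultaneously at some $y$. (C1) follows from $x \notin B_1$. For (C3), the point-to-set principle applied with oracle $A$ gives $\dim_H\{y \in E : \dim^A(y) < s - 1/n\} \leq s - 1/n$, so each such set is $\mathcal{H}^s$-null and hence $\mu$-null; the countable union shows $\mu$-a.e. $y$ has $\dim^A(y) \geq s$. For (C2), the set $R \subseteq \mathcal{S}^1$ of $e$ with $K^{x,A}_r(e) < r - C \log r$ for some $r$ is $\mathcal{H}^1$-null by the standard counting of short programs; since $x \notin B_2$, Orponen's absolute continuity gives $\mu(\pi_x^{-1}(R)) = 0$, so (C2) holds for $\mu$-a.e. $y$. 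For (C4), $x \notin B_3$ directly gives (C4) for $\mu$-a.e. $y$. The intersection of these three $\mu$-conull sets is $\mu$-conull, hence non-empty, and any $y$ in it satisfies (C1)--(C4); then (C5) is free from Lemma \ref{lem:c5fromc14}.

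The main obstacle is the construction of $B_3$ and the bound $\dim_H(B_3) \leq 1$. By symmetry of information, condition (C4) failing is approximately equivalent to $K^A_r(x \mid y) < K^A_r(x) - \Omega(\log r)$ for infinitely many $r$, i.e., to $y$ containing nontrivial information about $x$. Bounding the Hausdorff dimension of the set of $x$'s about which a $\mu$-positive fraction of $y \in E$ contains information requires a careful covering argument that leverages the Frostman property of $\mu$; this is the technical content packaged in the independence result of \cite{Stull22a}, and it is the one place where the argument is not a direct application of a classical theorem.
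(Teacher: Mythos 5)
There is a fundamental mismatch: the statement you were asked to prove is Orponen's radial projection theorem itself (Theorem \ref{thm:orponenradial}), a classical geometric measure theory result asserting that, outside an exceptional set $B$ of Hausdorff dimension at most $2-\dim_H(E)$, the radial projection $\pi_x(E)$ has positive $\mathcal{H}^1$-measure and the pushforward $\pi_{x\#}\mu$ is absolutely continuous with respect to $\mathcal{H}^1\vert_{\mathcal{S}^1}$. The paper does not prove this; it cites it from \cite{Orponen19DimSmooth}, and any genuine proof would require the classical machinery of that paper (energy estimates and a careful analysis of radial projections of measures with finite $d$-energy), none of which appears in your write-up. What you have written instead is essentially a proof of the paper's Lemma \ref{lem:reductionToEff} (via Lemma \ref{lem:existenceGoodPoints}): the construction of exceptional sets $B_1, B_2, B_3$, the verification of (C1)--(C4) on a $\mu$-conull set, and the appeal to Lemma \ref{lem:c5fromc14} for (C5). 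That argument does closely track the paper's reduction step, but it is not a proof of the stated theorem.

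Worse, your argument is circular with respect to the target: in your second step you explicitly ``apply Theorem \ref{thm:orponenradial} to $\mu$ to obtain $B_2$,'' i.e., you invoke the very statement you were supposed to establish in order to get the absolute continuity $\pi_{x\#}\mu \ll \mathcal{H}^1\vert_{\mathcal{S}^1}$, which is then the only mechanism you have for controlling the direction set in (C2). Nothing in your proposal derives the conclusions of the theorem --- positivity of $\mathcal{H}^1(\pi_x(E))$ or absolute continuity of the pushforward --- from the hypotheses ($\mu\in\mathcal{M}(E)$ with $I_d(\mu)<\infty$ for $1<d<s$). Note also that even as a construction of an exceptional set, your $B$ only satisfies $\dim_H(B)\le 1$, whereas the theorem demands the sharper bound $2-\dim_H(E)<1$; unioning in $B_1=\{x:\dim^A(x)\le 1\}$ and the algorithmic-independence set $B_3$ destroys that bound. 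So the correct course is either to cite Orponen's theorem as the paper does, or to reproduce its classical proof; the algorithmic-information argument you give belongs to (and is already essentially the content of) Lemmas \ref{lem:existenceGoodPoints} and \ref{lem:reductionToEff}, not to this statement.
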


In order to prove Lemma \ref{lem:reductionToEff}, we need to show that, for ``most" points $x\in \R^2$, there is some $y\in E$ so that $x, y$ satisfy (C1)-(C5). By Lemma \ref{lem:c5fromc14}, this reduces to showing that $x, y$ satisfy (C1) - (C4). We  combine Theorem \ref{thm:orponenradial} with results on independence in algorithmic randomness to prove this.
\begin{lem}\label{lem:existenceGoodPoints}
Let $E \subseteq \R^2$ be compact such that $0 < \mathcal{H}^s(E) < \infty$ for some $s > 1$. Let $\mu = \mathcal{H}^s\vert_{E}$. Let $A\subseteq\N$ such that $A$ is computably compact relative to $E$, and $\mu$ is computable relative to $A$. Then for all $x\in \R^2 - spt(\mu)$ outside a set $B$ of Hausdorff dimension at most $1$, there is a $y \in E$ such that $x,y$ satisfy (C1)-(C5) with respect to $A$. 
\end{lem}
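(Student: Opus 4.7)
The plan is to realize $B$ as the union of three exceptional sets, each of Hausdorff dimension at most $1$, corresponding respectively to failures of (C1), (C2), and (C4); outside this union I will locate $y$ in a $\mu$-conull subset of $E$ on which (C2), (C3), and (C4) simultaneously hold, and (C5) will then come for free from Lemma~\ref{lem:c5fromc14}.

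First, let $B_1 = \{x : \dim^A(x) \leq 1\}$; by the point-to-set principle (Theorem~\ref{thm:p2s}) relative to $A$, $\dim_H(B_1) \leq 1$, and any $x \notin B_1$ satisfies (C1). Second, applying Theorem~\ref{thm:orponenradial} to $\mu$ (which has $I_d(\mu) < \infty$ for all $1 < d < s$ as a finite Frostman measure on a set of finite $s$-dimensional Hausdorff measure) produces a Borel set $B_2$ with $\dim_H(B_2) \leq 2-s < 1$ such that for $x \notin B_2$ the pushforward $\pi_{x\#}\mu$ is absolutely continuous with respect to $\mathcal{H}^1\vert_{\mathcal{S}^1}$. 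Since the set of $e \in \mathcal{S}^1$ failing $K^{x,A}_r(e) \geq r - O(\log r)$ has $\mathcal{H}^1$-measure zero (effective randomness relative to $(x,A)$ having full $\mathcal{H}^1$-measure on the circle), absolute continuity translates this into a $\mu$-null set of $y$'s, giving (C2) for $\mu$-a.e.\ $y$. For (C3), the point-to-set principle relative to $A$ shows $\{y : \dim^A(y) < s\}$ has $\mathcal{H}^s$-measure zero (as the countable union over $n$ of sets of Hausdorff dimension at most $s - 1/n$), so since $\mu \ll \mathcal{H}^s$, $\mu$-a.e.\ $y \in E$ satisfies (C3).

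The substantive step is (C4). I will invoke the independence result in~\cite{Stull22a}: applied to the measure $\mu$ (which is computable from $A$) it produces a set $B_3$ with $\dim_H(B_3) \leq 1$ such that for every $x \notin B_3$ the inequality $K^{x,A}_r(y) \geq K^{A}_r(y) - O(\log r)$ holds for $\mu$-almost every $y \in E$ and all sufficiently large $r$. Setting $B = B_1 \cup B_2 \cup B_3$, the three $\mu$-conull subsets of $E$ delivering (C2), (C3), and (C4) intersect in a $\mu$-conull set, which is nonempty since $\mu(E) = \mathcal{H}^s(E) > 0$. Any $y$ in this intersection yields a pair $(x,y)$ satisfying (C1)--(C4), and Lemma~\ref{lem:c5fromc14} upgrades this to (C5). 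The main obstacle is isolating the exceptional set for (C4) at dimension at most $1$: this cannot come from the point-to-set principle alone, because (C4) is a joint condition on the pair $(x,y)$ and depends both on algorithmic independence and on the Frostman-type regularity of $\mu$, which is exactly what the machinery of~\cite{Stull22a} is designed to supply.
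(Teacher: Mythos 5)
Your proposal is correct and follows essentially the same route as the paper: point-to-set for (C1), Orponen's radial projection theorem plus the $\mathcal{H}^1$-nullity of non-random directions for (C2), a point-to-set/Frostman argument for (C3), the algorithmic independence result of \cite{Stull22a} for (C4), and Lemma~\ref{lem:c5fromc14} to get (C5) for free, finishing by intersecting $\mu$-conull sets. The only (harmless) discrepancy is your exceptional set $B_3$ for (C4): the independence result as actually proved (Lemma~\ref{lem:independenceResults}(2), via Levin's coding theorem relative to $(A,x)$) holds for \emph{every} $x$, with the $\mu$-null set of bad $y$'s depending on $x$, so no exceptional set of $x$'s is needed for that condition.
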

\begin{proof}
Let $B$ be the set guaranteed by Theorem \ref{thm:orponenradial}, with respect to $E$ and $\mu = \mathcal{H}^s\vert_E$. Let $x \in \R^2 - spt(\mu) - B$ be any point such that $\dim^A(x) > 1$. Let 
\begin{center}
$N = \{e \in \mathcal{S}^1 \, | \, (\exists^\infty r) \,  K^{x,A}_r(e) < r - 4\log r\}$.
\end{center}
Note that $N$ is Borel and $N$ has $\mathcal{H}^1\vert_{\mathcal{S}^1}$ measure zero (Lemma \ref{lem:independenceResults}). Thus, by Theorem \ref{thm:orponenradial}, $\mu(\pi_x^{-1}(N)) = 0$. Let $F$ be the set of points $y \in E$  such that
\begin{itemize}
\item $\dim^A(y) \geq s$, and 
\item $K^{x,A}_r(y) > K^A_r(y) - 8\log r$ for all sufficiently large $r$.
\end{itemize}
Note that $F$ is Borel, and therefore measurable. It is implicit in \cite{Stull22a} that $\mathcal{H}^s(F)  = \mathcal{H}^s(E) > 0$. For completeness, we give a proof of this fact in the appendix (Lemma \ref{lem:independenceResults}). Therefore, there is a point $y\in F - \pi_x^{-1}(N)$. It is immediate by our choice of $x$ and $y$ that they satisfy (C1)-(C4) relative to $A$. Hence, by Lemma \ref{lem:c5fromc14}, $x$ and $y$ also satisfy (C5), and the proof is complete.
\end{proof}

We are now able to prove Lemma \ref{lem:reductionToEff}, thereby completing the proof of the main theorem.
\begin{L1}
Let $E\subseteq\R^2$ be a compact set such that $0<\mathcal{H}^s(E)<\infty$ for some $s > 1$. Let $A\subseteq\N$ be any oracle relative to which $E$ is effectively compact. Then there is a set $B\subseteq\R^2$ of Hausdorff dimension at most $1$ such that, for every  $x\in\R^2 - B$, there is a $y\in E$ such that $x,y$ satisfy (C1)-(C5).
\end{L1}
\begin{proof}
Let $E\subseteq \R^2$ be compact with $0< \mathcal{H}^s(E) <\infty$, for some $s > 1$. Then there are disjoint, compact subsets $E_1, E_2\subseteq E$  such that $0<\mathcal{H}^s(E_1), \mathcal{H}^s(E_2) < \infty$. Let $\mu_1 = \mathcal{H}^s\vert_{E_1}$ and $\mu_2 = \mathcal{H}^s\vert_{E_2}$. Let $A_1$ and $A_2$ be oracles relative to which $E_1$ and $E_2$ are effectively compact. Let $A$ be an oracle relative to which $E$ is effectively compact. Let $A$ be the join of $A_1, A_2, A_3, \hat{\mu}_1$ and $\hat{\mu}_2$, where $\hat{\mu}_i$ is an oracle encoding $\mu_i(Q)$ for every ball $Q$ of rational radius with a rational center. 

For $i\in\{1,2\}$, let $B_i$ be the set of exceptional points so that, for all $x\notin B_i \cup \text{spt}(\mu_i)$, there is a $y\in E_i$ so that $x$ and $y$ satisfy conditions (C1)-(C5) with respect to $A$ (guaranteed by Lemma \ref{lem:existenceGoodPoints}). 

Let 
\begin{center}
$B = \{x\in \R^2\mid \dim^A(x) \leq 1\} \cup_i B_i$.
\end{center}
Note that $\dim_H(B) \leq 1$. Let $x\in\R^2 - B$. Then by definition, $x$ satisfies (C1). Moreover, there is a point $y\in E_1\cup E_2$ so that $x,y$ satisfies conditions (C1)-(C5) with respect to $A$, and the claim follows.
\end{proof}

\bibliography{pdsc}

\appendix
\renewcommand{\thesection}{\Alph{section}}
\section{Appendix}

\begin{obs}\label{obs:geometricObs}
Let $x,y, z\in\R^2$ such that $\|x - y\| = \|x-z\|$. Let $e_1 = \frac{y - x}{\|x - y\|}$ and $e_2 = \frac{y-z}{\|y-z\|}$. Let $w\in\R^2$ be the midpoint of the line segment from $y$ to $z$, and $e_3 = \frac{w-x}{\|x-w\|}$. Then 
\begin{enumerate}
\item $|p_{e_1} y - p_{e_1} z| = \frac{\|y - z\|^2}{2\|x-y\|}$, and
\item $\|e_1 - e_3\| < \frac{\|y-z\|}{\|x-y\|}$.
\end{enumerate}
\end{obs}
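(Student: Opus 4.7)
Both identities become elementary planar computations once we translate so that $x = 0$; the hypothesis $\|x-y\| = \|x-z\|$ then reads $\|y\| = \|z\|$. For part (1), I would expand the scalar quantity
$p_{e_1} y - p_{e_1} z = e_1 \cdot (y-z) = y\cdot(y-z)/\|y\|$,
and apply the polarization identity $\|y-z\|^2 = \|y\|^2 - 2\,y\cdot z + \|z\|^2$, which under $\|y\|=\|z\|$ collapses to $2\|y\|^2 - 2\,y\cdot z$. Thus $y\cdot(y-z) = \|y-z\|^2/2$, and dividing by $\|y\| = \|x-y\|$ yields the first identity; the expression is nonnegative, so the absolute value is automatic.

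For part (2), the geometric crux is that because $x$ is equidistant from $y$ and $z$, it lies on the perpendicular bisector of the chord $yz$ (on which $w$ also lies). Hence $(x-w)\perp(y-z)$, so the triangle $xwy$ is right-angled at $w$. Let $\alpha \in [0,\pi/2]$ be the angle at $x$ in this triangle; by the definitions of $e_1$ and $e_3$, this is precisely the angle between $e_1$ and $e_3$. The right triangle then gives $\sin\alpha = \|w-y\|/\|x-y\| = \|y-z\|/(2\|x-y\|)$.

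To finish I would convert the Euclidean distance between unit vectors into a sine via $\|e_1-e_3\|^2 = 2(1-\cos\alpha) = 4\sin^2(\alpha/2)$, so $\|e_1-e_3\| = 2\sin(\alpha/2)$. The double-angle identity gives $2\sin(\alpha/2) = \sin\alpha/\cos(\alpha/2)$, and since $\alpha\le \pi/2$ we have $\cos(\alpha/2) \ge \cos(\pi/4) > 1/2$. Therefore $\|e_1-e_3\| < 2\sin\alpha = \|y-z\|/\|x-y\|$, as claimed. There is no substantive obstacle in the argument; the only care needed is to note that the degenerate cases $y=z$ and $w=x$ (where $e_2$ or $e_3$ fails to be defined) are implicitly excluded by the hypothesis that these unit vectors exist.
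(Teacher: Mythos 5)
Your proposal is correct. Part (1) is, modulo presentation, the same computation as the paper's: the paper writes $|p_{e_1}y-p_{e_1}z|=\|x-y\|(1-\cos\theta)$ and invokes the law of cosines, which is exactly your polarization identity after translating $x$ to the origin; your version is slightly cleaner since it avoids the rotation. For part (2) the routes genuinely differ. The paper's proof is a one-line comparison: it asserts $\|e_1-e_3\|<\bigl\|e_1-\tfrac{z-x}{\|x-z\|}\bigr\|$ and computes the right-hand side exactly as $\tfrac{\|y-z\|}{\|x-y\|}$ using $\|x-y\|=\|x-z\|$; the strict inequality itself (which holds because, by the isoceles hypothesis, $e_3$ bisects the angle between $e_1$ and the unit vector toward $z$, and chord length between unit vectors is increasing in the angle) is left unjustified. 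Your argument instead computes $\|e_1-e_3\|=2\sin(\alpha/2)$ exactly and bounds it by $2\sin\alpha=\tfrac{\|y-z\|}{\|x-y\|}$ via the half-angle identity and $\alpha\le\pi/2$, which amounts to supplying the missing justification in quantitative form. Your handling of the degenerate cases ($y=z$, $w=x$) and of strictness (needing $\sin\alpha>0$) is careful and correct; there is no gap.
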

\begin{proof}
Consider the triangle with endpoints $x, y$ and $z$. By translating by $x$ and rotating we may assume that $e_1 = (0, 1)$. Then $|p_{e_1} y - p_{e_1} z| = \|x-y\| (1 - \cos \theta)$, where $\theta$ is the angle between vectors $y$ and $z$. Then we have
\begin{align*}
\cos \theta &= \frac{\|x-y\|^2 + \|x-z\|^2 - \|y-z\|^2}{2\|x-y\|\, \|x-z\|}\\
&= \frac{2\|x-y\|^2 - \|y-z\|^2}{2\|x-y\|^2}\\
&= 1 - \frac{\|y-z\|^2}{2\|x-y\|^2},
\end{align*}
and item (1) follows. 

To see (3), note that
\begin{align*}
\|e_1 - e_3\| &< \|e_1 - \frac{z-x}{\|x-z\|}\|\\
&= \frac{\|y - z\|}{\|x-y\|}.
\end{align*}
\end{proof}

\begin{obs}\label{obs:existProj}
Let $x\in \R^2$, $d \in \R_{> 0}$, $p \in \Q^2$, and $r \in \N$ such that $\vert d  - \|x - p\| \vert < 2^{-r}$. Then there is a $y \in \R^2$ such that $\| p - y\| < 2^{-r}$ and $d = \|x-y\|$.
\end{obs}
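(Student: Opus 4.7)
The plan is to construct $y$ explicitly by sliding along the ray from $x$ through $p$ until the distance to $x$ equals exactly $d$. Since the starting distance $\|x-p\|$ and the target distance $d$ differ by less than $2^{-r}$, sliding this small amount will keep us within $2^{-r}$ of $p$.

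First I handle the degenerate case $p = x$. Then $\|x-p\| = 0$, so the hypothesis gives $d < 2^{-r}$, and I simply take $y = x + (d,0)$; this has $\|x-y\| = d$ and $\|p-y\| = d < 2^{-r}$, as required.

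For the main case $p \neq x$, I set
\[
y := x + d \cdot \frac{p - x}{\|p - x\|}.
\]
By construction, $\|x-y\| = d$. For the other estimate, I compute
\[
\|p - y\| = \left\| (p - x) - d \cdot \frac{p-x}{\|p-x\|} \right\| = \bigl| \|p-x\| - d \bigr| < 2^{-r},
\]
where the last inequality is exactly the hypothesis. This completes both cases.

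I do not anticipate any obstacle here: the observation is a direct geometric construction, and the only subtlety is noting that the degenerate case $p = x$ is harmless because the hypothesis then forces $d$ itself to be less than $2^{-r}$.
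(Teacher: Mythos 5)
Your construction is correct and is in fact the same point as in the paper's proof: the paper writes $y = p + (d - \|x-p\|)\frac{p-x}{\|x-p\|}$, which simplifies to your $y = x + d\,\frac{p-x}{\|p-x\|}$, and both verifications proceed identically. Your explicit treatment of the degenerate case $p = x$ (which the paper's formula silently excludes by dividing by $\|x-p\|$) is a minor but welcome addition.
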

\begin{proof}
Suppose $x\in \R^2$, $d \in \R_{> 0}$, $p \in \Q^2$, and $r \in \N$ satisfy the hypothesis. Define
\[y = p + \left(d - \|x-p\|\right)\frac{p-x}{\|x-p\|}\,.\]

Then we deduce that
\begin{align*}
    \| x - y\| &= \| x- p -\left(d - \|x-p\|\right)\frac{p-x}{\|x-p\|}\\
   &= \| x-p + \frac{x-p}{\|x-p\|}\left(d - \|x-p\|\right)\|\\
    &=\| (x-p) \frac{d}{\|x-p\|}\|\\
    &= d.
\end{align*}
Moreover, 
\begin{align*}
    \| y-p\| &= \|\left(d - \|x-p\|\right)\frac{p-x}{\|x-p\|}\|\\
    &= \vert d - \|x-p\| \vert\\
    &< 2^{-r}.
\end{align*}

\end{proof}

The following lemma is needed in the proof of Lemma \ref{lem:existenceGoodPoints}. Item (1) is a standard argument in algorithmic randomness. Item (2) is implicit in \cite{Stull22a}, and is similar to methods used in \cite{LutLut20AlgOpt}.
\begin{lem}\label{lem:independenceResults}
Let $E \subseteq \R^2$ be compact and $0 < \mathcal{H}^s(E) < \infty$ for some $s > 1$. Let $\mu = \mathcal{H}^s\vert_{E}$. Let $A\subseteq\N$ such that $A$ is computably compact relative to $E$, and $\mu$ is computable relative to $A$. Then the following hold.
\begin{enumerate}
\item The set
\begin{center}
$N_1 = \{e \in \mathcal{S}^1 \, | \, (\exists^\infty r) \,  K^{x,A}_r(e) < r - 4\log r\}$.
\end{center}
has $\mathcal{H}^1\vert_{\mathcal{S}^1}$ measure $0$.
\item The set
\begin{center}
$N_2 = \{y\in E \mid (\exists^\infty r)\, K^{A,x}_r(y) < K^A_r(y) - 8\log r\}$
\end{center}
has $\mu$-measure $0$.
\end{enumerate}
\end{lem}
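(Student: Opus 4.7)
For part (1), I would run a direct Borel--Cantelli argument. Fix a universal $(x,A)$-oracle machine; for each $r$, the set $N_1^r := \{e \in \mathcal{S}^1 : K^{x,A}_r(e) < r - 4\log r\}$ is contained in the union of the $2^{-r}$-neighborhoods of the at most $2^{r - 4\log r} = 2^r/r^4$ rational outputs of $(x,A)$-programs of length less than $r - 4\log r$. Each such neighborhood has $\mathcal{H}^1\vert_{\mathcal{S}^1}$-measure $O(2^{-r})$, so $\mathcal{H}^1\vert_{\mathcal{S}^1}(N_1^r) = O(r^{-4})$. Since $\sum_r r^{-4} < \infty$, Borel--Cantelli gives $\mathcal{H}^1\vert_{\mathcal{S}^1}(N_1) = 0$.

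For part (2), my plan is to sandwich $K^{A,x}_r(y)$ and $K^A_r(y)$ between matching bounds that differ by only $O(\log r)$ for $\mu$-almost every $y \in E$. For the upper bound, since $\mathcal{H}^s(E) < \infty$ and $E$ is effectively compact relative to $A$, the usual covering lemma produces an $A$-computable cover of $E$ by at most $O(2^{sr})$ balls of radius $2^{-r}$, giving
\[
K^A_r(y) \leq sr + 2\log r + O(1) \quad \text{for every } y \in E.
\]
For the lower bound, I would use a Borel--Cantelli argument directly on $(A,x)$-programs: each program of length less than $sr - 2\log r$ outputs a $2^{-r}$-ball of $\mu$-mass at most $C \cdot 2^{-sr}$ by the Frostman upper-density estimate, so the set $V_r := \{y \in E : K^{A,x}_r(y) < sr - 2\log r\}$ satisfies $\mu(V_r) \leq C/r^2$, which is summable. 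Thus, for $\mu$-almost every $y$ and all sufficiently large $r$,
\[
K^{A,x}_r(y) \;\geq\; sr - 2\log r - O(1) \;\geq\; K^A_r(y) - 4\log r - O(1) \;\geq\; K^A_r(y) - 8\log r,
\]
which establishes $\mu(N_2) = 0$.

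The main obstacle is obtaining the uniform Frostman bound $\mu(B_{2^{-r}}(y)) \leq C \cdot 2^{-sr}$ required by the Borel--Cantelli step, since this is not automatic from $0 < \mathcal{H}^s(E) < \infty$; the classical upper-density theorem only yields $\limsup_{r \to \infty} \mu(B_{2^{-r}}(y))/2^{-sr} < \infty$ for $\mu$-almost every $y$, with the threshold radius depending on $y$. I would resolve this by an exhaustion argument: for each $n$, Egorov's theorem produces a measurable subset $F_n \subseteq E$ with $\mu(E \setminus F_n) < 1/n$ on which the Frostman estimate holds uniformly for all $r$ past some $r_n$. Running the Borel--Cantelli bound on each $F_n$ shows $\mu(N_2 \cap F_n) = 0$, and letting $n \to \infty$ yields $\mu(N_2) = 0$ as required.
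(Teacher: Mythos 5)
Your part (1) is exactly the paper's argument (count the programs, sum the measures of the output balls, let $R\to\infty$), and it is correct.

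Part (2) has a genuine gap, located in the upper-bound half of your sandwich. You claim that since $\mathcal{H}^s(E)<\infty$, the set $E$ can be covered by $O(2^{sr})$ balls of radius $2^{-r}$, giving $K^A_r(y)\leq sr+O(\log r)$ for every $y\in E$. This conflates Hausdorff measure with box counting: the covers witnessing $\mathcal{H}^s(E)<\infty$ use balls of \emph{varying} radii at most $2^{-r}$, and finiteness of $\mathcal{H}^s$ places no bound on the number of radius-$2^{-r}$ balls needed (that is governed by the upper box dimension, which can be as large as $2$ for a compact set with $0<\mathcal{H}^s(E)<\infty$). Nor can the bound be rescued almost everywhere: there are such sets $E$ for which every subset of full $\mu$-measure has packing dimension $2$, so for a positive-measure set of $y$ one has $K^A_r(y)\geq (2-\ve)r$ along infinitely many scales $r$. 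At those scales your inequality $K^{A,x}_r(y)\geq sr-O(\log r)\geq K^A_r(y)-O(\log r)$ simply fails, because $K^A_r(y)$ is far above $sr$. The essential difficulty is that $N_2$ is defined by a scale-by-scale comparison of $K^{A,x}_r(y)$ with $K^A_r(y)$, and the latter oscillates; a sandwich around the single value $sr$ cannot capture this. (The lower-bound half of your argument --- Borel--Cantelli over $(A,x)$-programs combined with an Egorov-type uniformization of the Frostman estimate --- is fine, but it only proves $\dim^{A,x}(y)\geq s$ a.e., which is weaker than what is needed.)

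The paper's proof avoids any reference to $sr$. It applies the Case--Lutz generalization of Levin's coding theorem to the $A$-lower-semicomputable semimeasure $p_\mu(d)=\mu(Q_d)/\mu(E)$ on dyadic cubes, obtaining $2^{-K^A(d_Q)}\gtrsim 2^{-K^A(r)}p_\mu(d_Q)$, i.e.\ a pointwise, scale-by-scale link between $K^A$ and the $\mu$-mass of the cube. Kraft's inequality applied to $K^{A,x}$ then shows that the total $\mu$-mass of cubes at precision $r$ on which conditioning on $x$ saves more than $4\log r$ bits is $O(r^{-2}\mu(E))$ after the $K^A(r)$ loss, and summing over $r>R$ bounds $\mu(N_2)$ by $O(1/R)$. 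If you want to complete your proof, you should replace the sandwich by this direct comparison; the coding-theorem step is the missing idea.
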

\begin{proof}
We begin by proving (1). For every $r\in\N$, let 
\begin{center}
$U_r = \{B_{2^{-r}}(q) \mid q\in\Q_+ \text{ and } K^{A,x}(q) < r - 4\log r\}$.
\end{center}
Then $\vert U_r\vert \leq 2^{r-4\log r}$, since there are at most $2^{r-4\log r}$ programs of length less than $r-4\log r$. By definition, for every $R\in\N$, $\bigcup_{r> R} U_r$ is an open cover of $N_1$. Moreover,
\begin{align*}
\mathcal{H}^1\vert_{\mathcal{S}^1}(N_1) &\leq \sum\limits_{r > R} \sum\limits_{B\in U_r} 2^{-r}\\
&\leq \sum\limits_{r > R} 2^{r-4\log r}2^{-r}\\
&= \sum\limits_{r > R} \frac{1}{r^4},
\end{align*}
and the conclusion follows.

The proof of (2) is similar to (1), but requires Levin's coding theorem \cite{Levin73, Levin74}, a deep result of algorithmic information theory. Let $\mathcal{Q}_r$ be the set of all half-open dyadic cubes at precision $r$. Each dyadic cube $Q$ can be uniquely identified with dyadic rational. Let $p_\mu:\mathbb{D}^2 \rightarrow \R_+$ be the following discrete semimeasure defined on the dyadic rationals 
\begin{center}
$p_\mu (d) = \frac{\mu(Q_d)}{\mu(E)}$,
\end{center}
where $Q$ is the unique dyadic half-open box centered at $d$. Given oracle access to $A$,  $p_\mu$ is lower semicomputable. Case and Lutz \cite{CasLut15}, generalizing Levin's coding theorem, showed that, for every precision $r$ and every dyadic $d$ of precision $r$,
\begin{equation}\label{eq:levincoding}
2^{-K^A(d) + K^A(r) +c} > p_\mu(d),
\end{equation}
for some fixed constant $c$.

With this machinery in place, the rest of the proof is similar to the argument for (1). For every $r\in\N$, let 
\begin{center}
$W_r = \{Q\in \mathcal{Q}_r \mid K^{A,x}(d_Q) < K^A(d_Q) - 4\log r\}$.
\end{center}
Note that, by \cite{CasLut18}, $K^{A,x}_r(y) = K^{A,x}_r(d_Q) + 4\log r$ for every precision $r$, where $Q\in\mathcal{Q}_r$ is the dyadic cube at precision $r$ containing $y$ . Therefore, the set $\bigcup_{r>R} W_r$ is a cover of $N_2$. Applying Kraft's inequality and (\ref{eq:levincoding}) we deduce that
\begin{align}
1 &\geq \sum\limits_{r > R} \sum\limits_{Q\in W_r} 2^{-K^{A,x}(d_Q)}\tag*{[Kraft inequality]}\\
&\geq \sum\limits_{r > R} \sum\limits_{Q\in W_r} 2^{-K^{A}(d_Q) + 4\log r}\tag*{}\\
&> \sum\limits_{r > R} \sum\limits_{Q\in W_r} 2^{4\log r - K^A(r) - c}p_\mu(d_Q)\tag*{}\\
&\geq \sum\limits_{r > R} \sum\limits_{Q\in W_r} 2^{2\log r - c}p_\mu(d_Q)\tag*{}\\
&= \sum\limits_{r > R} \sum\limits_{Q\in W_r} 2^{2\log r - c} \frac{\mu(Q_d)}{\mu(E)}\tag*{}\\
&> \frac{R}{2^c\mu(E)}\sum\limits_{r > R} \sum\limits_{Q\in W_r} \mu(Q_d)\tag*{}\\
&\geq \frac{R}{2^c\mu(E)} \mu(N_2)\tag*{},
\end{align}
and the conclusion follows.
\end{proof}
\end{document}